% SIAM Article Template
% Xiaoxian commented "\theorempreskip", two lines in siamart171218.cls for windows system only
\documentclass[onefignum,onetabnum]{siamart171218}

%for an arxiv version later, just simply remove above  "review", and funding information in "ex_shared1"

% Information that is shared between the article and the supplement
% (title and author information, macros, packages, etc.) goes into
% ex_shared.tex. If there is no supplement, this file can be included
% directly.
\usepackage{amsmath}
\usepackage{enumerate}
\usepackage{mathrsfs}
\usepackage{color}
\usepackage{tikz}
\usepackage{mathtools}
\usepackage{breqn}
\usepackage{colortbl}
\usepackage{subfigure}

\def\Q{{\mathbb{Q}}}
\def\R{{\mathbb{R}}}

\newcommand{\defword}[1]{\textcolor{blue}{\em #1}}

\newtheorem{example}[theorem]{Example}

\newtheorem{question}[theorem]{Question}

% SIAM Shared Information Template
% This is information that is shared between the main document and any
% supplement. If no supplement is required, then this information can
% be included directly in the main document.

% Packages and macros go here
\usepackage{lipsum}
\usepackage{amsfonts}
\usepackage{graphicx}
\usepackage{epstopdf}
\usepackage{algorithmic}
\ifpdf
  \DeclareGraphicsExtensions{.eps,.pdf,.png,.jpg}
\else
  \DeclareGraphicsExtensions{.eps}
\fi

% Add a serial/Oxford comma by default.

% Used for creating new theorem and remark environments
\newsiamremark{remark}{Remark}
\newsiamremark{hypothesis}{Hypothesis}
\crefname{hypothesis}{Hypothesis}{Hypotheses}
\newsiamthm{claim}{Claim}

% Sets running headers as well as PDF title and authors
\headers{Hopf Bifurcations of Zero-One Reaction Networks}{Xiaoxian Tang,  and Kaizhang Wang}

% Title. If the supplement option is on, then "Supplementary Material"
% is automatically inserted before the title.
\title{Hopf Bifurcations of Reaction Networks\\ with Zero-One Stoichiometric Coefficients\thanks{Submitted to the editors DATE.
\funding{XT was funded by the NSFC12001029.}
}}

% Authors: full names plus addresses.
\author{Xiaoxian Tang\thanks{School of Mathematical Sciences, Beihang University, Beijing,  China
  (\email{xiaoxian@buaa.edu.cn}, \url{https://sites.google.com/site/rootclassification/}).}
%\and Chunjie Wang\thanks{School of Mathematical Sciences, Beihang University, Beijing,  China
 % (\email{@buaa.edu.cn}).}
\and Kaizhang Wang\thanks{School of Mathematical Sciences, Beihang University, Beijing,  China
  (\email{wangkz@buaa.edu.cn}).}
%\and Jane E. Smith\footnotemark[3]
}

\usepackage{amsopn}
\DeclareMathOperator{\diag}{diag}

%%% Local Variables:
%%% mode:latex
%%% TeX-master: "ex_article"
%%% End:

% Optional PDF information
\ifpdf
\hypersetup{
  pdftitle={Hopf Bifurcations of Reaction Networks with Zero-One Stoichiometric Coefficients},
  pdfauthor={Xiaoxian Tang, and Kaizhang Wang}
}
\fi

% The next statement enables references to information in the
% supplement. See the xr-hyperref package for details.

%\externaldocument{supplementary20200802xh}

% FundRef data to be entered by SIAM
%<funding-group>
%<award-group>
%<funding-source>
%<named-content content-type="funder-name">
%</named-content>
%<named-content content-type="funder-identifier">
%</named-content>
%</funding-source>
%<award-id> </award-id>
%</award-group>
%</funding-group>

\begin{document}

\maketitle
% REQUIRED
\begin{abstract}
For the reaction networks with zero-one stoichiometric coefficients (or simply zero-one networks), we prove that if a network admits a Hopf bifurcation, then the rank of the stoichiometric matrix is at least four. 
 As a corollary, we show that if a zero-one network admits a Hopf bifurcation, then it contains at least four species and five reactions.
As applications, we show that there exist rank-four subnetworks, which have the capacity for Hopf bifurcations/oscillations,  in two biologically significant networks: the MAPK cascades and the ERK network. We provide a computational tool for computing all four-species, five-reaction, zero-one networks that have the capacity for Hopf bifurcations. 
\end{abstract}

% REQUIRED
\begin{keywords}
biochemical reaction network, mass-action kinetics, oscillation, Hopf bifurcation
\end{keywords}

% REQUIRED
\begin{AMS}
 % 68Q25, 68R10, 68U05
 92C40, 92C45
\end{AMS}

\section{Introduction}\label{sec:intro}
%\textcolor{red}{bifurcation or bifurcations? oscillation or oscillations?}
Many dynamical systems that arise from biochemical reaction networks have the capacity for  Hopf bifurcations. The following question has attracted wide attention recently.
\begin{question}\label{ques:small}
When taken with mass-action kinetics, which small reaction networks admit a Hopf bifurcation?
\end{question}
This question is motivated by the oscillation problem of biochemical reaction networks \cite{Conradi2019, convex2015, osciMAPK2008, ploscptbio2007}, which is linked to the 
transduction in cellular systems. Deciding the existence 
of oscillations (periodic orbits) is as important as deciding the existence of multiple positive steady states \cite{bista2001,nc2020}. 
In practice, one approach for finding oscillations is to determine if the network admits a Hopf bifurcation \cite{biHB2022, otto2017, ERK2019}.  We say a network has the capacity for a Hopf bifurcation, if a reduced Jacobian matrix of the steady-state system
admits a pair of purely imaginary eigenvalues, while all other eigenvalues remain with nonzero real parts, and as some rate constant varies, 
a single pair of complex-conjugate eigenvalues crosses the imaginary axis (see Definition \ref{def:kxHB}) \cite{Conradi2019}.
There are a set of well-known algebraic criteria for deciding the existence of Hopf bifurcations based on the Hurwitz determinants, e.g., the Yang's criterion (see Lemma \ref{lemma:Yang}) \cite{Weber2000, Yang2002}. Algebraic/Symbolic methods are also developed for tackling the Hopf bifurcation/oscillation problems for many important biochemical reaction networks, e.g., 
the double phosphorylation cycle \cite{Conradi2019} and the ERK network \cite{ERK2019}. 
However, using such Hurwitz-based criteria  is   computationally challenging since the sizes of Hurwitz determinants are usually huge \cite{Conradi2019,convex2015,ERK2019}.
% \textcolor{red}{(The title of \cite{subnet2007} is ``subnetwork analysis reveals dynamic features of complex (bio)chemical networks" (published on PNAS, 2007). Maybe it is useful to illustrate why we study the small network.)}
%\textcolor{red}{(One can use \cite{biHB2022} to cite professor Banaji's newest work.)}
One approach for studying the oscillation problem of large networks is to look at related smaller networks since it is known that the capacity of a small network for oscillations can be inherited from an enlarged one 
\cite{inherit2018,inherit2022}.
Recently, in \cite{biHB2022}, Banaji and Boros address the question of which small bimolecular networks admit Hopf bifurcations,   and they fully classify three-species, four-reaction, bimolecular networks according to whether they admit or forbid Hopf bifurcations. 

In this paper, 
we study Question \ref{ques:small} 
for  
the reaction networks with zero-one stoichiometric coefficients (or simply zero-one networks). {\bf Our main theorem is that 
if a network with zero-one stoichiometric coefficients admits a Hopf bifurcation, then the rank of the stoichiometric matrix should be at least four (Theorem \ref{thm:rank<3})}.
A direct corollary of the  main result is that 
the smallest zero-one network that admits a Hopf bifurcation has four species and five reactions
(see Corollary \ref{coro:existHB4s5m} and 
Example \ref{exp:s4m5}).
We implement a computational tool
in {\tt Mathematica} 
for computing all four-species, five-reaction, zero-one networks that have the capacity for Hopf bifurcations (the number of such networks is about thirty thousand up to a natural equivalence), see
https://github.com/AspirinW/All4s5mCRNs.git.
As applications of the main theorem, we analyze two biologically significant networks: the MAPK cascades and the ERK network.
For these two networks, we demonstrate that there exist  rank-four subnetworks that admit Hopf bifurcations,  
and our computations show that the capacity for Hopf bifurcations/oscillations can be inherited from the original large networks. 

We remark that according to 
\cite{biHB2022},  it is known that the smallest rank
for a bimolecular network to admit Hopf bifurcations is three. 
However, it is nontrivial to show that for the zero-one networks, the smallest rank for admitting a Hopf bifurcation is four (our main theorem). 
In the proof of the main theorem,
we analyze the Jacobian matrix by the approach of extreme rays. This idea is inspired by the convex parameters introduced in \cite{Conradi2019}.
By the fact that the positive steady states of the dynamic systems can be represented by nonnegative combinations of extreme rays, we first transform the Jacobian matrix by changing the variables. 
%This transformation is convenient when determining whether the algebraic conditions for admitting a Hopf bifurcation are fulfilled or not.
For a zero-one network, we prove that the transformed Jacobian matrix  has the following useful properties.
The diagonal elements of the transformed Jacobian matrix are either a zero polynomial or a sum of monomials with negative coefficients (Corollary \ref{coro:jkkis-}).
The second order principal minors of the transformed Jacobian matrix and a set of useful  polynomials $d_{i,j,k}$ (see the definition in \eqref{def:dijk}) related to the third order principal minors  are either zero polynomials or sums of monomials with positive coefficients (Corollary \ref{lemma:positivePM}).
By the above properties, we study the relationship between the first Hurwitz determinant and the second order principal minors of the transformed Jacobian matrix, and the relationship between the second Hurwitz determinant and the polynomial $d_{i,j,k}$ (Lemma \ref{lemma:positivedeth}).
Then, we prove the key lemma for the main theory: 
 a reduced Jacobian matrix of the steady-state system does not 
admit a pair of purely imaginary eigenvalues, 
when the rank of the stoichiometric matrix is two or three (Lemma \ref{lemma:h1>0h2>0}).

The rest of this paper is organized as follows.
In Section \ref{sec:back}, we review the basic definitions and notions for the mass-action reaction networks, the extreme rays, and the Hopf bifurcations.
In Section \ref{sec:result}, we formally present the main result: Theorem \ref{thm:rank<3}.
We also present   applications for  illustrating the existence of rank-four subnetworks
that admit Hopf bifurcations from two biologically significant networks. 
In Section \ref{sec:trans}, we transform the Jacobian matrix 
by the method of extreme rays. In Section \ref{sec:structure},
we study the structures of the transformed Jacobian matrix. 
In Section \ref{sec:proofmain}, based on the lemmas proved in the previous two sections,  we prove  Theorem \ref{thm:rank<3}.
Finally, we end this paper with some future directions inspired by Theorem \ref{thm:rank<3}, see Section \ref{sec:dis}.

\section{Background}\label{sec:back}
In Section \ref{sec:network}, we briefly recall the standard notions and definitions of reaction networks, see \cite{CFMW, tx2020} for more details.
In Section \ref{sec:rays}, we review the flux cones and  the extreme rays.
In Section \ref{sec:hopf}, we recall the definitions of Hopf bifurcations, and the criteria based on the Hurwitz matrices. 

\subsection{Chemical reaction networks}\label{sec:network}

A \defword{reaction network} $G$  (or \defword{network} for short) consists of a set of $s$ species $\{X_1, \ldots, X_s\}$ and a set of $m$ reactions:
\begin{align}\label{eq:network}
\alpha_{1j}X_1 +
 \dots +
\alpha_{sj}X_s
~ \xrightarrow{\kappa_j} ~
%~ \xrightarrow{} ~
\beta_{1j}X_1 +
 \dots +
\beta_{sj}X_s,
 \;
    {\rm for}~
	j=1, \ldots, m,
\end{align}
where all $\alpha_{i j}$ and $\beta_{i j}$ are non-negative integers, and $\left(\alpha_{1 j}, \ldots, \alpha_{s j}\right) \neq\left(\beta_{1 j}, \ldots, \beta_{s j}\right)$.
We call all $\alpha_{ij}$ and $\beta_{ij}$ the \defword{stoichiometric coefficients}.
We call each $\kappa_j \in \mathbb R_{>0}$ a \defword{rate constant}. 
We call the $s\times m$ matrix with $(i, j)$-entry equal to $\alpha_{ij}$ the \defword{reactant matrix} of $G$, denoted by ${\mathcal Y}$.
We call the $s\times m$ matrix with $(i, j)$-entry equal to $\beta_{ij}-\alpha_{ij}$ the \defword{stoichiometric matrix} of $G$, denoted by ${\mathcal N}$.
%Let $d=s-{\rm rank}(N)$. %, and
We call the image of ${\mathcal N}$ the \defword{stoichiometric subspace}, denoted by $S$.
%let $\mathrm{im}(N)^{\perp}$ denote the orthogonal complement of the image of $N$.
%$S^{\perp}$.
%$\mathrm{im}(N)^{\perp}$.

We denote by $x_1, \ldots, x_s$ the concentrations of the species $X_1,\ldots, X_s$, respectively.
Under the assumption of mass-action kinetics, we describe how these concentrations change in time by the following system of ODEs:
\begin{align}\label{sys:f}
\dot{x}~=~f(\kappa,x)~:=~{\mathcal N}v(\kappa,x)
% \begin{pmatrix}
% \kappa_1 \, x_1^{\alpha_{11}}
% 		x_2^{\alpha_{21}}
% 		\cdots x_s^{\alpha_{s1}} \\
% \kappa_2 \, x_1^{\alpha_{12}}
% 		x_2^{\alpha_{22}}
% 		\cdots x_s^{\alpha_{s2}} \\
% 		\vdots \\
% \kappa_m \, x_1^{\alpha_{1m}}
% 		x_2^{\alpha_{2m}}
% 		\cdots x_s^{\alpha_{sm}} \\
% \end{pmatrix}
~,
\end{align}
where 
\begin{align}\label{eq:v1...vm}
v(\kappa,x)=(v_1(\kappa,x),\ldots,v_m(\kappa,x))^\top,
\end{align}
and 
\begin{align}\label{eq:vj}
v_j(\kappa,x)=\kappa_{j} \prod_{i=1}^{s} x_{i}^{\alpha_{i j}}.
\end{align}
% For any vectors $a,b \in \mathbb R^{n}$, we use the notation $a^{b}:=\prod_{i=1}^{n} a_{i}^{b_{i}}$.
% We define
% \begin{align}\label{eq:phix}
% \phi(x):=\left( x^{\psi(1)},\ldots,x^{y_m} \right)^\top,
% \end{align}
% where $y_i$ is the $i$-th column of ${\mathcal Y}$. 
% Here we note that $\phi(x) \in \R^m_{>0}$ for any $x \in \R^s_{>0}$.
% Then we can rewrite $v(\kappa,x)$ as
% \begin{align}\label{eq:vkx}
% v(\kappa,x) = \diag(\kappa)\phi(x),
% \end{align}
% where $\diag(\kappa)$ is a $m \times m$ diagonal matrix with the $\kappa_i$ on the diagonal.
% By considering the rate constants as an unknown vector $\kappa:=(\kappa_1, \kappa_2, \dots, \kappa_m)$, we have polynomials $f_{i}(\kappa;x) \in \mathbb Q[\kappa, x]$, for $i=1,2, \dots, s$.

For any $\kappa^*\in \mathbb{R}_{> 0}^m$, a \defword{steady state} %(or, simply \defword{steady state})\footnote{Usually, a steady state is defined as a non-negative vector $x\in {\mathbb R}_{\geq 0}^s$.
%In our setting, we do not
%consider boundary steady states (i.e., steady states with zero coordinates). So all steady states in our context are positive.}
of \eqref{sys:f} is a concentration vector $x^* \in \mathbb{R}_{\geq 0}^s$ such that $f(\kappa^*,x^*)=0$.
If all coordinates of a steady state $x^*$ are strictly positive, i.e., $x^*\in \mathbb{R}_{> 0}^s$, then we call $x^*$ a \defword{positive steady state}.
Denote by ${\rm Jac}_f(\kappa,x)$ the Jacobian matrix of $f(\kappa,x)$ with respect to $x$.
A steady state $x^*$ is \defword{nondegenerate} if ${\rm im}\left( {\rm Jac}_f(\kappa^*,x^*)|_{S} \right) = S$.
% A steady state is \defword{exponentially stable} (or, simply \defword{stable} in this paper) if it is nondegenerate, and all nonzero eigenvalues of ${\rm Jac}_f(\kappa,x^*)$ has negative real parts.
% Let $d:=s-{\rm rank}({\mathcal N})$. A \defword{conservation-law matrix} of $G$, denoted by $W$, is any row-reduced $d\times s$-matrix whose rows form a basis of $S^{\perp}$ (note here, ${\rm rank}(W)=d$). Our system~\eqref{sys:f} satisfies $W \dot x =0$,  and both the positive orthant $\mathbb R_{>0}^s$ and its closure $\mathbb R_{\ge 0}$ are forward-invariant for the dynamics. Thus,
% any trajectory $x(t)$ beginning at a non-negative vector $x(0)=x^0 \in
% \mathbb{R}^s_{> 0}$ remains, for all positive time,
%  in the following \defword{stoichiometric compatibility class} with respect to the  \defword{total-constant vector} $c:= W x^0 \in {\mathbb R}^d$:
% \begin{align*}%\label{eq:pc}
% {\mathcal P}_c~:=~ \{x\in {\mathbb R}_{\geq 0}^s \mid Wx=c\}.~
% \end{align*}
% That means ${\mathcal P}_c$ is forward-invariant with
% respect to the dynamics~\eqref{sys:f}.

\subsection{Flux cones and extreme rays}\label{sec:rays}
Given a matrix $M \in \R^{s \times m}$, the \defword{flux cone} of $M$ is defined as
\begin{align*}
%\label{eq:fluxcone}
F(M) := \{r \in {\mathbb R}_{\geq 0}^m:M r = \bf0\},
\end{align*}
where we denote by $\bf0$ the vector whose coordinates are all zeros.
For any $r \in F(M)$, we call $r$ a \defword{ray}, if $r\neq \bf0$, and for any $\lambda > 0$, $\lambda r \in F(M)$. 
For any two rays $r^{(1)},r^{(2)} \in F(M)$, we say $r^{(1)}$ and $r^{(2)}$ are \defword{equivalent} if there exists $\lambda>0$ such that $r^{(1)}=\lambda r^{(2)}$.
We call a ray $r \in F(M)$ an \defword{extreme ray}, if for any two rays $r^{(1)},r^{(2)} \in F(M)$ such that  $r^{(i)}$ ($i\in \{1,2\}$) is nonequivalent with $r$, we have $r \notin \{\lambda r^{(1)} + (1-\lambda)r^{(2)}:0<\lambda<1\}$.
For any flux cone, if equivalent rays are not considered, then the number of extreme rays is finite and the choice of extreme rays is unique.
Assume that $R^{(1)},\ldots,R^{(t)}$ are the extreme rays of $F(M)$, where $t$ denotes the number of extreme rays. Then, it is well-known that any $r \in F(M)$ can be written as a non-negative combination of extreme rays
\begin{align*}
r=\sum_{i=1}^{t} \lambda_{i} R^{(i)}, \;\text{where}\; \lambda_{i} \geq 0\; \text{for any}\; i\in \{1, \ldots, t\}.
\end{align*}

% \begin{remark}
% \textcolor{red}{The extreme rays $R^{(1)},\ldots,R^{(t)}$ of a flux cone $F(M)$ might be linearly dependent.} 
% \end{remark}

%  Let $d:=s-{\rm rank}({\mathcal N})$. A \defword{conservation-law matrix} of $G$, denoted by $W$, is any row-reduced $d\times s$-matrix whose rows form a basis of $S^{\perp}$ (note here, ${\rm rank}(W)=d$). Our system~\eqref{sys:f} satisfies $W \dot x =0$,  and both the positive orthant $\mathbb R_{>0}^s$ and its closure $\mathbb R_{\ge 0}$ are forward-invariant for the dynamics. Thus,
% any trajectory $x(t)$ beginning at a non-negative vector $x(0)=x^0 \in
% \mathbb{R}^s_{> 0}$ remains, for all positive time,
%  in the following \defword{stoichiometric compatibility class} with respect to the  \defword{total-constant vector} $c:= W x^0 \in {\mathbb R}^d$:
% \begin{align*}%\label{eq:pc}
% {\mathcal P}_c~:=~ \{x\in {\mathbb R}_{\geq 0}^s \mid Wx=c\}.~
% \end{align*}
% That means ${\mathcal P}_c$ is forward-invariant with
% respect to the dynamics~\eqref{sys:f}.

\subsection{Hopf bifurcations}\label{sec:hopf}
In this section, we first recall the classical definition of Hopf bifurcation, and then, we clarify how to apply the classical definition  to the reaction networks. We consider a system of ODEs  parameterized by a single parameter $\mu \in \R$: 
\begin{align}\label{sys:g}
\dot{x}=g(\mu,x),
\end{align}
where $x \in \mathbb{R}^n$, and $g(\mu,x)$ is a smooth function in $(\mu, x)$.
Denoted by ${\rm Jac}_g(\mu,x)$ the Jacobian matrix of $g(\mu,x)$ with respect to $x$.
For some fixed value $\mu_0  \in \R$, let $x^* \in \mathbb{R}_{\geq 0}^n$ be a steady state of the system \eqref{sys:g}, i.e., $g(\mu_0,x^*)=0$.
 If $\operatorname{det}({\rm Jac}_g(\mu_0,x^*)) \neq 0$, then by the Implicit Function Theorem,  there exists a smooth curve of steady states $x(\mu)$ around   $\mu_0$ (i.e., $g(\mu,x(\mu))=0$ for all $\mu$ close enough to $\mu_0$) with $x(\mu_0)=x^*$.
We say \defword{the system \eqref{sys:g} has a Hopf bifurcation} (respectively, \defword{simple Hopf bifurcation}) \defword{at $(\mu_0,x^*)$ with respect to $\mu$}, if ${\rm Jac}_g(\mu_0,x^*)$ has a single pair of purely imaginary eigenvalues, while all other eigenvalues remain with nonzero (respectively, negative) real parts, and as $\mu$ varies, a single pair of complex-conjugate eigenvalues of ${\rm Jac}_g(\mu,x(\mu))$ crosses the imaginary axis.
% as $\mu$ varies, a single pair of complex-conjugate eigenvalues of ${\rm Jac}_g(\mu,x(\mu))$ crosses the imaginary axis \textcolor{red}{and becomes a pair of purely imaginary eigenvalues when $\mu=\mu_0$}, while all other eigenvalues remain with nonzero (respectively, negative) real parts.
%We say \defword{the system \eqref{sys:g} admits a Hopf bifurcation} (respectively, \defword{simple Hopf bifurcation}) if there exist $\mu_0 \in \R$ and a corresponding steady state $x^* \in \mathbb{R}_{\geq 0}^n$ with $\operatorname{det}({\rm Jac}_g(\mu_0,x^*)) \neq 0$ such that the system \eqref{sys:g} admits a Hopf bifurcation (respectively, simple Hopf bifurcation) at $(\mu_0,x^*)$ with respect to $\mu$.

%\begin{remark}
%A necessary condition for the system \eqref{sys:g} to admit a Hopf bifurcation is to admit a steady state $x^*$ corresponding to $\mu_0$ such that $\operatorname{det}({\rm Jac}_g(\mu_0,x^*)) \neq 0$. 
% \textcolor{red}{In the rest of our paper, we assume the condition is satisfied before further discussing the Hopf bifurcations.}
%\end{remark}

%We give a criterion to preclude Hopf bifurcations in terms of Hurwitz-matrix determinants (see Lemma \ref{prop:preHBwithmu}). 
We review two useful results: a condition for admitting a pair of purely imaginary roots (see Lemma \ref{prop:polyimroots}) and Yang's criterion for detecting  simple Hopf bifurcations (see Lemma \ref{lemma:Yang}).
We start by introducing the Hurwitz matrices.

\begin{definition}\label{def:hurwitz}
Let $p(z)=a_{0} z^{n}+a_{1} z^{n-1}+\cdots+a_{n}$ be a univariate   polynomial in $\Q[z]$ with $a_{0} \neq 0$. For any positive integer $i\;(i\leq n)$, the $i$-th \defword{Hurwitz matrix} of $p(z)$ is the following $i \times i$ matrix
\begin{align*}
%\label{eq:hurwitz}
H_{i}=\left(\begin{array}{ccccccc}
a_{1} & a_{0} & 0 & 0 & 0 & \cdots & 0 \\
a_{3} & a_{2} & a_{1} & a_{0} & 0 & \cdots & 0 \\
\vdots & \vdots & \vdots & \vdots & \vdots & \ddots & \vdots \\
a_{2 i-1} & a_{2 i-2} & a_{2 i-3} & a_{2 i-4} & a_{2 i-5} & \cdots & a_{i}
\end{array}\right),
\end{align*}
where the $(k, \ell)$-th entry is $a_{2 k-\ell}$ for $0 \leq 2 k-\ell \leq n$, and 0 otherwise. 
\end{definition}

For any square matrix $M$, we denote by $\operatorname{det}(M)$ the determinant of $M$.
We denote by $I$ the identity matrix. In the following two lemmas, we make a convention that 
$\operatorname{det}(H_{n-k}):=1$ whenever $n-k\leq 0$.

\begin{lemma}[Theorem 3.5 in \cite{Weber2000}]
\label{prop:polyimroots}
Let $p(z)=a_{0} z^{n}+a_{1} z^{n-1}+\cdots+a_{n}$ be a univariate  polynomial in $\Q [z]$ with $a_{0}>0$. 
For any positive integer $i\;(i\leq n)$, let $H_i$ be the $i$-th Hurwitz matrix of $p(z)$. Then, $p(z)$ has a pair of purely imaginary roots and all other roots with nonzero real parts if and only if
\begin{align*}
\operatorname{det}(H_{n-1})=0, \;\;\;\text{and}\;\;\; a_n\operatorname{det}(H_{n-2})\operatorname{det}(H_{n-3})>0.
\end{align*}
\end{lemma}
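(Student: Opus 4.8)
The plan is to pass from Hurwitz determinants to the roots of $p$ via the even--odd decomposition. Write $p(z)=f(z^2)+z\,h(z^2)$ with $f,h\in\Q[w]$; then $i\omega$ with $\omega\in\R\setminus\{0\}$ is a root of $p$ exactly when $w=-\omega^2$ is a common root of $f$ and $h$, and $0$ is a root of $p$ exactly when $a_n=p(0)=f(0)=0$. First I would dispose of the equation $\det(H_{n-1})=0$. By Orlando's formula, if $\lambda_1,\dots,\lambda_n$ are the roots of $p$ then
\begin{align*}
\det(H_{n-1})=(-1)^{n(n-1)/2}\,a_0^{n-1}\prod_{1\le i<j\le n}(\lambda_i+\lambda_j),
\end{align*}
so $\det(H_{n-1})=0$ if and only if $p$ has two roots with sum $0$; as $p$ is real, such a pair is a conjugate pair $\pm i\omega$ on the imaginary axis, or a pair $\pm\mu$ of real roots, or a double root at $0$, or part of a quadruple $\{\pm a\pm bi\}$. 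Equivalently, $\det(H_{n-1})=0$ iff $\gcd(f,h)$ is nonconstant; this rests on the classical identity $\det(H_{n-1})=\pm c_0\,\mathrm{Res}_w(f,h)$ with $c_0\in\Q_{>0}$, which I would verify by elementary row and column operations on the Hurwitz matrix.

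Next I would use the second condition to isolate ``one purely imaginary pair and nothing else on the axis''. The tool is the classical refinement that the leading Hurwitz determinants $\det(H_{n-1}),\det(H_{n-2}),\dots$ are, up to explicit nonzero rational factors, the principal subresultant coefficients of $(f,h)$, so that the index of the first nonvanishing one equals $\deg\gcd(f,h)$. Granting this, $\det(H_{n-1})=0$ and $\det(H_{n-2})\ne 0$ force $\gcd(f,h)=w-c$ for a single $c\in\Q$; hence $p=(z^2-c)\,q(z)$ with $q\in\Q[z]$ of degree $n-2$, $p$ has exactly one pair of roots with sum $0$ (namely $\pm\sqrt{c}$), and no other root lies on the imaginary axis. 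Writing $\mathsf H_i$ for the $i$-th Hurwitz matrix of $q$, I would then establish the reduction identities $\det(H_{n-2})=\det(\mathsf H_{n-2})$ and $\det(H_{n-3})=\det(\mathsf H_{n-3})$ (by manipulating the Hurwitz matrix of $p=(z^2-c)q$, or equivalently by comparing the first Routh-array steps). Since $\det(\mathsf H_{n-2})$ is the top Hurwitz determinant of $q$, it equals $q(0)\,\det(\mathsf H_{n-3})$; together with $a_n=p(0)=-c\,q(0)$ this gives the key identity
\begin{align*}
a_n\det(H_{n-2})\det(H_{n-3})\;=\;-c\,\bigl(\det(H_{n-2})\bigr)^2 .
\end{align*}
Hence, under $\det(H_{n-1})=0$, the inequality $a_n\det(H_{n-2})\det(H_{n-3})>0$ holds iff $\det(H_{n-2})\ne 0$ and $c<0$, i.e.\ iff $p$ has exactly one pair of roots with sum $0$, that pair is purely imaginary ($\pm i\sqrt{-c}$), and every other root has nonzero real part --- which is the asserted equivalence. (In particular, $\det(H_{n-2})\ne 0$ already forces $\det(H_{n-3})\ne 0$ and $q(0)\ne 0$, and $a_n\ne 0$ then forces $c\ne 0$, so a root at the origin is automatically excluded.)

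The main obstacle is the technical backbone invoked above: (i) identifying the leading Hurwitz determinants of $p$ with the principal subresultant coefficients of $(f,h)$, which requires careful index bookkeeping between the degree $n$ and the two halves of sizes roughly $n/2$, with $n$ even and $n$ odd treated separately and all signs and rational constants tracked; and (ii) the reduction identities $\det(H_k)=\det(\mathsf H_k)$ for $k=n-2,n-3$ under $p=(z^2-c)q$ (equivalently, the displayed sign identity). Once (i) and (ii) are in hand, the conclusion follows by the routine assembly above together with Orlando's formula for the first equation. An alternative, more analytic route would replace (i)--(ii) by a perturbation argument: deform $p$ to a polynomial with no roots on the imaginary axis and all Hurwitz determinants nonzero, apply the Routh--Hurwitz count of right half-plane roots via sign changes of the Hurwitz sequence, and let the deformation tend to $0$ from either side, tracking how that sign-change count responds to the sign of $\det(H_{n-1})$ flipping through $0$.
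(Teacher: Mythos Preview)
The paper does not prove this lemma; it is quoted verbatim as Theorem~3.5 of \cite{Weber2000} and used as a black box (in Lemma~\ref{lemma:preHB} and in the proof of Theorem~\ref{thm:rank<3}). There is therefore no proof in the paper to compare your proposal against.

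As for your argument itself: the overall architecture is the standard one---Orlando's formula to interpret $\det(H_{n-1})=0$, the Hurwitz/subresultant dictionary to read off $\deg\gcd(f,h)$ from the first nonvanishing $\det(H_k)$, and then a factorization $p=(z^2-c)q$ together with the relation $\det(H_m)=a_m\det(H_{m-1})$ for the top determinant of $q$---and the algebra you display, culminating in $a_n\det(H_{n-2})\det(H_{n-3})=-c\bigl(\det(H_{n-2})\bigr)^2$, is consistent. You have correctly identified where the real work lies: the identification of the $\det(H_k)$ with principal subresultant coefficients of $(f,h)$ (with all signs and constants), and the reduction identities $\det(H_{k})=\det(\mathsf H_{k})$ for $k=n-2,n-3$ under $p=(z^2-c)q$. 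Neither of these is routine bookkeeping; in particular the second one is sensitive to the exact form of the Hurwitz array and to the parity of $n$, and as stated (equality with no extra factor) it would need a careful verification. You should also check separately that the argument goes through under the paper's convention $\det(H_{n-k}):=1$ for $n-k\le 0$, i.e.\ for $n=2,3$, since your factorization step tacitly assumes $q$ has degree at least $1$ or $2$.
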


\begin{lemma}[Yang’s criterion \cite{Yang2002}]
\label{lemma:Yang}
%Assume the settings in Lemma \ref{prop:preHBwithmu}. 
Consider the system \eqref{sys:g}. 
Denoted by ${\rm Jac}_g(\mu,x)$ the Jacobian matrix of $g(\mu,x)$ with respect to $x$. 
Given $\mu_0 \in \R$ and a corresponding steady state $x^* \in \mathbb{R}_{\geq 0}^n$ with $\operatorname{det}({\rm Jac}_g(\mu_0,x^*)) \neq 0$, let $x(\mu)$ be a curve of steady states around $\mu_0$ with $x(\mu_0)=x^*$.
We define 
\begin{align*}
p(\mu;z):=\operatorname{det}(z I-{\rm Jac}_g(\mu,x(\mu)))=z^{n}+a_{1}(\mu) z^{n-1}+\cdots+a_{n}(\mu).
\end{align*}
For any positive integer $i\;(i\leq n)$, let $H_i(\mu)$ be the $i$-th Hurwitz matrix of $p(\mu;z)$.
Then, the system \eqref{sys:g} has a simple Hopf bifurcation at $(\mu_0,x^*)$ with respect to $\mu$ if and only if the following conditions hold:
\begin{enumerate}[(i)]
\item $\operatorname{det}(H_{n-1}(\mu_0))=0$ and $a_n(\mu_0)>0$,
\item $\operatorname{det}(H_{1}(\mu_0))>0,\ldots,\operatorname{det}(H_{n-2}(\mu_0))>0$, and
\item $\left.\frac{d\left(\operatorname{det} (H_{n-1}(\mu))\right)}{d \mu}\right|_{\mu=\mu_{0}} \neq 0$.
\end{enumerate}
\end{lemma}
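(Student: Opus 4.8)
The plan is to read conditions (i)--(iii) off the definition of a \emph{simple} Hopf bifurcation by analysing the monic characteristic polynomial $p(\mu;z)$, so that $a_0=1>0$ and the conventions of Lemma~\ref{prop:polyimroots} apply, and whose roots are exactly the eigenvalues $\lambda_1(\mu),\dots,\lambda_n(\mu)$ of ${\rm Jac}_g(\mu,x(\mu))$; note that $\det({\rm Jac}_g(\mu_0,x^*))\neq 0$ forces $a_n(\mu_0)\neq 0$, i.e. $0$ is not a root at $\mu_0$. First I would treat the requirement ``a single pair of purely imaginary eigenvalues with all other eigenvalues off the imaginary axis'': by Lemma~\ref{prop:polyimroots} applied at $\mu_0$ this is equivalent to $\det(H_{n-1}(\mu_0))=0$ together with $a_n(\mu_0)\det(H_{n-2}(\mu_0))\det(H_{n-3}(\mu_0))>0$. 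Since (i)--(ii) say $a_n(\mu_0)>0$ and $\det(H_1(\mu_0)),\dots,\det(H_{n-2}(\mu_0))>0$, they clearly imply this sign condition, so what remains is to show that (ii) is exactly what upgrades ``off the imaginary axis'' to ``in the open left half-plane.''

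Next I would establish this stability refinement. Factor $p(\mu_0;z)=(z^2+\omega_0^2)\,q(z)$ with $\omega_0>0$ and $q$ monic of degree $n-2$; the $n-2$ non-imaginary eigenvalues lie in the open left half-plane iff $q$ is Hurwitz, i.e. iff all leading Hurwitz minors of $q$ are positive. For ``(i)+(ii)$\Rightarrow$stability'' I would argue by perturbation: if $\det(H_1(\mu_0)),\dots,\det(H_{n-2}(\mu_0))>0$, $\det(H_{n-1}(\mu_0))=0$ and $a_n(\mu_0)>0$, then a small perturbation of the coefficients making $\det(H_{n-1})$ positive makes all $n$ leading Hurwitz minors of the perturbed polynomial positive (using $\det(H_n)=a_n\det(H_{n-1})$), hence that polynomial is Hurwitz by the classical Hurwitz criterion; letting the perturbation tend to $0$, all roots of $p(\mu_0;z)$ lie in the closed left half-plane, and combined with the previous step (exactly one conjugate pair on the imaginary axis, and it is $\pm i\omega_0$, not $0$) this forces the other $n-2$ roots strictly into the open left half-plane. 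For the converse I would lift Hurwitz stability of $q$ back to positivity of $\det(H_1(\mu_0)),\dots,\det(H_{n-2}(\mu_0))$ and to $a_n(\mu_0)>0$, using the relation between the Hurwitz matrix of $p=(z^2+\omega_0^2)q$ and that of $q$, or equivalently Orlando's formula $\det(H_{n-1})=(-1)^{n(n-1)/2}\prod_{1\le i<j\le n}(\lambda_i+\lambda_j)$ together with continuity of the Hurwitz minors. \emph{This Hurwitz-minor bookkeeping --- the precise equivalence between condition (ii) and Hurwitz stability of the reduced polynomial $q$ --- is the step I expect to be the main obstacle.}

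For the transversality statement (iii), I would use that $\pm i\omega_0$ is a \emph{single} pair disjoint from the other $n-2$ eigenvalues, so the Implicit Function Theorem yields a smooth branch $\lambda(\mu)=\sigma(\mu)+i\omega(\mu)$ of eigenvalues of ${\rm Jac}_g(\mu,x(\mu))$ near $\mu_0$ with $\sigma(\mu_0)=0$ and $\omega(\mu_0)=\omega_0$; this conjugate pair crosses the imaginary axis as $\mu$ varies iff $\sigma'(\mu_0)\neq 0$. By Orlando's formula, $\det(H_{n-1}(\mu))=(-1)^{n(n-1)/2}(2\sigma(\mu))\,C(\mu)$, where $C(\mu)$ is the product of all the remaining pairwise sums $\lambda_i(\mu)+\lambda_j(\mu)$ and $C(\mu_0)\neq 0$ because no other pair of roots of $p(\mu_0;z)$ sums to zero. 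Hence $\frac{d}{d\mu}\det(H_{n-1}(\mu))|_{\mu=\mu_0}=2\sigma'(\mu_0)C(\mu_0)$, so condition (iii) is equivalent to $\sigma'(\mu_0)\neq 0$, i.e. to the crossing requirement.

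Finally I would assemble the pieces. By definition, a simple Hopf bifurcation at $(\mu_0,x^*)$ with respect to $\mu$ is the conjunction of (a) a single purely imaginary pair of eigenvalues with all other eigenvalues off the imaginary axis, (b) those others all lying in the open left half-plane, and (c) transversal crossing. The first step translates (a) into $\det(H_{n-1}(\mu_0))=0$ plus a sign condition; the second step shows that, given (a), condition (b) holds iff $a_n(\mu_0)>0$ and $\det(H_1(\mu_0)),\dots,\det(H_{n-2}(\mu_0))>0$, which moreover subsume that sign condition; and the third step translates (c) into (iii). Chaining these equivalences yields exactly conditions (i)--(iii), recovering the argument of \cite{Yang2002}.
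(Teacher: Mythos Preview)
The paper does not prove this lemma at all: it is stated as Yang's criterion with a citation to \cite{Yang2002} and used as a black box. So there is no ``paper's own proof'' to compare against; you are supplying an argument where the authors simply quote the literature.

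Your outline is the standard one and is essentially how Yang's result is proved. The reduction of the spectral configuration at $\mu_0$ to Lemma~\ref{prop:polyimroots}, the use of Orlando's formula $\det(H_{n-1})=(-1)^{n(n-1)/2}\prod_{i<j}(\lambda_i+\lambda_j)$ to extract the factor $2\sigma(\mu)$ and translate transversality into (iii), and the perturbation/closure argument for ``(i)+(ii) $\Rightarrow$ the remaining $n-2$ roots lie strictly in the open left half-plane'' are all correct in spirit. One small point worth making explicit in step~3: you need that $C(\mu_0)\neq 0$, which follows because every other pairwise sum $\lambda_i(\mu_0)+\lambda_j(\mu_0)$ has strictly negative real part once the ``simple'' hypothesis (or, in the other direction, the already-established consequence of (i)+(ii)) is in force; you say this, but it is the place where the ``simple'' in \emph{simple} Hopf is actually used for (iii).

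The step you flag as the main obstacle --- showing that, given $\det(H_{n-1}(\mu_0))=0$, positivity of $\det(H_1),\dots,\det(H_{n-2})$ and $a_n$ is \emph{equivalent} to Hurwitz stability of the degree-$(n-2)$ cofactor $q$ --- is exactly the content of Yang's generalized Routh--Hurwitz analysis, and your sketch there is the thinnest part. The forward direction via perturbation is fine. For the converse you gesture at two routes (relating the Hurwitz matrix of $(z^2+\omega_0^2)q$ to that of $q$, or Orlando plus continuity), but neither is carried out, and a naive continuity argument only gives nonstrict inequalities $\det(H_i)\ge 0$; you still need to rule out any intermediate $\det(H_i)$ vanishing. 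If you want a self-contained proof rather than a citation, this is where the real work lies; otherwise, citing \cite{Yang2002} as the paper does is entirely appropriate.
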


Let $G$ be a network with a stoichiometric matrix $\mathcal{N} \in \R^{s \times m}$.
Denote the system of ODEs by $\dot{x}=f(\kappa,x)$ as in \eqref{sys:f}. Let ${\rm Jac}_f(\kappa,x)$ be the Jacobian matrix of $f(\kappa,x)$ with respect to $x$.
Let $r$ be the rank of $\mathcal{N}$.
It is remarkable that when $r<s$, for any $\kappa^*\in {\mathbb R}_{> 0}^m$ and for any $x^*\in {\mathbb R}_{>0}^s$, 
${\rm Jac}_f(\kappa^*, x^*)$ is singular. So, for such a network $G$, we can not directly apply the above definitions or Lemma \ref{lemma:Yang} to ${\rm Jac}_f(\kappa,x)$ for precluding or detecting Hopf bifurcations. Below, we introduce the reduced Jacobian matrix, and we clarify what we mean by ``a network admitting Hopf bifurcations" (see Definition \ref{def:kxHB}).

Let $A \in \R^{s \times r}$ be a matrix whose columns are a basis of stoichiometric subspace $S$.
Since the columns of $A$ are linearly independent, there exists a matrix $B \in \R^{r \times s}$ such that $B A = I$.
We define a \defword{reduced Jacobian matrix} of $f(\kappa,x)$ with respect to $x$
\begin{align}\label{def:reduceJac}
{\rm Jac}^{\text{red}}_f(\kappa,x) := B {\rm Jac}_f(\kappa,x) A.
\end{align}
\begin{remark}
Note that the different choices of $A$ and $B$ may result in different reduced Jacobian matrices. However, any two reduced Jacobian matrices are similar matrices. So,  they share the same characteristic polynomial and   eigenvalues.
\end{remark}
\begin{remark}\cite{Carsten2019}
For any $\kappa^*\in {\mathbb R}_{> 0}^m$, suppose $x^*\in {\mathbb R}_{\geq 0}^s$ is a steady state. Then, the steady state $x^*$ is nondegenerate if and only if $\operatorname{det}({\rm Jac}^{\text{red}}_f(\kappa^*,x^*))\neq 0$.
\end{remark}

\begin{definition}\label{def:kxHB}
For any network $G$ \eqref{eq:network}, denote  by  $\dot{x}~=~f(\kappa,x)$ the system of ODEs, see  \eqref{sys:f}. 
Let 
${\rm Jac}^{\text{red}}_f(\kappa,x)$ be a  reduced Jacobian matrix of $f(\kappa,x)$ with respect to $x$.
Given a vector of rate constants $\kappa^* \in \R^m_{>0}$, suppose $x^* \in \mathbb{R}_{> 0}^s$ is a corresponding nondegenerate positive steady state.
%\textcolor{red}{(Maybe we could define a system with one variable $\kappa_i$ here.)}
By the Implicit Function Theorem, for any $\kappa_i$ ($1 \leq i \leq m$), there exists a curve $\kappa(\kappa_i)$ around $\kappa^*_i$ with $\kappa(\kappa_i^*)=\kappa^*$ and a corresponding curve of positive steady states  $x(\kappa_i)$ around $\kappa^*_i$ with $x(\kappa^*_i)=x^*$.
%By the Implicit Function Theorem, there exists a smooth curve of positive steady states $x(\kappa)$ around $\kappa^*$ with $x(\kappa^*)=x^*$.
We say \defword{the network  has a Hopf bifurcation} (respectively, \defword{simple Hopf bifurcation}) \defword{at $(\kappa^*,x^*)$ with respect to $\kappa_i$}, if ${\rm Jac}^{\text{red}}_f(\kappa^*,x^*)$ has a single pair of purely imaginary eigenvalues, while all other eigenvalues remain with nonzero (respectively, negative) real parts, and as $\kappa_i$ varies, a single pair of complex-conjugate eigenvalues of ${\rm Jac}^{\text{red}}_f(\kappa(\kappa_i),x(\kappa_i))$ crosses the imaginary axis.
% as $\kappa_i$ varies, a single complex-conjugate pair of eigenvalues of ${\rm Jac}^{\text{red}}_f(\kappa(\kappa_i),x(\kappa_i))$ crosses the imaginary axis \textcolor{red}{and becomes a pair of purely imaginary eigenvalues when $\kappa_i=\kappa^*_i$}, while all other eigenvalues remain with nonzero (respectively, negative) real parts.
We say \defword{the network  admits a Hopf bifurcation} (respectively, \defword{simple Hopf bifurcation}) if there exist $\kappa^* \in \R^m_{>0}$ and a corresponding nondegenerate positive steady state $x^* \in \mathbb{R}_{> 0}^s$ such that the network  has a Hopf bifurcation (respectively, simple Hopf bifurcation) at $(\kappa^*,x^*)$ with respect to some $\kappa_i$.
\end{definition}

\begin{remark}\label{rmk:onenps}
Definition \ref{def:kxHB} implies that a necessary condition for a network to admit a Hopf bifurcation is to admit a nondegenerate positive steady state.
\end{remark}

\begin{remark}\label{rmk:red}
Suppose we have a network with a stoichiometric matrix $\mathcal{N} \in \R^{s \times m}$.
Let $r=\text{rank}(\mathcal{N})$.
%Denote the system of ODEs by $\dot{x}=f(\kappa,x)$ as in \eqref{sys:f}.
If $r<s$, then we can write the characteristic polynomial of ${\rm Jac}_f(\kappa,x)$ as
\begin{align*}
%\label{eq:pkxz}
p(\kappa,x;z):=\operatorname{det}(z I-{\rm Jac}_f(\kappa,x))=z^{s-r}\left( z^{r}+a_{1}(\kappa,x) z^{r-1}+\cdots+a_{r}(\kappa,x) \right).
\end{align*}
Notice that the size of ${\rm Jac}^{\text{red}}_f(\kappa,x)$ is $r \times r$.
Also, notice that for any $\kappa^* \in \R^m_{>0}$ and for any $x^* \in \R^s_{>0}$, ${\rm Jac}^{\text{red}}_f(\kappa^*,x^*)$ have the same nonzero eigenvalues with ${\rm Jac}_f(\kappa^*,x^*)$ \cite{Carsten2019}. So, the polynomial 
\begin{align}\label{eq:phikxz}
\phi(\kappa,x;z) := z^{r}+a_{1}(\kappa,x) z^{r-1}+\cdots+a_{r}(\kappa,x)
\end{align}
is the characteristic polynomial of ${\rm Jac}^{\text{red}}_f(\kappa,x)$. In Section \ref{sec:result}, we will apply Lemma \ref{lemma:Yang} to $\phi(\kappa,x;z)$  for  detecting simple Hopf bifurcations. In Section \ref{sec:proofmain} (the proof of the main theorem),  we will apply  Lemma \ref{prop:polyimroots} to $\phi(\kappa,x;z)$ for precluding Hopf bifurcations.
\end{remark}

\section{Main Result}\label{sec:result}

We say a network \eqref{eq:network} is a \defword{network with zero-one stoichiometric coefficients} (or simply a \defword{zero-one network}), if 
the coefficients $\alpha_{ij}$ and $\beta_{ij}$ in \eqref{eq:network} belong to $\{0,1\}$ for all $i=1, \ldots, s$ and $j=1, \ldots, m$.
 Notice that a monomolecular network (e.g., \cite{nico2017}) is a special zero-one network.

\begin{theorem}\label{thm:rank<3}
Let $G$ be a zero-one network with a stoichiometric matrix $\mathcal{N}$. 
%Denote the system of ODEs by $\dot{x}={\mathcal N}v(\kappa,x)$ as in \eqref{sys:f}. 
%If $\text{rank}(\mathcal{N}) \leq 3$, then $G$ does not admit a Hopf bifurcation.
If $G$ admits a Hopf bifurcation, then $\text{rank}(\mathcal{N}) \geq 4$.
\end{theorem}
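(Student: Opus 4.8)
The plan is to prove the contrapositive in the sharpest possible form: show that when $\mathrm{rank}(\mathcal{N}) = r \in \{2,3\}$, the reduced Jacobian $\mathrm{Jac}^{\text{red}}_f(\kappa^*,x^*)$ can never have a pair of purely imaginary eigenvalues at any positive steady state, which by Definition~\ref{def:kxHB} precludes a Hopf bifurcation. The organizing principle is the convex-parameters/extreme-ray reparametrization: since a positive steady state $x^*$ forces $v(\kappa^*,x^*) \in F(\mathcal{N})$, we write $v(\kappa^*,x^*) = \sum_{i} \lambda_i R^{(i)}$ as a nonnegative combination of extreme rays of the flux cone, and use the $\lambda_i > 0$ (together with the $x^*_k > 0$) as the free coordinates in which to study the Jacobian. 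I would carry out the following steps in order.

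First, I would compute $\mathrm{Jac}_f(\kappa,x)$ entrywise in the mass-action setting and substitute the extreme-ray parametrization, obtaining a matrix whose entries are Laurent polynomials in the $\lambda_i$ and $x_k$ with structured signs; this is the ``transformed Jacobian'' of Section~\ref{sec:trans}. Second, invoking the structural results already available: the diagonal entries are zero or negative sums of monomials (Corollary~\ref{coro:jkkis-}), and the $2\times2$ principal minors and the auxiliary polynomials $d_{i,j,k}$ of \eqref{def:dijk} attached to the $3\times3$ principal minors are zero or positive sums of monomials (Corollary~\ref{lemma:positivePM}). Third, I would translate the coefficients $a_k(\kappa,x)$ of the reduced characteristic polynomial $\phi(\kappa,x;z)$ into these minors via the standard identity $a_k = (-1)^k \sum (\text{$k\times k$ principal minors})$: then $a_1 = -\sum(\text{diagonal entries}) \ge 0$ and $a_2 = \sum(\text{$2\times2$ principal minors}) \ge 0$, and I would establish the two Hurwitz-determinant relations of Lemma~\ref{lemma:positivedeth}, namely that $\det(H_1) = a_1$ equals (up to sign) a sum of diagonal entries and $\det(H_2) = a_1 a_2 - a_3$ relates to the $d_{i,j,k}$. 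Fourth, for $r=2$: $\phi = z^2 + a_1 z + a_2$ with $a_1 \ge 0$, $a_2 \ge 0$; a purely imaginary pair requires $a_1 = 0$ and $a_2 > 0$. Since $a_1$ is a sum of nonpositive-coefficient-negated-to-nonnegative monomials in strictly positive variables, $a_1 = 0$ forces each diagonal entry to be the zero polynomial, and I would argue this in turn forces $a_2 = 0$ as well (a trace-zero constraint collapsing the off-diagonal structure for zero-one networks), contradiction. For $r=3$: by Lemma~\ref{prop:polyimroots}, a purely imaginary pair with the remaining root having nonzero real part needs $\det(H_2) = a_1 a_2 - a_3 = 0$ and $a_3 \det(H_1) = a_3 a_1 > 0$; so $a_1 > 0$ and $a_3 > 0$, hence $a_1 a_2 = a_3 > 0$ forces $a_2 > 0$ — but then I would show the sign structure makes $a_1 a_2 - a_3 > 0$ strictly (essentially that $a_1 a_2$ dominates $a_3$ monomial-by-monomial, since $a_3$ is built from the $d_{i,j,k}$ which are controlled by products of the diagonal with the $2\times2$ minors), again a contradiction. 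This is the content of the key Lemma~\ref{lemma:h1>0h2>0}.

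Finally I would assemble: rank~$0$ and rank~$1$ networks trivially cannot have an imaginary pair (the reduced Jacobian is $0\times0$ or $1\times1$), and ranks~$2$ and~$3$ are handled above; combined with Remark~\ref{rmk:onenps} and Definition~\ref{def:kxHB}, if $\mathrm{rank}(\mathcal{N}) \le 3$ the network admits no Hopf bifurcation, which is exactly the contrapositive of Theorem~\ref{thm:rank<3}.

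The main obstacle I anticipate is the strict-positivity arguments in Step four — in particular, showing for $r=3$ that $a_1 a_2 - a_3 > 0$ cannot vanish. The inequalities $a_1\ge0$, $a_2\ge0$ are ``soft'' (they come directly from monomial sign patterns), but $a_1 a_2 - a_3 = 0$ is a genuine algebraic identity between the negated-diagonal terms, the $2\times2$ minors, and the $d_{i,j,k}$; one must show that the zero-one hypothesis rules out the kind of cancellation that in bimolecular (non-zero-one) networks makes rank~$3$ Hopf bifurcations possible. I expect this requires carefully exploiting that for zero-one networks each $\alpha_{ij}\in\{0,1\}$ caps the degrees of the monomials appearing in each $x_k$, so that every monomial of $a_3$ is strictly subsumed by a monomial of the product $a_1 a_2$; pinning down this combinatorial/monomial-matching statement — and handling the degenerate cases where some diagonal entry or some $2\times2$ minor is identically zero — is the delicate core of the argument, and is presumably where the bulk of Sections~\ref{sec:structure} and~\ref{sec:proofmain} is spent.
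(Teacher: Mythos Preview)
Your proposal is correct and follows the paper's proof essentially line for line: the extreme-ray reparametrization (Lemma~\ref{prop:kxtohl}), the sign-structure results (Corollary~\ref{coro:jkkis-}, Lemma~\ref{lemma:positivePM}), the reduction via Lemma~\ref{lemma:preHB} to the key Lemma~\ref{lemma:h1>0h2>0}, and the trivial rank-$1$ case are exactly how the paper argues. One small correction to your informal sketch of the $r=3$ step: it is not that $a_3$ is built from the $d_{i,j,k}$ and then dominated by $a_1 a_2$; rather, the paper expands $\det(H_2)=b_1 b_2 - b_3$ directly as $\sum_{i<j<k} d_{i,j,k} - \sum_{i<j}(J_{ii}+J_{jj})\det(J[i,j])$ (equation~\eqref{eq:deth2hl=}), a sum each of whose terms is nonnegative by the structural lemmas, and then shows that $b_3\neq 0$ forces at least one summand to be strictly positive (via the case split in Lemma~\ref{lemma:positivedeth} and the identity~\eqref{eq:dpqr=}).
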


% \begin{corollary}\label{coro:s<3}
% Let $G$ be a zero-one network with $s$ species.
% %Denote the system of ODEs by $\dot{x}={\mathcal N}v(\kappa,x)$ as in \eqref{sys:f}. 
% If $s \leq 3$, then $G$ does not admit a Hopf bifurcation.
% \end{corollary}
% \begin{proof}
% Let $\mathcal{N}$ be the stoichiometric matrix of $G$.
% Recall the size of $\mathcal{N}$ is $s \times m$.
% So, $\text{rank}(\mathcal{N}) \leq s$.
% If $s \leq 3$, then we have $\text{rank}(\mathcal{N}) \leq 3$.
% By Theorem \ref{thm:rank<3}, $G$ does not admit a Hopf bifurcation.
% \end{proof}

% \begin{corollary}\label{coro:rankn>=4}
% Let $G$ be a zero-one network with a stoichiometric matrix $\mathcal{N}$. 
% %Denote the system of ODEs by $\dot{x}={\mathcal N}v(\kappa,x)$ as in \eqref{sys:f}. 
% % If there exists $\kappa \in \R^m_{>0}$ and corresponding positive steady state $x^* \in \mathbb{R}_{> 0}^s$ such that the system admits a Hopf bifurcation at $(\kappa,x^*)$ with respect to $\kappa_i$ $(1\leq i \leq m)$, then $\text{rank}(\mathcal{N}) \geq 4$.
% If $G$ admits a Hopf bifurcation, then $\text{rank}(\mathcal{N}) \geq 4$.
% \end{corollary}

\begin{corollary}\label{coro:existHB4s5m}
Let $G$ be a zero-one network with $s$ species and $m$ reactions.
%Denote the system of ODEs by $\dot{x}={\mathcal N}v(\kappa,x)$ as in \eqref{sys:f}. 
If $G$ admits a Hopf bifurcation, then $s\geq 4$, and $m\geq 5$.
\end{corollary}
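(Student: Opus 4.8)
The goal is to derive \Cref{coro:existHB4s5m} from \Cref{thm:rank<3}, so the plan is to translate the rank bound $\text{rank}(\mathcal{N}) \geq 4$ into statements about the number of species $s$ and the number of reactions $m$.

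\medskip

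\textbf{Proof plan.} First I would invoke \Cref{thm:rank<3}: since $G$ admits a Hopf bifurcation, $r := \text{rank}(\mathcal{N}) \geq 4$. The matrix $\mathcal{N}$ has size $s \times m$, so $\text{rank}(\mathcal{N}) \leq \min(s,m)$; hence immediately $s \geq 4$ and $m \geq 4$. It remains to upgrade $m \geq 4$ to $m \geq 5$. Suppose for contradiction that $m = 4$; then $\text{rank}(\mathcal{N}) = 4 = m$, which forces the four columns of $\mathcal{N}$ to be linearly independent. I would then argue that this is incompatible with $G$ admitting a Hopf bifurcation. The cleanest route uses \Cref{rmk:onenps}: admitting a Hopf bifurcation requires a nondegenerate positive steady state, hence in particular a positive steady state $x^* \in \mathbb{R}^s_{>0}$ with $\mathcal{N} v(\kappa^*, x^*) = 0$. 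Since $v(\kappa^*, x^*) \in \mathbb{R}^m_{>0}$ (each $v_j = \kappa_j \prod_i x_i^{\alpha_{ij}}$ is a product of positive numbers), the vector $v(\kappa^*,x^*)$ is a nonzero element of $\ker(\mathcal{N})$. But if the four columns of $\mathcal{N} \in \mathbb{R}^{s\times 4}$ are linearly independent, then $\ker(\mathcal{N}) = \{0\}$, a contradiction. Therefore $m \neq 4$, and combined with $m \geq 4$ we get $m \geq 5$.

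\medskip

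\textbf{Remark on the main obstacle.} The only subtlety is ensuring that $m = 4$ genuinely forces $\ker(\mathcal{N}) = \{0\}$: this is exactly the rank–nullity count, $\dim\ker(\mathcal{N}) = m - \text{rank}(\mathcal{N}) = 4 - 4 = 0$, together with the observation that $\text{rank}(\mathcal{N}) \leq \min(s,m)$ already forbids $\text{rank}(\mathcal{N}) > m$. One should also note that the argument never needed the zero-one hypothesis directly in this corollary — it is entirely inherited through \Cref{thm:rank<3} — so no structural analysis of $\mathcal{N}$ beyond its rank is required. An alternative, equivalent phrasing avoids \Cref{rmk:onenps} by noting that if $r = s$ then $\text{Jac}_f(\kappa^*,x^*)$ need not be singular, but if $m = r$ then the flux cone $F(\mathcal{N})$ contains only $\mathbf{0}$, again precluding a positive steady state; I would pick whichever of these two phrasings reads most smoothly in context. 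The whole proof is short and I do not anticipate any real difficulty.
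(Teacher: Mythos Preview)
Your proposal is correct and follows essentially the same argument as the paper: invoke \Cref{thm:rank<3} to get $s\ge 4$ and $m\ge 4$, then rule out $m=4$ by noting that $\mathrm{rank}(\mathcal N)=m$ forces $\ker(\mathcal N)=\{0\}$, contradicting the existence of a positive steady state (via \Cref{rmk:onenps}) since $v(\kappa^*,x^*)\in\mathbb R^m_{>0}$. The paper phrases the contradiction as ``each $v_i(\kappa^*,x^*)=0$'' rather than ``$\ker(\mathcal N)=\{0\}$'', but this is the same argument.
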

\begin{proof}
Let $\mathcal{N}$ be the stoichiometric matrix of $G$.
Recall the size of $\mathcal{N}$ is $s \times m$.
So, $\text{rank}(\mathcal{N}) \leq \min\{s,m\}$.
Then, by Theorem \ref{thm:rank<3}, if the network $G$ admits a Hopf bifurcation, then we have $s \geq 4$ and $m \geq 4$. 
Suppose $m = 4$.
Then, we have $\text{rank}(\mathcal{N}) \leq m=4$.
By Theorem \ref{thm:rank<3}, we have ${\rm rank}({\mathcal N})\geq 4$.
So, ${\rm rank}({\mathcal N}) = 4$.
Notice that the number of columns of $\mathcal{N}$ is 4. 
So, the columns of $\mathcal{N}$ are linearly independent.
We denote by $c^{(1)},\ldots,c^{(4)}$ the columns of $\mathcal{N}$.
 Denote the system of ODEs by $\dot{x}=\mathcal{N}v(\kappa,x)$ as in \eqref{sys:f}.
By Remark \ref{rmk:onenps}, there exist $\kappa^* \in \R^4_{>0}$ and a corresponding positive steady state $x^* \in \R^s_{>0}$.
Then, the following equality holds.
\begin{align*}
\mathcal{N} v(\kappa^*, x^*) = \sum_{i=1}^4 v_{i}(\kappa^*, x^*) c^{(i)} = {\bf 0}.
\end{align*}
By the fact that $c^{(1)},\ldots,c^{(4)}$ are linearly independent, we have $v_{i}(\kappa^*, x^*) = 0$ for any $i \in \{1,2,3,4\}$. This is a contradiction to the fact that $v(\kappa^*, x^*) \in \mathbb{R}^4_{> 0}$.
Therefore, if $G$ admits a Hopf bifurcation, then we have $s \geq 4$ and $m \geq 5$.
\end{proof}

% \begin{remark}
% There exists a network $G \in \mathcal{G}_{01}$ with four species and five reactions that admits a simple Hopf bifurcation (see Example \ref{exp:s4m5}). 
% \end{remark}

% Note here that in the examples below, we use the notation ``$\approx$" or we say ``approximately" to indicate the numbers are rounded to three significant digits.
\begin{example}\label{exp:s4m5}
 As we have mentioned in Section \ref{sec:intro}, we are able to compute
by {\tt Mathematica} 
 all four-species, five-reaction, zero-one networks that have the capacity for Hopf bifurcations. 
The following network is one of them. 
\begin{align}\label{network:example}
X_1+X_2+X_3 &\xrightarrow{\kappa_1} X_2+X_3 \notag \\
X_3 &\xrightarrow{\kappa_2} X_1+X_3+X_4 \notag \\
X_1 &\xrightarrow{\kappa_3} X_1+X_2 \notag \\
X_2+X_4 &\xrightarrow{\kappa_4} X_3 \notag \\
X_1+X_2+X_3+X_4 &\xrightarrow{\kappa_5} X_1+X_4.
\end{align}
Denote the system of ODEs  by $\dot{x}=f(\kappa,x)$.
%The corresponding system is given by
%\begin{align*}
%\dot{x}~=~f(\kappa,x)~=~\left( \begin{matrix}
%-1 & 1 & 0 & 0 & 0 \\
%0 & 0 & 1 & -1 & -1 \\
%0 & 0 & 0 & 1 & -1 \\
%0 & 1 & 0 & -1 & 0
%\end{matrix} \right) \left(\begin{matrix}
%\kappa_1 x_1 x_2 x_3 \\
%\kappa_2 x_3 \\
%\kappa_3 x_1 \\
%\kappa_4 x_2 x_4 \\
%\kappa_5 x_1 x_2 x_3 x_4
%\end{matrix} \right).
%\end{align*}
Let
\begin{align*}
\kappa^*=(\frac{7(\sqrt{1961}-39)}{440},\;\frac{1}{2},\;\frac{7(\sqrt{1961}-39)}{110},\;1,\;\frac{7(\sqrt{1961}-39)}{440}),
% \kappa^*_1=\frac{7(\sqrt{1961}-39)}{440},\kappa^*_2=\frac{1}{2},\kappa^*_3=\frac{7(\sqrt{1961}-39)}{110},\kappa^*_4=1,\kappa^*_5=\frac{7(\sqrt{1961}-39)}{440}.
\end{align*}
Pick a corresponding positive steady state
\begin{align*}
x^* = (\frac{220}{7(\sqrt{1961}-39)},\;1,\;2,\;1).
\end{align*}
Below, we show that the network \eqref{network:example} has a simple Hopf bifurcation at $(\kappa^*,x^*)$ with respect to $\kappa_4$.
% {\small \begin{align}\label{eq:HBpointinkx}
% (\kappa^*;x^*)&=(\frac{7(\sqrt{1961}-39)}{440},\frac{1}{2},\frac{7(\sqrt{1961}-39)}{110},1,\frac{7(\sqrt{1961}-39)}{880};\;\frac{220}{7(\sqrt{1961}-39)},1,2,1) \notag \\
% &\approx (0.0841,0.5,0.336,1,0.0420;\;5.948,1,2,1)
% \end{align}}
Let ${\rm Jac}^{\rm red}_f(\kappa,x)$ be a reduced Jacobian matrix of $f(\kappa,x)$ with respect to $x$ (see \eqref{def:reduceJac}).
The eigenvalues of ${\rm Jac}^{\rm red}_f(\kappa^*,x^*)$ are approximately
\begin{align*}
-2.960,\;-0.708,\;\mathbf{0.400 }\mathrm{i},\;-\mathbf{0.400 }\mathrm{i}.
\end{align*}
%Fix the value of $\kappa^*_1,\kappa^*_2,\kappa^*_3$.
As the value of $\kappa_4$ varies, in the neighborhood of $\kappa_4^*$, consider the following curve of  rate constants 
\begin{align}\label{eq:kk4}
\kappa(\kappa_4)=(\kappa^*_1,\kappa^*_2,\kappa^*_3,\kappa_4,\frac{7(\sqrt{1961}-39)}{440}\kappa_4)\approx(0.0841,\;0.5,\;0.336,\;\kappa_4,\;0.0841\kappa_4).
\end{align}
and the  corresponding curve of positive steady states 
\begin{align*}
\label{eq:xk4}
x(\kappa_4) = (x^*_1,x^*_2,x^*_3,\frac{1}{\kappa_4})\approx(5.948,\;1,\;2,\;\frac{1}{\kappa_4}).
\end{align*}
Notice that $\kappa^*_{4}=1$.
When $\kappa_4=0.5$, the eigenvalues of ${\rm Jac}^{\rm red}_f(\kappa(\kappa_4),x(\kappa_4))$ are approximately
\begin{align*}
-2.449,\;-0.636,\;\mathbf{-0.0415+0.326}\mathrm{i},\;\mathbf{-0.0415-0.326}\mathrm{i}.
\end{align*}
% We plot the change of concentrations $x_1,x_2,x_3,x_4$ against time with the rate constants $\kappa^*\approx(0.0841,0.5,0.336,0.5,0.0420)$ (see figure \ref{fig:05xt}).
% We set the initial concentrations to $0.1,0.1,0.1,0.1$. 
%In this case, the point $x^*\approx (5.948,1,2,2)$ is a stable positive steady state.
When $\kappa_4=1.5$, the eigenvalues of ${\rm Jac}^{\rm red}_f(\kappa(\kappa_4),x(\kappa_4))$ are approximately
\begin{align*}
-3.469,\;-0.742,\;\mathbf{0.0214+0.442}\mathrm{i},\;\mathbf{0.0214-0.442}\mathrm{i}.
\end{align*}
In this case, a nearby oscillation is generated (see Figure \ref{fig:15xt}).
%We set the initial concentrations to $x^{(0)}=(6,0.4,0.4,0.5)$. 
% \begin{figure}[htbp]
% \centering 
% \subfigure[$x_1$ vs. $t$]{\label{subfig:aa}
% \includegraphics[width=0.4\linewidth]{15x1.eps}}
% \hspace{0.01\linewidth}
% \subfigure[$x_2$ vs. $t$]{\label{subfig:bb}
% \includegraphics[width=0.4\linewidth]{15x2.eps}}
% \vfill
% \subfigure[$x_3$ vs. $t$]{\label{subfig:cc}
% \includegraphics[width=0.4\linewidth]{15x3.eps}}
% \hspace{0.01\linewidth}
% \subfigure[$x_4$ vs. $t$]{\label{subfig:dd}
% \includegraphics[width=0.4\linewidth]{15x4.eps}}
% \caption{The figure shows the change of concentrations $x_1,x_2,x_3,x_4$ against time with the rate constants $\kappa^*\approx(0.0841,0.5,0.336,1.5,0.0420)$.
% We set the initial concentrations to $0.1,0.1,0.1,0.1$. 
% The network \eqref{network:example} admits an oscillation nearby $(x_1^*,x_2^*,x_3^*,x_4^*)\approx (5.948,1,2,0.666)$. 
% This figure was generated using MATCONT \cite{matcont}.}
% \label{fig:15xt}
% \end{figure}
\begin{figure}[h!]
\centering
\includegraphics[scale=0.5]{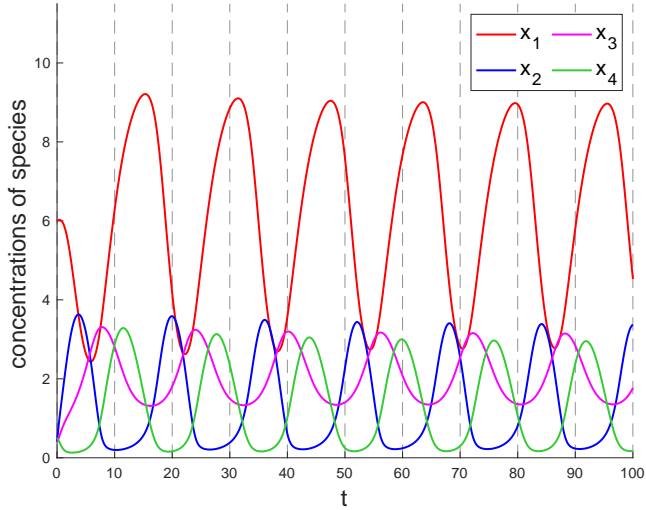}
\caption{The figure shows oscillations in the species of network \eqref{network:example}.
The rate constants are given by $\kappa(\kappa_4)$ in \eqref{eq:kk4} where the value of $\kappa_4$ is 1.5.
We set the initial concentrations to $x^{(0)}=(6,0.4,0.4,0.5)$.}
\label{fig:15xt}
\end{figure}

Note that in the neighborhood of $\kappa_4^*$,  the eigenvalues of ${\rm Jac}^{\rm red}_f(\kappa(\kappa_4),x(\kappa_4))$ change continuously as the value of $\kappa_4$ varies.
We observe that when the value of $\kappa_{4}$ changes from 0.5 to 1.5, the real parts of a pair of conjugate-complex eigenvalues change from negative to positive, and become zero when $\kappa_{4}=\kappa_{4}^*=1$.
Moreover, all other eigenvalues remain with negative real parts.
So, by Definition \ref{def:kxHB}, the network \eqref{network:example} has a simple Hopf bifurcation at $(\kappa^*,x^*)$ with respect to $\kappa_{4}$. Later in Example \ref{ex:details}, we will explain how to find $(\kappa^*,x^*)$ by Yang's criterion after we transform the Jacobian matrix by the method of extreme rays.  
\end{example}
% \begin{figure}[htbp]
% \centering 
% \subfigure[$x_1$ vs. $t$]{\label{subfig:a}
% \includegraphics[width=0.4\linewidth]{05x1.eps}}
% \hspace{0.01\linewidth}
% \subfigure[$x_2$ vs. $t$]{\label{subfig:b}
% \includegraphics[width=0.4\linewidth]{05x2.eps}}
% \vfill
% \subfigure[$x_3$ vs. $t$]{\label{subfig:c}
% \includegraphics[width=0.4\linewidth]{05x3.eps}}
% \hspace{0.01\linewidth}
% \subfigure[$x_4$ vs. $t$]{\label{subfig:d}
% \includegraphics[width=0.4\linewidth]{05x4.eps}}
% \caption{The figure shows the change of concentrations $x_1,x_2,x_3,x_4$ against time with the rate constants $\kappa^*\approx(0.0841,0.5,0.336,0.5,0.0420)$.
% We set the initial concentrations to $0.1,0.1,0.1,0.1$. 
% The network \eqref{network:example} has a stable positive steady state $(x_1^*,x_2^*,x_3^*,x_4^*)\approx (5.948,1,2,2)$. 
% This figure was generated using MATCONT \cite{matcont}.}
% \label{fig:05xt}
% \end{figure}

\subsection{Application: MAPK cascades}
Theorem \ref{thm:rank<3} states that if a zero-one network admits a Hopf bifurcation, then the rank of the  stoichiometric matrix is at least four. 
In this subsection, we show that 
 a rank-four subnetwork that admits a Hopf bifurcation can be  obtained from  the well-known mitogen-activated-protein-kinase (MAPK) cascade,
 which represents a crucial step of the chemical signal transduction in cellular systems and is widely conserved in eukaryotes \cite{chang2001}. 
 In fact, this rank-four subnetwork is first found in \cite{otto2017}, where the authors do not emphasize it has the minimum rank that admits a Hopf bifurcation. Here,  we use the same notation with that used in \cite{otto2017}.
The MAPK cascade network,  denote by $\text{Net}_{\text{MAPK}}$, consists of 14 species and 18 reactions (see \eqref{3.7a} to \eqref{3.7f}).
% As shown in network \eqref{network:MAPK}, it includes 14 species and 18 reactions, where $A$ stands for MAPKKK, $A^*$ stands for $\text{MAPKKK}^*$, $B$ stands for MAPKK, $B_1$ stands for MAPKKP, $C$ stands for MAPKKP'ase, $D_1$ stands for enzyme activating MAPKKK, and $D_2$ stands for enzyme deactivating MAPKKK.
{\footnotesize\begin{subequations}\label{network:MAPK}
\begin{align}
B+A^* &\xrightleftharpoons[\kappa_{13}]{\kappa_1} A^*B \xrightarrow{\kappa_2} A^*+B_1 \label{3.7a}\\
A^*+B_1 &\xrightleftharpoons[\kappa_{14}]{\kappa_3} A^*B_1 \xrightarrow{\kappa_4} A^*+B_2 \label{3.7b} \\
B_2+C &\xrightleftharpoons[\kappa_{15}]{\kappa_5} CB_2 \xrightarrow{\kappa_6} C+B_1 \label{3.7c} \\
B_1+C &\xrightleftharpoons[\kappa_{16}]{\kappa_{7}} CB_1 \xrightarrow{\kappa_{8}} C+B \label{3.7d} \\
D_1+A &\xrightleftharpoons[\kappa_{17}]{\kappa_{9}} D_1A \xrightarrow{\kappa_{10}} D_1+A^* \label{3.7e} \\
D_2+A^* &\xrightleftharpoons[\kappa_{18}]{\kappa_{11}} D_2A^* \xrightarrow{\kappa_{12}} D_2+A \label{3.7f}.
\end{align}
\end{subequations}
}The authors in \cite{otto2017} present a small oscillating network obtained from $\text{Net}_{\text{MAPK}}$, denoted by $\text{SubNet}_{\text{MAPK}}$, which consists of 6 species and 7 reactions  (see \eqref{3.8a} to \eqref{3.8f}).
{\footnotesize\begin{subequations}\label{network:subMAPK}
\begin{align}
B+A^* \xrightarrow{\tau_1} &A^*B \xrightarrow{\tau_2} A^*+B_1\label{3.8a} \\
%A^*B &\xrightarrow{\tau_2} A^*+B_1 \label{3.8b} \\
A^*+B_1 &\xrightarrow{\tau_3} A^*+B_2 \label{3.8c} \\
B_2 &\xrightarrow{\tau_4} B_1\label{3.8d} \\
B_1 &\xrightarrow{\tau_5} B \label{3.8e} \\
A &\xrightleftharpoons[\tau_7]{\tau_6} A^* \label{3.8f}.
\end{align}
\end{subequations}
}First, we explain how to obtain $\text{SubNet}_{\text{MAPK}}$ from $\text{Net}_{\text{MAPK}}$.
\begin{enumerate}[(i)]
\item We obtain \eqref{3.8a} by   removing the reverse reaction indexed by the rate constant $\kappa_{13}$ from \eqref{3.7a}. 
\item We obtain \eqref{3.8c} by removing  the reverse reaction indexed by the rate constant $\kappa_{14}$ and the intermediate $A^*B_1$ from \eqref{3.7b}.
\item   We obtain \eqref{3.8d} by removing the reverse reaction indexed by the rate constant $\kappa_{15}$, the intermediate $CB_2$,  and the specie $C$ from \eqref{3.7c}.
Similarly, we obtain  \eqref{3.8e} from \eqref{3.7d}, and we 
obtain \eqref{3.8f} from \eqref{3.7e}--\eqref{3.7f}.
\end{enumerate}
%Notice that the stoichiometric matrix $\mathcal{N}$ of $\text{SubNet}_{\text{MAPK}}$ is
%\begin{align*}
%\mathcal{N}=
%\left( \begin{matrix}
%-1 & 1 & 0 & 0 & 0 & 1 & -1 \\
%-1 & 0 & 0 & 0 & 1 & 0 & 0 \\
%0 & 1 & -1 & 1 & -1 & 0 & 0 \\
%1 & -1 & 0 & 0 & 0 & 0 & 0 \\
%0 & 0 & 1 & -1 & 0 & 0 & 0 \\
%0 & 0 & 0 & 0 & 0 & -1 & 1 \\
%\end{matrix} \right)
%\end{align*}
%(here, the rows of $\mathcal{N}$ are ordered by the species 
%$A^*,B,B_1,A^*B,B_2,A$).
It is straightforward to check that the rank of the stoichiometric matrix of the network $\text{SubNet}_{\text{MAPK}}$ is four.
Next, we show that $\text{SubNet}_{\text{MAPK}}$ admits a simple Hopf bifurcation according to Definition \ref{def:kxHB}.
%Moreover, some values of the bifurcating point of $\text{Net}_{\text{MAPK}}$ can be inherited from $\text{SubNet}_{\text{MAPK}}$.
We denote by $y_1,\ldots,y_{6}$ the concentrations of 
the $6$ species in $\text{SubNet}_{\text{MAPK}}$, see Table \ref{tab:subMAPK}.
{\small\begin{table}[ht]
\centering
\caption{Species concentrations of $\text{SubNet}_{\text{MAPK}}$}
\begin{tabular}{cccccc}
\hline$y_{1}$ & $y_{2}$ & $y_{3}$ & $y_{4}$ & $y_{5}$ & $y_{6}$ \\
\hline$A^*$ & $B$ & $B_1$ & $A^*B$ & $B_2$ & $A$ \\
\hline
\end{tabular}
\label{tab:subMAPK}
\end{table}
}Let $\tau=(\tau_1,\ldots,\tau_{7})$, and let $y=(y_1,\ldots,y_{6})$.
Denote the system of ODEs by $\dot{y}=\hat{f}(\tau,y)$.
Let ${\rm Jac}^{\rm red}_{\hat{f}}(\tau,y)$ be a reduced Jacobian matrix of $\hat{f}(\tau,y)$ with respect to $y$.
Let
\begin{align*}
\tau^*=(100,\;1,\;100,\;1,\;10,\;2.39,\;0.239).
% \tau^*_1=100,\;\tau^*_2=1,\;\tau^*_3=100,\;\tau^*_4=1,\;\tau^*_5=10.
\end{align*}
%Let $\tau^*_7$ be the positive real root of $11646602 z^3+47297944 z^2+45096782 z-13628560=0$.
% Let $\tau^*_7 = 0.239$, and let $\tau^*_6=10\tau^*_7$.
And, pick a corresponding positive steady state 
\begin{align*}
y^*=(0.1,\;0.1,\;0.1,\;1,\;1,\;1).
\end{align*}
Notice here that this point exactly lies on the Hopf bifurcation curve given in \cite[Figure 7]{otto2017}. 
The eigenvalues of ${\rm Jac}^{\rm red}_{\hat{f}}(\tau^*,y^*)$ are approximately
\begin{align*}
-22.314 + 6.993 \mathrm{i},\;-22.314 - 6.993 \mathrm{i},\;\textbf{0.815}\mathrm{i},\;\textbf{-0.815}\mathrm{i}.
\end{align*}
As the value of $\tau_7$ varies, in the neighborhood of $\tau_7^*$, consider the curve of  rate constants
\begin{align}\label{eq:tt7}
\tau(\tau_7) = (100,\;1,\;100,\;1,\;10,\;10\tau_7,\;\tau_7).
\end{align}
We remark that for every point on this curve, $y^*$ is a  corresponding positive steady state.
%Notice that $\tau^*_7\approx 0.238$.
When $\tau_7=0.228$, the eigenvalues of ${\rm Jac}^{\rm red}_{\hat{f}}(\tau(\tau_7),y^*)$ are approximately
\begin{align*}
-22.282 + 6.995 \mathrm{i},\;-22.282 - 6.995 \mathrm{i},\;\textbf{0.0284+0.797}\mathrm{i},\;\textbf{0.0284-0.797}\mathrm{i}.
\end{align*}
In this case, a nearby oscillation is generated around the positive steady state $y^*$, see Figure \ref{fig:subMAPK}, where
we set the initial concentrations to 
\begin{align}\label{eq:x00}
y^{(0)}=(0.15,\;0.15,\;0.15,\;1.1,\;1.1,\;1.1).
\end{align} 
\begin{figure}[h!]
\centering
\includegraphics[scale=0.5]{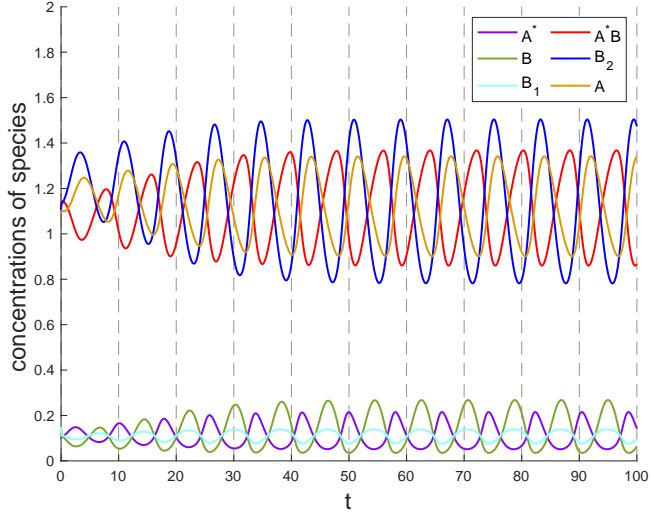}
\caption{The network $\text{SubNet}_{\text{MAPK}}$ gives rise to oscillations with respect to the species in Table \ref{tab:subMAPK}. 
The rate constants are given by $\tau(\tau_7)$ in \eqref{eq:tt7} where the value of $\tau_{7}$ is 0.228.
We set the initial concentrations to $y^{(0)}$ in \eqref{eq:x00}.}
\label{fig:subMAPK}
\end{figure}

\noindent When $\tau_7=0.248$, the eigenvalues of ${\rm Jac}^{\rm red}_{\hat{f}}(\tau(\tau_7),y^*)$ are approximately
\begin{align*}
-22.340 + 6.991 \mathrm{i},\;-22.340 - 6.991 \mathrm{i},\;\textbf{-0.0238+0.829}\mathrm{i},\;\textbf{-0.0238-0.829}\mathrm{i}.
\end{align*}
We observe that when the value of $\tau_7$ changes from 0.228 to 0.248, the real parts of a pair of conjugate-complex eigenvalues changes from positive to negative, and become zero when $\tau_7=\tau^*_7$.
Moreover, all other eigenvalues remain with negative real parts.
So, by Definition \ref{def:kxHB}, $\text{SubNet}_{\text{MAPK}}$ has a simple Hopf bifurcation at $(\tau^*,y^*)$ with respect to $\tau_7$.

Below, we show that the original network $\text{Net}_{\text{MAPK}}$ also admits a simple Hopf bifurcation.
Let $x_1,\ldots,x_{14}$ denote the concentrations of the species in $\text{Net}_{\text{MAPK}}$, see Table \ref{tab:MAPK}.
{\small\begin{table}[ht]
\centering
\caption{Species concentrations of $\text{Net}_{\text{MAPK}}$.}
\resizebox{\textwidth}{!}{
\begin{tabular}{cccccccccccccc}
\hline$x_{1}$ & $x_{2}$ & $x_{3}$ & $x_{4}$ & $x_{5}$ & $x_{6}$ & $x_{7}$ & $x_{8}$ & $x_{9}$ & $x_{10}$ & $x_{11}$ & $x_{12}$ & $x_{13}$ & $x_{14}$ \\
\hline$A^*$ & $B$ & $B_1$ & $A^*B$ & $B_2$ & $A$ & $D_1$ & $D_2$ & $D_1A$ & $D_2A^*$ & $C$ & $CB_1$ & $CB_2$ & $A^*B_1$ \\
\hline
\end{tabular}}
\label{tab:MAPK}
\end{table}
}Let $\kappa=(\kappa_1,\ldots,\kappa_{18})$, and let $x=(x_1,\ldots,x_{14})$.
Denote the system of ODEs  by $\dot{x}=f(\kappa,x)$.
Let ${\rm Jac}^{\rm red}_{f}(\kappa,x)$ be a reduced Jacobian matrix of $f(\kappa,x)$ with respect to $x$.
Let
% \begin{align}\label{kvalue:MAPK}
% &\kappa^*_1=200,\kappa^*_2=1,\kappa^*_3=200,\kappa^*_4=10,\kappa^*_5=2,\kappa^*_6=1,\kappa^*_7=20,\kappa^*_8=1, \notag \\
% &\kappa^*_9\approx1.00579,\kappa^*_{10}\approx0.00579,\kappa^*_{11}\approx1.00579,\kappa^*_{12}\approx0.00579,\kappa^*_{13}=1, \notag \\
% &\kappa^*_{14}=10,\kappa^*_{15}=1,\kappa^*_{16}=1,\kappa^*_{17}=1,\kappa^*_{18}=0.1.
% \end{align}
{\small \begin{align*}
\kappa^*=(200,\;1,\;200,\;10,\;2,\;1,\;20,\;1,\;1.00579,\;0.00579,\;1.0579,\;0.00579,\;1,\;10,\;1,\;1,\;1,\;0.1).
% &\kappa^*_1=200,\kappa^*_2=1,\kappa^*_3=200,\kappa^*_4=10,\kappa^*_5=2,\kappa^*_6=1,\kappa^*_7=20,\kappa^*_8=1,\kappa^*_{13}=1,  \notag \\
% &\kappa^*_{14}=10,\kappa^*_{15}=1,\kappa^*_{16}=1,\kappa^*_{17}=1,\kappa^*_{18}=0.1.
\end{align*}}
%Let $\kappa^*_{10}$ be the positive real root of \textcolor{red}{(Here, the formula is too long to write)}.
%Note that $\kappa^*_{10}\approx0.00579$.
% Let $\kappa^*_{10} = 0.00579$.
% Let $\kappa^*_9=1+\kappa^*_{10}$, $\kappa^*_{11}=1+\kappa^*_{10}$, and $\kappa^*_{12}=1+\kappa^*_{10}$.
Then, pick a positive steady state corresponding to  $\kappa^*$ 
\begin{align*}
x^*=(0.1,\;0.1,\;0.1,\;1,\;1,\;1,\;1,\;1,\;1,\;1,\;1,\;1,\;1,\;0.1).
\end{align*}
The eigenvalues of ${\rm Jac}^{\rm red}_f(\kappa^*,x^*)$ are approximately
\begin{align*}
-76.416,\;-32.662,\;-15.519,\;-5.290,\;-3.364,\;-3.015,\;\textbf{0.225}\mathrm{i},\;\textbf{-0.225}\mathrm{i}.
\end{align*}
As the value of $\kappa_{10}$ varies, in the neighborhood of $\kappa_{10}^*$, consider the curve of  rate constants 
{\small \begin{align}\label{eq:kk10}
\kappa(\kappa_{10}) = (200,\;1,\;200,\;10,\;2,\;1,\;20,\;1,\;1+\kappa_{10},\;\kappa_{10},\;1+10\kappa_{10},\;\kappa_{10},\;1,\;10,\;1,\;1,\;1,\;0.1),
\end{align}
}and  notice that for every point on the curve, $x^*$ is a positive steady state.
%Notice that $\kappa^*_{10} \approx 0.00579$. 
When $\kappa_{10}=0.00479$, the eigenvalues of ${\rm Jac}^{\rm red}_f(\kappa(\kappa_{10}),x^*)$ are approximately
{\small \begin{align*}
-76.411,-32.661,-15.519,-5.291,-3.363,-3.0128,\textbf{0.00103+0.223}\mathrm{i},\textbf{0.00103-0.223}\mathrm{i}.
\end{align*}
}In this case, a nearby oscillation is generated around the positive steady state $x^*$, see Figure \ref{fig:MAPK}, where we set the initial concentrations to 
\begin{align}\label{eq:x0}
x^{(0)}=(0.15,\;0.15,\;0.15,\;1.1,\;1.1,\;1.1,\;1.1,\;1.1,\;1.1,\;1.1,\;1.1,\;1.1,\;1.1,\;0.15).
\end{align} 
Notice that here,  we see that the capacity of $\text{SubNet}_{\text{MAPK}}$ for oscillations is inherited from  $\text{Net}_{\text{MAPK}}$ by the facts that $y^*$ exactly gives the first six coordinates of $x^*$, and $y^{(0)}$ exactly gives the first six coordinates of $x^{(0)}$. 
\begin{figure}[htbp]
\centering
\includegraphics[scale=0.5]{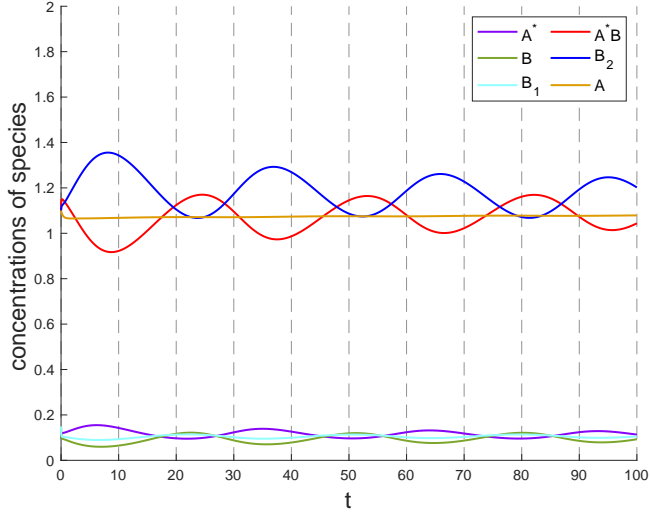}
\caption{The network $\text{Net}_{\text{MAPK}}$ gives rise to oscillations, and we show here only the first six species in Table \ref{tab:MAPK}. 
The rate constants are given by $\kappa(\kappa_{10})$ in \eqref{eq:kk10} where the value of $\kappa_{10}$ is 0.00479.
We set the initial concentrations to $x^{(0)}$ in \eqref{eq:x0}.
We remark that the maximum amplitude of the orange curve is about $0.01$.}
\label{fig:MAPK}
\end{figure}
When $\kappa_{10}=0.00679$, the eigenvalues of ${\rm Jac}^{\rm red}_f(\kappa(\kappa_{10}),x^*)$ are approximately
{\small \begin{align*}
-76.421,-32.664,-15.520,-5.290,-3.365,-3.0181,\textbf{-0.00104+0.227}\mathrm{i},\textbf{-0.00104-0.227}\mathrm{i}.
\end{align*}
}We observe that when the value of $\kappa_{10}$ changes from 0.00479 to 0.00679, the real parts of a pair of conjugate-complex eigenvalues change from positive to negative, and become zero when $\kappa_{10}=\kappa_{10}^*$.
Moreover, all other eigenvalues remain with negative real parts.
So, by Definition \ref{def:kxHB}, $\text{Net}_{\text{MAPK}}$ has a simple Hopf bifurcation at $(\kappa^*,x^*)$ with respect to $\kappa_{10}$.

\subsection{Application: ERK network}
In this subsection, we consider the phosphorylation and dephosphorylation of extracellular signal-regulated kinase (ERK). We will show the existence of a rank-four subnetwork in the ERK network that admits a Hopf bifurcation. 
The authors in \cite{ERK2019} introduce the reduced ERK network, and they  show that the reduced network preserves oscillations. As shown in \eqref{network:redERKa} to \eqref{network:redERKe}, the reduced ERK network, denoted by $\text{Net}_{\text{redERK}}$, contains 10 species and 10 reactions.
{\footnotesize\begin{subequations}\label{network:redERK}
\begin{align}
S_{00}+E \xrightarrow{\kappa_1} &S_{00}E \xrightarrow{\kappa_2} S_{01}E  \label{network:redERKa}\\
S_{01}E \xrightarrow{\kappa_3} S_{01}+E,&\; S_{01}+F\xrightarrow{\kappa_4} S_{00}+F \label{network:redERKb} \\
S_{11}+F \xrightarrow{\kappa_5} &S_{11}F \xrightarrow{\kappa_6} S_{10}F \label{network:redERKc}\\
S_{10}F \xrightarrow{\kappa_7} S_{10}+F,&\; S_{10}+E\xrightarrow{\kappa_8} S_{11}+E \label{network:redERKd} \\
S_{00}E \xrightarrow{\kappa_9} S_{11}+E,&\; S_{10}+F \xrightarrow{\kappa_{10}} S_{00}+F \label{network:redERKe}.
\end{align}
\end{subequations}
}Below, we give  a  smaller network, denoted by $\text{SubNet}_{\text{redERK}}$, containing  7 species and 7 reactions.
{\footnotesize\begin{subequations}\label{network:subredERK}
\begin{align}
S_{00}+E &\xrightarrow{\tau_1} S_{01}E \label{network:subredERKa}\\
S_{01}E + F &\xrightarrow{\tau_2} S_{00}+E+F \label{network:subredERKb} \\
S_{11}+F &\xrightarrow{\tau_3} S_{10}F \label{network:subredERKc}\\
S_{10}F \xrightarrow{\tau_4} S_{10}+F,&\; S_{10}+E\xrightarrow{\tau_5} S_{11}+E \label{network:subredERKd} \\
S_{00}E \xrightarrow{\tau_6} S_{11}+E,&\; S_{10}+F \xrightarrow{\tau_7} S_{00}+F \label{network:subredERKe}.
\end{align}
\end{subequations}
}First, we explain how to obtain $\text{SubNet}_{\text{redERK}}$ from $\text{Net}_{\text{redERK}}$.
\begin{enumerate}[(i)]
\item We obtain \eqref{network:subredERKa} by removing 
the intermediate $S_{00}E$ from 
\eqref{network:redERKa}. Similarly, we obtain \eqref{network:subredERKc} from \eqref{network:redERKc}.
\item In \eqref{network:redERKb}, we remove the specie $S_{01}$ and combine the reactions indexed by the rate constants $\kappa_{3}$ and $\kappa_{4}$ (i.e., combine the reactants and the products from the two reactions, respectively). Then, we obtain \eqref{network:subredERKb}.
\item The reactions \eqref{network:subredERKd} and \eqref{network:subredERKe} are identical to \eqref{network:redERKd} and \eqref{network:redERKe}.
\end{enumerate}
%Notice that the stoichiometric matrix $\mathcal{N}$ of $\text{SubNet}_{\text{redERK}}$ is
%\begin{align*}
%\mathcal{N}=
%\left( \begin{matrix}
%-1 & 1 & 0 & 0 & 0 & 0 & 1 \\
%-1 & 1 & 0 & 0 & 0 & 1 & 0 \\
%1 & -1 & 0 & 0 & 0 & -1 & 0 \\
%0 & 0 & -1 & 0 & 1 & 1 & 0 \\
%0 & 0 & 0 & 1 & -1 & 0 & 0 \\
%0 & 0 & -1 & 1 & 0 & 0 & 1 \\
%0 & 0 & 1 & -1 & 0 & 0 & -1
%\end{matrix} \right)
%\end{align*}
%(here, the rows of $\mathcal{N}$ are ordered by the species 
%$S_{00},E,S_{01}E,S_{11},S_{10},F,S_{10}F$).
It is straightforward to check that the rank of the stoichiometric matrix of the network $\text{SubNet}_{\text{redERK}}$ is four.
Next, we show that $\text{SubNet}_{\text{redERK}}$ admits a simple Hopf bifurcation.
Let $y_1,\ldots,y_{6}$ be the concentrations of the  species in $\text{SubNet}_{\text{redERK}}$, see Table \ref{tab:subredERK}.
\begin{table}[ht]
\centering
\caption{Species concentrations of $\text{SubNet}_{\text{redERK}}$.}
\begin{tabular}{ccccccc}
\hline$y_{1}$ & $y_{2}$ & $y_{3}$ & $y_{4}$ & $y_{5}$ & $y_{6}$ & $y_{7}$ \\
\hline$S_{00}$ & $E$ & $S_{01}E$ & $S_{11}$ & $S_{10}$ & $F$ & $S_{10}F$ \\
\hline
\end{tabular}
\label{tab:subredERK}
\end{table}
Let $\tau=(\tau_1,\ldots,\tau_{7})$, and let $y=(y_1,\ldots,y_{7})$.
Denote the system of ODEs by $\dot{y}=\hat{f}(\tau,y)$.
Let ${\rm Jac}^{\rm red}_{\hat{f}}(\tau,y)$ be a reduced Jacobian matrix of $\hat{f}(\tau,y)$ with respect to $y$.
Let
\begin{align*}
\tau^*=(3.0682,\;20.682,\;110,\;0.1,\;0.1,\;1,\;1).
% \tau^*_3=1.1,\;\tau^*_4=0.01,\;\tau^*_5=0.001,\;\tau^*_6=0.1,\;\tau^*_7=0.1.
\end{align*}
% Let $z^*$ be the positive real root of $-6869.62 z^3-54878.5 z^2+74952.8 z+140486=0$.
% Note that $z^* \approx 2.0682$.
%Let $z^* = 2.0682$.
% Let $\tau^*_1=0.0307$, and let $\tau^*_2=10\tau_1^*-0.1$.
Pick a positive steady state corresponding to $\tau^*$:
\begin{align*}
y^*=(1,\;1,\;1,\;0.1,\;1,\;0.1,\;1).
\end{align*}
We remark that we find the point $(\tau^*,y^*)$  by the condition (i) in Lemma \ref{lemma:Yang}.
The eigenvalues of ${\rm Jac}^{\rm red}_{\hat{f}}(\tau^*,y^*)$ are approximately
\begin{align*}
-21.814,\;-10.590,\;\textbf{0.0724}\mathrm{i},\;\textbf{-0.0724}\mathrm{i}.
\end{align*}
As the value of $\tau_1$ varies, in the neighborhood of $\tau_1^*$, consider the curve of  rate constants
\begin{align}\label{eq:tt1}
\tau(\tau_1) = (\tau_1,\;10\tau_1-10,\;110,\;0.1,\;0.1,\;1,\;1).
\end{align}
Notice that all points on this curve admit a common positive steady state $y^*$.
%Notice that $\tau^*_1\approx 0.0307$.
When $\tau_1=3.06$, the eigenvalues of ${\rm Jac}^{\rm red}_{\hat{f}}(\tau(\tau_1),y^*)$ are approximately
\begin{align*}
-21.819,\;-10.560,\;\textbf{-0.000568+0.0733}\mathrm{i},\;\textbf{-0.000568-0.0733}\mathrm{i}.
\end{align*}
When $\tau_1=3.08$, the eigenvalues of ${\rm Jac}^{\rm red}_{\hat{f}}(\tau(\tau_1),y^*)$ are approximately
\begin{align*}
-21.807,\;-10.634,\;\textbf{0.000819+0.0712}\mathrm{i},\;\textbf{0.000819-0.0712}\mathrm{i}.
\end{align*}
In this case, a nearby oscillation is generated around the positive steady state $y^*$, see figure \ref{fig:subredERK},
where we set the initial concentrations to 
\begin{align}\label{eq:y0}
y^{(0)}=(1,\;1,\;1,\;0.0999999,\;1,\;0.0999999,\;1).
\end{align} 
\begin{figure}[h!]
\centering
\subfigure{
\includegraphics[width=0.45\linewidth]{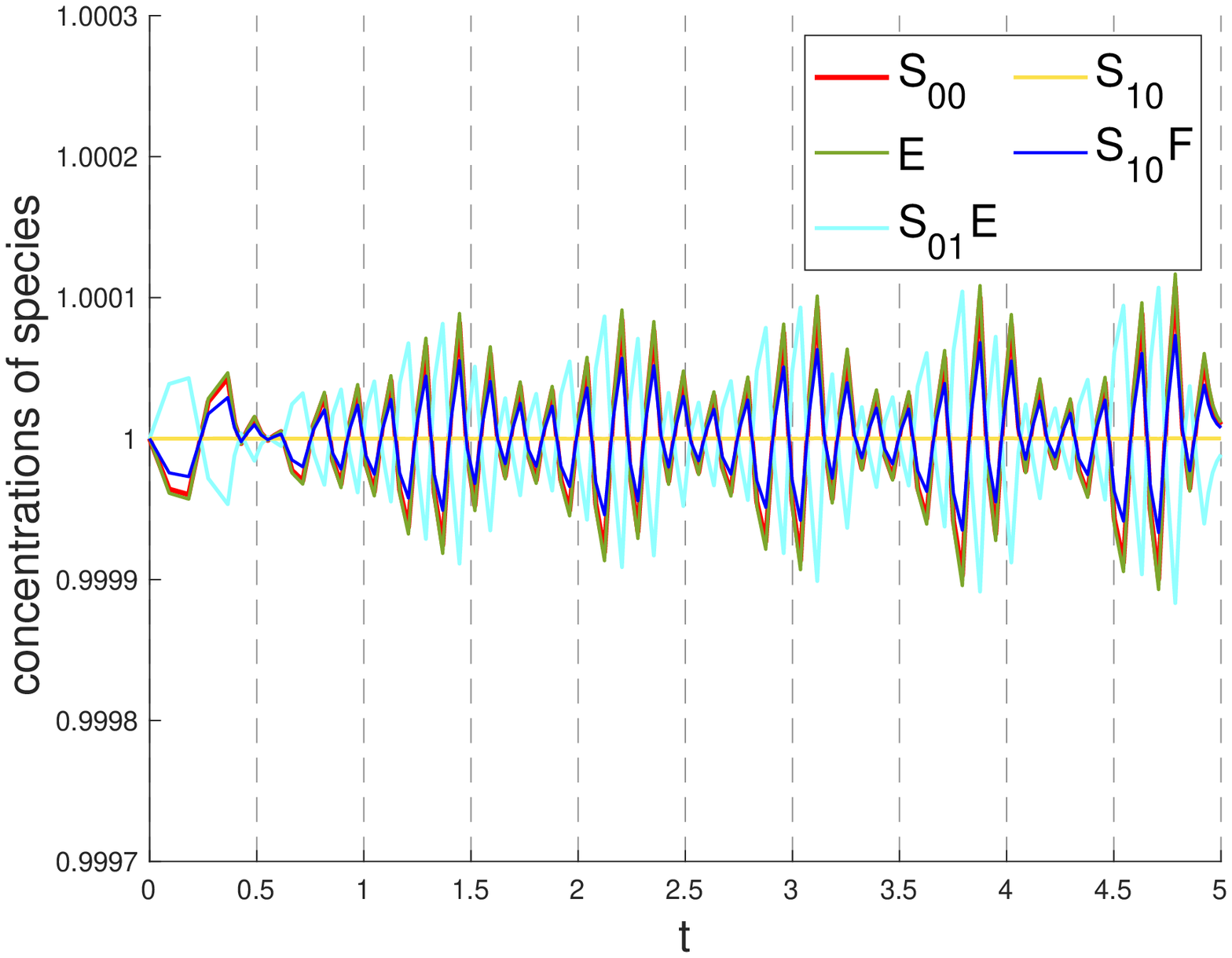}}
\hspace{0.01\linewidth}
\subfigure{
\includegraphics[width=0.45\linewidth]{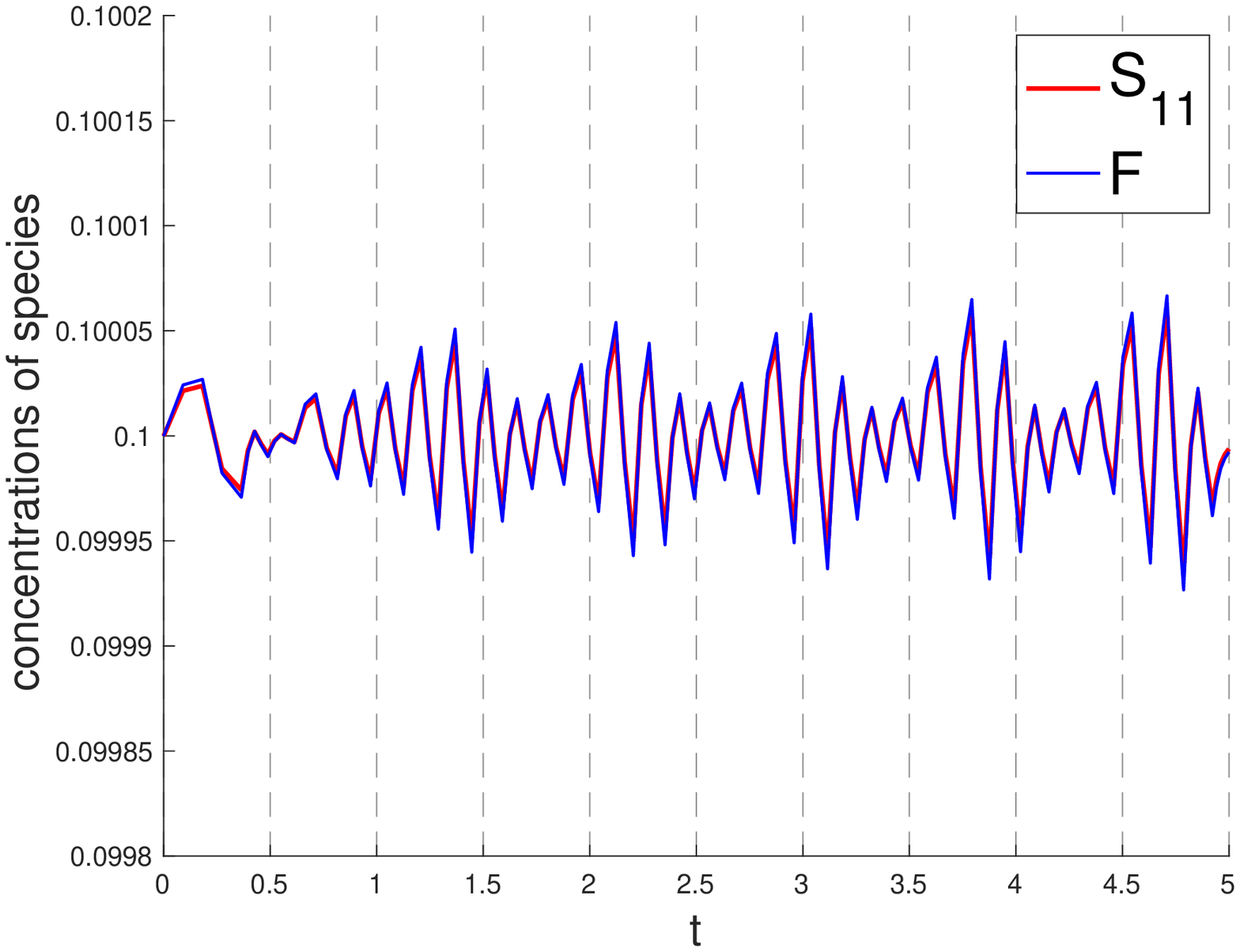}}
\caption{The network $\text{SubNet}_{\text{redERK}}$ gives rise to oscillations with respect to the species in Table \ref{tab:subredERK}. 
The left figure shows the changes of the concentrations for $S_{00},E,S_{01}E,S_{10}$ and $S_{10}F$.
The right figure shows the changes of the concentrations for $S_{11}$ and $F$.
The rate constants are given by $\tau(\tau_1)$ in \eqref{eq:tt1} where the value of $\tau_{1}$ is 3.08.
We set the initial concentrations to $y^{(0)}$ in \eqref{eq:y0}.}
\label{fig:subredERK}
\end{figure}

\noindent We observe that when the value of $\tau_1$ changes from 3.06 to 3.08, the real parts of a pair of conjugate-complex eigenvalues change from negative to positive, and become zero when $\tau_1=\tau^*_1$.
Moreover, all other eigenvalues remain with negative real parts.
So, by Definition \ref{def:kxHB}, $\text{SubNet}_{\text{redERK}}$ has a simple Hopf bifurcation at $(\tau^*,y^*)$ with respect to $\tau_1$.

Below, we show that  $\text{Net}_{\text{redERK}}$ also admits a simple Hopf bifurcation. 
Let $x_1,\ldots,x_{10}$ denote the concentrations of the species in $\text{Net}_{\text{redERK}}$, see Table \ref{tab:redERK}.
\begin{table}[ht]
\centering
\caption{Species concentrations of  $\text{Net}_{\text{redERK}}$.}
\begin{tabular}{cccccccccc}
\hline$x_{1}$ & $x_{2}$ & $x_{3}$ & $x_{4}$ & $x_{5}$ & $x_{6}$ & $x_{7}$ & $x_{8}$ & $x_{9}$ & $x_{10}$ \\
\hline$S_{00}$ & $E$ & $S_{01}E$ & $S_{11}$ & $S_{10}$ & $F$ & $S_{10}F$ & $S_{00}E$ & $S_{01}$ & $S_{11}F$ \\
\hline 
\end{tabular}
\label{tab:redERK}
\end{table}
Let $\kappa=(\kappa_1,\ldots,\kappa_{10})$, and let $x=(x_1,\ldots,x_{10})$.
Denote the system of ODEs  by $\dot{x}=f(\kappa,x)$.
Let ${\rm Jac}^{\rm red}_{f}(\kappa,x)$ be a reduced Jacobian matrix of $f(\kappa,x)$ with respect to $x$.
Let
% \begin{align}\label{kvalue:MAPK}
% &\kappa^*_1=200,\kappa^*_2=1,\kappa^*_3=200,\kappa^*_4=10,\kappa^*_5=2,\kappa^*_6=1,\kappa^*_7=20,\kappa^*_8=1, \notag \\
% &\kappa^*_9\approx1.00579,\kappa^*_{10}\approx0.00579,\kappa^*_{11}\approx1.00579,\kappa^*_{12}\approx0.00579,\kappa^*_{13}=1, \notag \\
% &\kappa^*_{14}=10,\kappa^*_{15}=1,\kappa^*_{16}=1,\kappa^*_{17}=1,\kappa^*_{18}=0.1.
% \end{align}
\begin{align*}
\kappa^*=(1.0776,\;0.10776,\;0.0776,\;0.0776,\;110,\;0.11,\;0.1,\;0.1,\;1,\;1).
% &\kappa^*_5=1.1,\;\kappa^*_6=0.011,\;\kappa^*_7=0.01,\;\kappa^*_8=0.001,\;\kappa^*_{9}=0.1,\;\kappa^*_{10}=0.1.
\end{align*}
% Let $\kappa^*_{4}$ be the positive real root of \textcolor{red}{(Here, the formula is too long to write)}.
% Note that $\kappa^*_{4}\approx0.109$.
% Let $\kappa^*_{3} = 0.00776$.
% Let $\kappa^*_1=0.01+0.1\kappa^*_{3}$, $\kappa^*_{2}=0.01+0.1\kappa^*_{3}$, and $\kappa^*_{4}=0.1\kappa^*_{3}$.
Pick a positive steady state corresponding to $\kappa^*$:
\begin{align*}
x^*=(1,\;1,\;1,\;0.1,\;1,\;0.1,\;1,\;10,\;10,\;10).
\end{align*}
Again, the point $(\kappa^*,x^*)$ is obtained by the condition (i) in Lemma \ref{lemma:Yang}.
The eigenvalues of ${\rm Jac}^{\rm red}_f(\kappa^*,x^*)$ are approximately
\begin{align*}
-21.997,\;-2.0640,\;-1.274,\;-1.127,\;-0.195,\;\textbf{0.0999}\mathrm{i},\;\textbf{0.0999}\mathrm{i}.
\end{align*}
As the value of $\kappa_{3}$ varies, in the neighborhood of $\kappa_3^*$, consider the curve of  rate constants
\begin{align}\label{eq:kk3subredERK}
\kappa(\kappa_{3}) = (1+\kappa_{3},\;0.1+0.1\kappa_{3},\;\kappa_{3},\;\kappa_{3},\;110,\;0.11,\;0.1,\;0.1,\;1,\;1).
\end{align}
Notice again that all points on the above curve admit a common positive steady state  $x^*$.
%Notice that $\kappa^*_{4} \approx 0.109$.
When $\kappa_{3}=0.05$, the eigenvalues of ${\rm Jac}^{\rm red}_f(\kappa(\kappa_{3}),x^*)$ are approximately
{\small \begin{align*}
-21.997,-2.0189,-1.249,-1.108,-0.176,\textbf{-0.0103+0.0841}\mathrm{i},\textbf{-0.0103-0.0841}\mathrm{i}.
\end{align*}
}When $\kappa_{3}=0.1$, the eigenvalues of ${\rm Jac}^{\rm red}_f(\kappa(\kappa_{3}),x^*)$ are approximately
{\small \begin{align*}
-21.997,-2.101,-1.300,-0.138,-0.208,\textbf{0.00654+0.110}\mathrm{i},\textbf{0.00654-0.110}\mathrm{i}.
\end{align*}
}In this case, a nearby oscillation is generated around the positive steady state $x^*$, see figure \ref{fig:redERK}, where we set the initial concentrations to 
\begin{align}\label{eq:x0redERK}
x^{(0)}=(1,\;1,\;1,\;0.0999999,\;1,\;0.0999999,\;1,\;10,\;10,\;10).
\end{align} 
Notice that here,  we see that the capacity of $\text{SubNet}_{\text{redERK}}$ for oscillations is inherited from  $\text{Net}_{\text{redERK}}$ by the facts that $y^*$ exactly gives the first seven coordinates of $x^*$, and $y^{(0)}$ exactly gives the first seven coordinates of $x^{(0)}$. 
\begin{figure}[htbp]
\centering
\subfigure{
\includegraphics[width=0.45\linewidth]{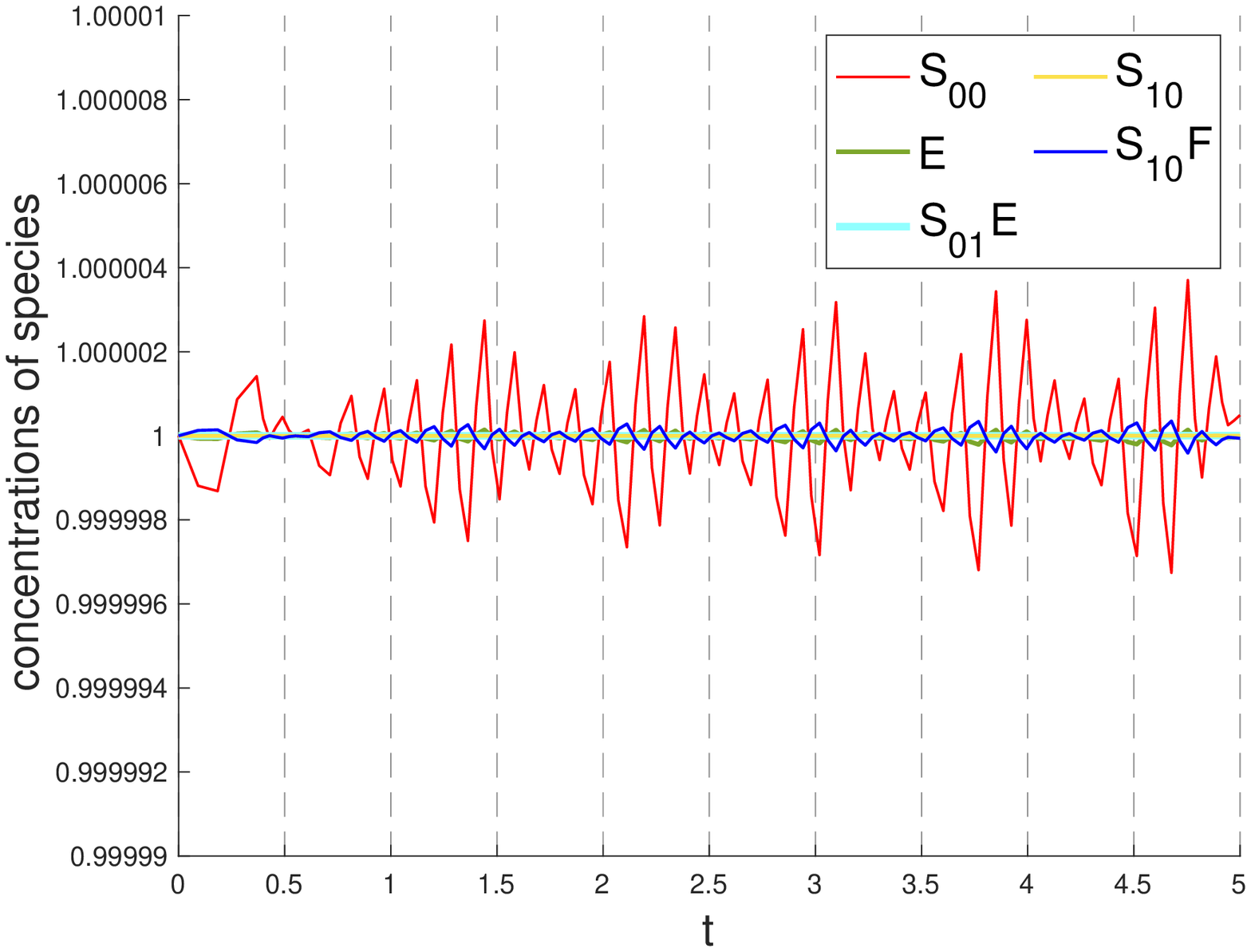}}
\hspace{0.01\linewidth}
\subfigure{
\includegraphics[width=0.45\linewidth]{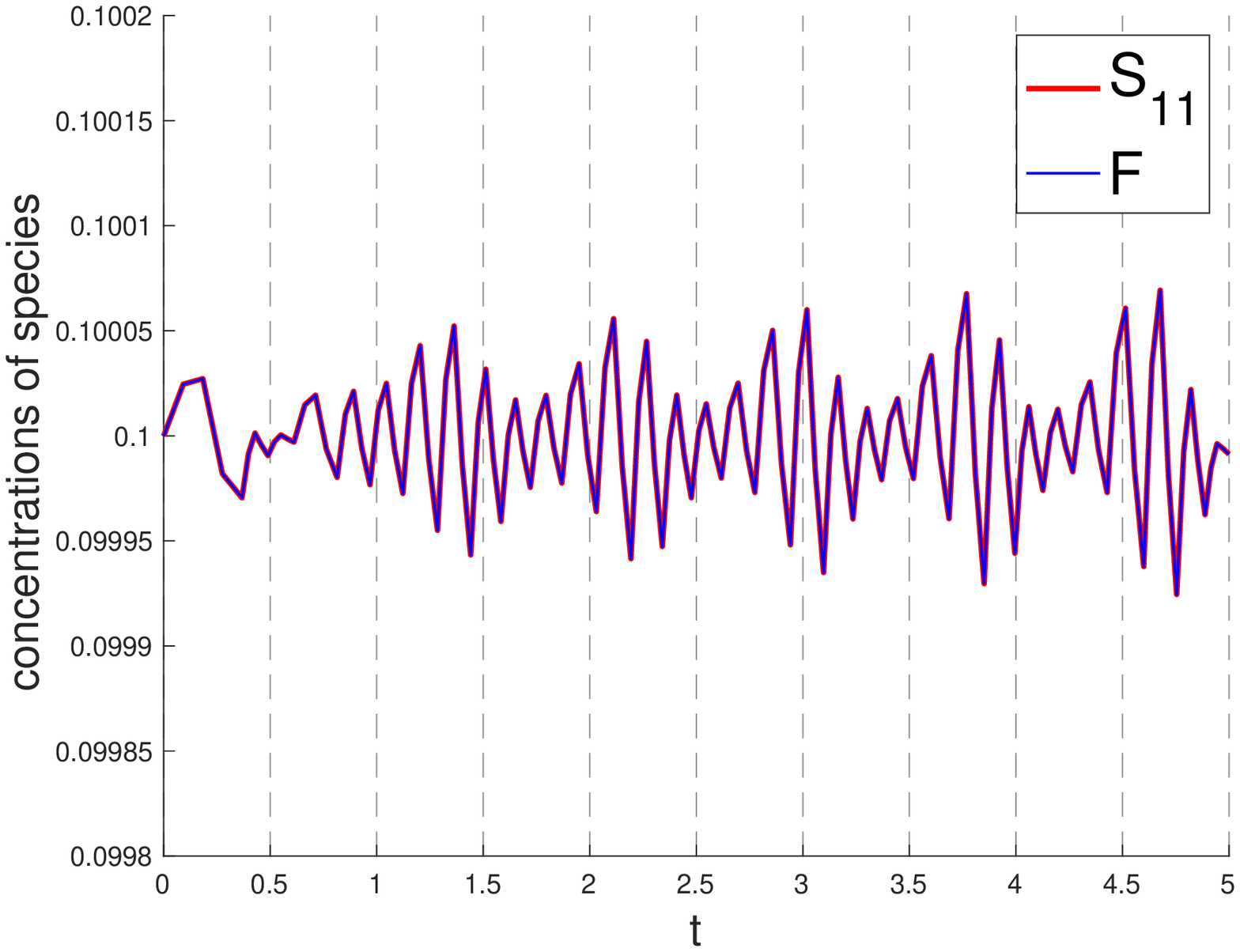}}
\caption{The network $\text{Net}_{\text{redERK}}$ gives rise to oscillations with respect to the species in Table \ref{tab:redERK}. 
The left figure shows the changes of the concentrations for $S_{00},E,S_{01}E,S_{10}$ and $S_{10}F$.
The right figure shows the changes of the concentrations for $S_{11}$ and $F$.
The rate constants are given by $\kappa(\kappa_{3})$ in \eqref{eq:kk3subredERK} where the value of $\kappa_{3}$ is 0.1.
We set the initial concentrations to $x^{(0)}$ in \eqref{eq:x0redERK}.}
\label{fig:redERK}
\end{figure}

\noindent We observe that when the value of $\kappa_{4}$ changes from 0.05 to 0.1, the real parts of a pair of conjugate-complex eigenvalues change from negative to positive, and become zero when $\kappa_{3}=\kappa_{3}^*$.
Moreover, all other eigenvalues remain with negative real parts.
So, by Definition \ref{def:kxHB}, $\text{Net}_{\text{redERK}}$ has a simple Hopf bifurcation at $(\kappa^*,x^*)$ with respect to $\kappa_{4}$.

\section{Transformation of the Jacobian matrix}\label{sec:trans}
According to Definition \ref{def:kxHB},
in order to determine if a network admits a Hopf bifurcation, we need to analyze 
the eigenvalues of a reduced Jacobian matrix 
 for all possible values of rate constants and their corresponding
 nondegenerate 
 positive steady states.
 Notice that 
 a reduced Jacobian matrix and the Jacobian matrix  share the same nonzero eigenvalues. 
In this section, 
we first reparameterize the Jacobian matrix using extreme rays (here, the idea is motivated by the approach  introduced in \cite{Conradi2019}), 
and then we give a criterion for precluding Hopf bifurcations by the transformed Jacobian matrices (see Lemma \ref{lemma:preHB}).
We also show how to use the transformed Jacobian matrix to detect Hopf bifurcations by an example (see Example \ref{ex:details}). 
%The concept of extreme rays for chemical reaction networks was introduced by Schuster and Hilgetag [ref here]. 

Consider a network $G$ \eqref{eq:network} with a stoichiometric matrix $\mathcal{N}$ and a reactant matrix $\mathcal{Y}$. 
Recall the system $\dot{x}=f(\kappa,x)=\mathcal{N} v(\kappa, x)$ defined in \eqref{sys:f}. 
% Let $x^* \in \mathbb{R}_{> 0}^s$ be a positive steady state and the equality
% \begin{align}\label{eq:linearprob}
% \mathcal{N} v(\kappa, x^*) = {\bf 0}, \;v(\kappa, x^*) \in \mathbb{R}^m_{> 0}
% \end{align}
% holds.
Note that for any $k \in \{1,\ldots,s\}$, and for any $\ell \in \{1,\ldots,m\}$, the $(k,\ell)$-entry  of the Jacobian matrix of $v(\kappa, x)$ with respect to $x$ is
\begin{align*}
\frac{\partial v_{k}(\kappa, x)}{\partial x_{\ell}}=v_{k}\left(\kappa, x\right) \alpha_{\ell k} \frac{1}{x_{\ell}}.
\end{align*}
For any vector $y:=(y_1, \ldots, y_n) \in \R^n$, we denote by $\diag(y)$ the $n \times n$ diagonal matrix with $y_i$ on the diagonal.
Then, the Jacobian matrix of $f(\kappa,x)$ with respect to $x$ (denoted by ${\rm Jac}_f(\kappa,x)$)  can be written as 
\begin{align}\label{eq:jacfkx}
{\rm Jac}_f(\kappa,x)=\mathcal{N}\diag  (v(\kappa, x))\mathcal{Y}^\top \diag (\frac{1}{x}),
\end{align}
where $\frac{1}{x}:=(\frac{1}{x_1},\ldots,\frac{1}{x_s})^\top$.

% For any vector $\omega \in \mathbb{R}^m$, we denote $P(\omega) := \{i:\omega_i>0, 1\leq i \leq m\}$ the indices set of positive components of $\omega$. 

% \begin{proposition}\label{prop:extremeray}
% Given a stoichiometric matrix $\mathcal{N}$, a ray $r$ of the flux cone $F(\mathcal{N})$ (see definition in \eqref{eq:fluxcone}) is called a extreme ray if and only if the following three conditions are fulfilled
% \begin{itemize}
% \item[(i)] $\mathcal{N}r=0$,
% \item[(ii)] $r_i \geq 0$, for $1\leq i \leq m$,
% \item[(iii)] rank $(\mathcal{N}_{P(r)}) = \lvert P(r) \rvert - 1$,
% \end{itemize}
% where $P(r) := \{i:r_i>0, 1\leq i \leq m\}$, and $\mathcal{N}_{P(r)}$ denotes the matrix whose columns are the columns of $\mathcal{N}$ indexed by elements of $P(r)$.
% \end{proposition}

Next, we consider a transformation of the Jacobian matrix ${\rm Jac}_f(\kappa,x)$ evaluated at the positive steady states. 
Let $R^{(1)},\ldots,R^{(t)} \in \R^m_{\geq 0}$ be the extreme rays of the flux cone $F(\mathcal{N})$.
We introduce the new variables $h_1,\ldots,h_s,\lambda_1,\ldots,\lambda_t$. 
Let $h=(h_1,\ldots,h_s)$, and let $\lambda=(\lambda_1,\ldots,\lambda_t)$.
We define a new matrix in terms of $h$ and $\lambda$:
\begin{align}\label{eq:defjhl}
J(h,\lambda)&:= \mathcal{N}\diag  (\sum_{i=1}^{t} \lambda_{i} R^{(i)})\mathcal{Y}^\top \diag (h).
\end{align}

%\textcolor{red}{
%The main task of this section is to show  
%the set of Jacobian matrices ${\rm Jac}_f(\kappa,x)$ for any $\kappa \in \mathbb{R}^m_{> 0}$ and corresponding positive steady states $x^* \in \mathbb{R}^s_{> 0}$ agrees with the set of matrices $J(h,\lambda)$, for any $h \in \mathbb{R}^s_{> 0}$ and for any $\lambda \in \Lambda$.
%}

\begin{lemma}\label{prop:kxtohl}
Let $G$ be a network (see \eqref{eq:network}) with a stoichiometric matrix $\mathcal{N}$. 
Denote the system of ODEs by $\dot{x}=f(\kappa,x)=\mathcal{N}v(\kappa,x)$ as in \eqref{sys:f}.
Let ${\rm Jac}_f(\kappa,x)$ be the Jacobian matrix of $f(\kappa,x)$ with respect to $x$.
Let $J(h,\lambda) \in \Q[h,\lambda]^{s \times s}$ be the matrix corresponding to $G$ defined in \eqref{eq:defjhl}. 
For any $\kappa^* \in \R^m_{>0}$ and corresponding positive steady state $x^* \in \R^s_{>0}$, there exist $h^* \in \mathbb{R}_{> 0}^s$ and $\lambda^* \in \R^t_{\geq 0}$ such that $J(h^*,\lambda^*)={\rm Jac}_f(\kappa^*,x^*)$.
\end{lemma}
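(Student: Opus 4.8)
The plan is to exploit the structure of the Jacobian in the factored form \eqref{eq:jacfkx}, namely
\[
{\rm Jac}_f(\kappa^*,x^*)=\mathcal{N}\diag(v(\kappa^*,x^*))\mathcal{Y}^\top\diag\Bigl(\tfrac{1}{x^*}\Bigr),
\]
and to match it term-by-term with the definition \eqref{eq:defjhl} of $J(h,\lambda)$. The natural guess is to set $h^*:=\tfrac{1}{x^*}=(\tfrac{1}{x^*_1},\dots,\tfrac{1}{x^*_s})$, which lies in $\mathbb{R}^s_{>0}$ since $x^*\in\mathbb{R}^s_{>0}$. With this choice the equation $J(h^*,\lambda^*)={\rm Jac}_f(\kappa^*,x^*)$ reduces, after cancelling the common right factor $\diag(h^*)$ (which is invertible) and the common left factor $\mathcal{N}\,(\cdot)\,\mathcal{Y}^\top$, to finding $\lambda^*\in\mathbb{R}^t_{\geq 0}$ with
\[
\mathcal{N}\diag\Bigl(\textstyle\sum_{i=1}^t\lambda^*_i R^{(i)}\Bigr)\mathcal{Y}^\top
=\mathcal{N}\diag(v(\kappa^*,x^*))\mathcal{Y}^\top.
\]
The cleanest sufficient condition is to produce $\lambda^*$ with $\sum_{i=1}^t\lambda^*_i R^{(i)}=v(\kappa^*,x^*)$ as vectors in $\mathbb{R}^m$; then the displayed matrix identity is immediate.

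The key step is therefore to show $v(\kappa^*,x^*)\in F(\mathcal{N})$, i.e. that $v(\kappa^*,x^*)$ lies in the flux cone of $\mathcal{N}$. This is exactly where the steady-state hypothesis enters: since $x^*$ is a positive steady state, $f(\kappa^*,x^*)=\mathcal{N}v(\kappa^*,x^*)={\bf 0}$, so $\mathcal{N}v(\kappa^*,x^*)={\bf 0}$. Moreover, by \eqref{eq:vj} each coordinate $v_j(\kappa^*,x^*)=\kappa^*_j\prod_i (x^*_i)^{\alpha_{ij}}$ is strictly positive because $\kappa^*\in\mathbb{R}^m_{>0}$ and $x^*\in\mathbb{R}^s_{>0}$; in particular $v(\kappa^*,x^*)\in\mathbb{R}^m_{\geq 0}$ and is nonzero. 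Hence $v(\kappa^*,x^*)\in F(\mathcal{N})$. By the decomposition of flux cone elements into nonnegative combinations of extreme rays recalled in Section~\ref{sec:rays}, there exist $\lambda^*_1,\dots,\lambda^*_t\geq 0$ with $v(\kappa^*,x^*)=\sum_{i=1}^t\lambda^*_i R^{(i)}$. Setting $\lambda^*:=(\lambda^*_1,\dots,\lambda^*_t)\in\mathbb{R}^t_{\geq 0}$ completes the construction.

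Finally I would assemble the pieces: with $h^*=\tfrac{1}{x^*}\in\mathbb{R}^s_{>0}$ and $\lambda^*$ as above, substitute into \eqref{eq:defjhl} to get
\[
J(h^*,\lambda^*)=\mathcal{N}\diag\Bigl(\textstyle\sum_{i=1}^t\lambda^*_i R^{(i)}\Bigr)\mathcal{Y}^\top\diag(h^*)
=\mathcal{N}\diag(v(\kappa^*,x^*))\mathcal{Y}^\top\diag\Bigl(\tfrac{1}{x^*}\Bigr)
={\rm Jac}_f(\kappa^*,x^*),
\]
using \eqref{eq:jacfkx} for the last equality. This is the desired conclusion. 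There is essentially no obstacle here: the statement is a bookkeeping lemma, and the only substantive ingredient is the observation that the steady-state condition forces $v(\kappa^*,x^*)$ into the flux cone, together with the standard extreme-ray decomposition. The mild point to state carefully is that $v(\kappa^*,x^*)$ need not be an extreme ray itself, so one genuinely needs the full nonnegative-combination representation rather than just equivalence to a single $R^{(i)}$; and one should note that $\lambda^*$ may have zero components (hence $\lambda^*\in\mathbb{R}^t_{\geq 0}$, not $\mathbb{R}^t_{>0}$), which is why the lemma is stated with $\lambda^*\in\mathbb{R}^t_{\geq 0}$.
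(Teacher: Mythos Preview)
Your proof is correct and follows essentially the same approach as the paper: set $h^*=\tfrac{1}{x^*}$, use the steady-state condition to place $v(\kappa^*,x^*)$ in the flux cone $F(\mathcal{N})$, invoke the extreme-ray decomposition to obtain $\lambda^*\in\mathbb{R}^t_{\geq 0}$, and then verify $J(h^*,\lambda^*)={\rm Jac}_f(\kappa^*,x^*)$ via \eqref{eq:jacfkx} and \eqref{eq:defjhl}. Your write-up is slightly more detailed than the paper's (e.g., the explicit observation that $v(\kappa^*,x^*)\in\mathbb{R}^m_{>0}$), but there is no substantive difference.
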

% Below, we first show that for any $\kappa^* \in \R^m_{>0}$, and for any positive steady state $x^* \in \R^s_{>0}$ corresponding to $\kappa^*$, there exist $h^* \in \mathbb{R}_{> 0}^s$ and $\lambda^* \in \R^t_{\geq 0}$ such that $J(h^*,\lambda^*)={\rm Jac}_f(\kappa^*,x^*)$.
\begin{proof}
For any $\kappa^* \in \R^m_{>0}$ and for any  corresponding positive steady state $x^* \in \R^s_{>0}$, we have the equality
\begin{align*}
\mathcal{N} v(\kappa^*, x^*) = {\bf 0}.
\end{align*}
So, we have $v(\kappa^*, x^*) \in F(\mathcal{N})$.  
Let $R^{(1)},\ldots,R^{(t)} \in \R^m_{\geq 0}$ be the extreme rays of $F(\mathcal{N})$.
Then, there exist $\lambda_{1}^*,\ldots,\lambda_{t}^* \in {\mathbb R}_{\geq 0}$ such that
%we can rewrite $v(\kappa, x^*)$ as a nonnegative linear combination of $\{R^{(1)},\ldots,R^{(t)}\}$
\begin{align}\label{eq:v}
v(\kappa^*, x^*)=\sum_{i=1}^{t} \lambda_{i}^* R^{(i)}.
\end{align}
Let $h^*=(\frac{1}{x^*_1},\ldots,\frac{1}{x^*_s})$, and let $\lambda^*=(\lambda^*_1,\ldots,\lambda^*_t)$.
%It is direct to check $h^* \in \mathbb{R}_{> 0}^s$ and $\lambda^* \in \R^t_{\geq 0}$.
By \eqref{eq:jacfkx}, \eqref{eq:defjhl}, and \eqref{eq:v}, we have
{\footnotesize
$J(h^*,\lambda^*) = \mathcal{N}\diag  (\sum_{i=1}^{t} \lambda_{i}^* R^{(i)})\mathcal{Y}^\top \diag (h^*)=\mathcal{N}\diag  (v(\kappa^*, x^*))\mathcal{Y}^\top \diag (\frac{1}{x^*})={\rm Jac}_f(\kappa^*,x^*)$.
}
\end{proof}

\begin{lemma}\label{lemma:preHB}
Let $G$ be a network (see \eqref{eq:network}) with a stoichiometric matrix $\mathcal{N}$. 
Denote the system of ODEs by $\dot{x}=f(\kappa,x)$ as in \eqref{sys:f}. 
Let $J(h,\lambda) \in \Q[h,\lambda]^{s \times s}$ be the matrix corresponding to $G$ defined in \eqref{eq:defjhl}. 
Let $r={\rm rank}({\mathcal N})$. 
We define
\begin{align}\label{def:qhlz}
q(h,\lambda;z):=\frac{1}{z^{s-r}}\operatorname{det}(z I-J(h,\lambda))=z^{r}+b_{1}(h,\lambda) z^{r-1}+\cdots+b_{r}(h,\lambda).
\end{align}
Let $H_{r-1}(h,\lambda)$ be the $(r-1)$-th Hurwitz matrix of $q(h,\lambda;z)$.
% Let $H_{1}(h,\lambda),\ldots,H_{r}(h,\lambda)$ be the Hurwitz matrices of . 
If for any $h^* \in \mathbb{R}_{> 0}^s$ and for any $\lambda^* \in \mathbb{R}_{\geq 0}^t$, $\operatorname{det}(H_{r-1}(h^*,\lambda^*))>0$ whenever $b_{r}(h^*,\lambda^*) \neq 0$ , then the network $G$ does not admit a Hopf bifurcation.
\end{lemma}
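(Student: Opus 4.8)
The plan is to argue by contradiction, using the reparametrization of Lemma~\ref{prop:kxtohl} to pass from rate constants and steady states to the variables $(h,\lambda)$, and then invoking the purely-imaginary-root criterion of Lemma~\ref{prop:polyimroots}. Suppose $G$ admits a Hopf bifurcation. By Definition~\ref{def:kxHB}, there exist $\kappa^*\in\R^m_{>0}$ and a corresponding nondegenerate positive steady state $x^*\in\R^s_{>0}$ such that ${\rm Jac}^{\rm red}_f(\kappa^*,x^*)$ has a single pair of purely imaginary eigenvalues while all other eigenvalues have nonzero real parts. Write $\phi(\kappa,x;z)=z^{r}+a_1(\kappa,x)z^{r-1}+\cdots+a_r(\kappa,x)$ for its characteristic polynomial, as in Remark~\ref{rmk:red}.

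First I would set up the dictionary between the two parametrizations. By Lemma~\ref{prop:kxtohl}, there exist $h^*\in\R^s_{>0}$ and $\lambda^*\in\R^t_{\geq0}$ with $J(h^*,\lambda^*)={\rm Jac}_f(\kappa^*,x^*)$. Taking characteristic polynomials of both sides and using the factorization $\operatorname{det}(zI-{\rm Jac}_f(\kappa^*,x^*))=z^{s-r}\,\phi(\kappa^*,x^*;z)$ recalled in Remark~\ref{rmk:red} (which is legitimate because $\operatorname{rank}(J(h,\lambda))\leq\operatorname{rank}(\mathcal{N})=r$, so $q$ really is a polynomial), I obtain $q(h^*,\lambda^*;z)=\phi(\kappa^*,x^*;z)$. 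Consequently $b_i(h^*,\lambda^*)=a_i(\kappa^*,x^*)$ for every $i$, and the $(r-1)$-th Hurwitz matrix $H_{r-1}(h^*,\lambda^*)$ of $q(h^*,\lambda^*;z)$ coincides with the $(r-1)$-th Hurwitz matrix of $\phi(\kappa^*,x^*;z)$.

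Next I would read off the two algebraic facts that the Hopf condition forces on $\phi(\kappa^*,x^*;z)$. Since $x^*$ is nondegenerate, $\operatorname{det}({\rm Jac}^{\rm red}_f(\kappa^*,x^*))\neq0$, hence $a_r(\kappa^*,x^*)\neq0$; and since ${\rm Jac}^{\rm red}_f(\kappa^*,x^*)$ has a pair of purely imaginary eigenvalues and all other eigenvalues with nonzero real parts, Lemma~\ref{prop:polyimroots} applied to $\phi(\kappa^*,x^*;z)$ (which has leading coefficient $1>0$) yields $\operatorname{det}(H_{r-1})=0$. Translating through the previous paragraph, $\operatorname{det}(H_{r-1}(h^*,\lambda^*))=0$ while $b_r(h^*,\lambda^*)=a_r(\kappa^*,x^*)\neq0$. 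This contradicts the hypothesis of the lemma, which requires $\operatorname{det}(H_{r-1}(h^*,\lambda^*))>0$ whenever $b_r(h^*,\lambda^*)\neq0$. Therefore $G$ does not admit a Hopf bifurcation.

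The argument is essentially bookkeeping once Lemma~\ref{prop:kxtohl} is available; the one spot that needs care is the reduction step, that is, checking that dividing out the factor $z^{s-r}$ is compatible with the Hurwitz-matrix computation, which is why I would keep everything phrased in terms of the degree-$r$ polynomials $q$ and $\phi$ rather than the full degree-$s$ characteristic polynomial. I expect no analytic difficulty: the transversality clause of Definition~\ref{def:kxHB} plays no role here, since Lemma~\ref{prop:polyimroots} concerns only the purely-imaginary-root structure of a fixed polynomial, so a necessary condition for a Hopf bifurcation is already enough to derive the contradiction.
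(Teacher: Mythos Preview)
Your proof is correct and follows essentially the same approach as the paper: both use Lemma~\ref{prop:kxtohl} to transfer from $(\kappa^*,x^*)$ to $(h^*,\lambda^*)$, identify the coefficients of $q$ and $\phi$, and then invoke Lemma~\ref{prop:polyimroots}. The only cosmetic difference is that you phrase the argument by contradiction whereas the paper argues directly (showing $\operatorname{det}(W_{r-1}(\kappa^*,x^*))>0$ whenever $a_r(\kappa^*,x^*)\neq 0$ and then citing Lemma~\ref{prop:polyimroots} to preclude purely imaginary eigenvalue pairs).
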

\begin{proof}
By Lemma \ref{prop:kxtohl}, for any $\kappa^* \in \R^m_{>0}$, and for any positive steady state $x^* \in \R^s_{>0}$ corresponding to $\kappa^*$, there exist $h^* \in \mathbb{R}_{> 0}^s$ and $\lambda^* \in \R^t_{\geq 0}$ such that $J(h^*,\lambda^*)={\rm Jac}_f(\kappa^*,x^*)$.
Therefore, the characteristic polynomial of $J(h^*,\lambda^*)$ is equal to the characteristic polynomial of ${\rm Jac}_f(\kappa^*,x^*)$.
Let $\phi(\kappa,x;z)$ be the characteristic polynomial of ${\rm Jac}^{\text{red}}_f(\kappa,x)$ (see \eqref{def:reduceJac}) and we write
\begin{align*}
\phi(\kappa,x;z) = z^{r}+a_{1}(\kappa,x) z^{r-1}+\cdots+a_{r}(\kappa,x).
\end{align*}
Note that $z^{s-r}\phi(\kappa,x;z)$ is the characteristic polynomial of ${\rm Jac}_f(\kappa,x)$.
So, we have
\begin{align}\label{eq:bi=ai}
b_{i}(h^*,\lambda^*) = a_{i}(\kappa^*,x^*),\text{ for any }1 \le i \le r.
\end{align}
% Notice that in \eqref{eq:phikxz}, $q_{\kappa,x}(z)=\frac{1}{z^{s-r}}\operatorname{det}(z I-{\rm Jac}_f(\kappa,x))$.
% So, we have
% \begin{align}\label{def:qhlz=qkxz}
% q(h^*,\lambda^*;z)=\frac{1}{z^{s-r}}\operatorname{det}(z I-J(h^*,\lambda^*))=\frac{1}{z^{s-r}}\operatorname{det}(z I-{\rm Jac}_f(\kappa^*,x^*))=q(\kappa^*,x^*;z).
% \end{align}
% For any positive integer $i\;(i\leq r)$, let $W_i(\kappa,x)$ be the $i$-th Hurwitz matrix of $\phi(\kappa,x;z)$.
Let $W_{r-1}(\kappa,x)$ be the $(r-1)$-th Hurwitz matrix of $\phi(\kappa,x;z)$.
% Let $W_{1}(\kappa,x),\ldots,W_{r}(\kappa,x)$ be the Hurwitz matrices of . 
By Definition \ref{def:hurwitz} and by \eqref{eq:bi=ai}, we have 
\begin{align*}
\operatorname{det}(H_{r-1}(h^*,\lambda^*))=\operatorname{det}(W_{r-1}(\kappa^*,x^*)).
\end{align*}
Then, by the hypothesis that $\operatorname{det}(H_{r-1}(h^*,\lambda^*))>0$ whenever $b_{r}(h^*,\lambda^*) \neq 0$, we have $\operatorname{det}(W_{r-1}(\kappa^*,x^*))>0$ whenever $a_{r}(\kappa^*,x^*) \neq 0$.
By Lemma \ref{prop:polyimroots}, there are no pure imaginary pairs of eigenvalues in ${\rm Jac}^{\text{red}}_f(\kappa^*,x^*)$.
Then, by Definition \ref{def:kxHB}, the network $G$ does not admit a Hopf bifurcation.
\end{proof}

\begin{lemma}\label{lemma:hltokx}
Let $G$ be a network (see \eqref{eq:network}) with a stoichiometric matrix $\mathcal{N}$.
Denote the system of ODEs by $\dot{x}=f(\kappa,x)$ as in \eqref{sys:f}.
Let ${\rm Jac}_f(\kappa,x)$ be the Jacobian matrix of $f(\kappa,x)$ with respect to $x$.
Let $R^{(1)},\ldots,R^{(t)} \in \R^m_{\geq 0}$ be the extreme rays of $F(\mathcal{N})$.
Let $J(h,\lambda) \in \Q[h,\lambda]^{s \times s}$ be the matrix corresponding to $G$ defined in \eqref{eq:defjhl}. 
For any $h^* \in \mathbb{R}_{> 0}^s$, and for any $\lambda^* \in \R^t_{\geq 0}$ such that $\sum_{i=1}^{t} \lambda_{i}^* R^{(i)} \in \R^m_{>0}$, there exist $\kappa^* \in \R^m_{>0}$ and a corresponding positive steady state $x^* \in \R^s_{>0}$ such that ${\rm Jac}_f(\kappa^*,x^*)=J(h^*,\lambda^*)$.
\end{lemma}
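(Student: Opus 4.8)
The plan is to construct $x^*$ and $\kappa^*$ explicitly, essentially reversing the substitution used in the proof of Lemma~\ref{prop:kxtohl}. First I would set $x^* := (1/h^*_1, \ldots, 1/h^*_s)$, which lies in $\R^s_{>0}$ because $h^* \in \R^s_{>0}$; note that then $\diag(1/x^*) = \diag(h^*)$. Next, writing $v^* := \sum_{i=1}^t \lambda^*_i R^{(i)}$, I would record two facts: (a) $v^* \in \R^m_{>0}$, which is exactly the hypothesis $\sum_{i=1}^{t}\lambda_i^* R^{(i)} \in \R^m_{>0}$; and (b) $\mathcal{N} v^* = {\bf 0}$, which holds because each extreme ray $R^{(i)}$ belongs to the flux cone $F(\mathcal{N})$, so $\mathcal{N} R^{(i)} = {\bf 0}$, and hence any nonnegative combination is annihilated by $\mathcal{N}$.

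The key step is to choose $\kappa^*$ so that the mass-action flux vector $v(\kappa^*, x^*)$ equals $v^*$. Since $x^* \in \R^s_{>0}$, for every $j \in \{1,\ldots,m\}$ the monomial $\prod_{i=1}^s (x^*_i)^{\alpha_{ij}}$ is a strictly positive real number, so I can define
\[
\kappa^*_j := \frac{v^*_j}{\prod_{i=1}^s (x^*_i)^{\alpha_{ij}}}, \qquad j = 1, \ldots, m,
\]
and each $\kappa^*_j > 0$ because $v^*_j > 0$ by fact (a). With this choice, \eqref{eq:vj} gives $v_j(\kappa^*, x^*) = \kappa^*_j \prod_{i=1}^s (x^*_i)^{\alpha_{ij}} = v^*_j$ for all $j$, i.e., $v(\kappa^*, x^*) = v^*$. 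Combined with fact (b), $f(\kappa^*, x^*) = \mathcal{N} v(\kappa^*, x^*) = \mathcal{N} v^* = {\bf 0}$, so $x^*$ is a positive steady state corresponding to $\kappa^*$.

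Finally, I would substitute back into the factored form \eqref{eq:jacfkx} of the Jacobian:
\[
{\rm Jac}_f(\kappa^*, x^*) = \mathcal{N} \diag(v(\kappa^*, x^*)) \mathcal{Y}^\top \diag(1/x^*) = \mathcal{N} \diag(v^*) \mathcal{Y}^\top \diag(h^*) = J(h^*, \lambda^*),
\]
using $v(\kappa^*,x^*) = v^* = \sum_{i=1}^t \lambda^*_i R^{(i)}$ and $\diag(1/x^*) = \diag(h^*)$, and comparing with the definition \eqref{eq:defjhl} of $J(h,\lambda)$. This completes the argument. There is essentially no serious obstacle here; the only point to watch is that the hypothesis $\sum_{i=1}^{t} \lambda^*_i R^{(i)} \in \R^m_{>0}$ is used in an indispensable way --- it guarantees both that the denominators above pair with strictly positive numerators (so that $\kappa^* \in \R^m_{>0}$ rather than merely $\R^m_{\geq 0}$) and, more conceptually, that the prescribed flux vector is realizable by a genuine \emph{positive} steady state.
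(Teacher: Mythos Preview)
Your proof is correct and follows essentially the same approach as the paper: the paper sets $x^* = (1/h^*_1,\ldots,1/h^*_s)$ and defines $\kappa^*_j = (h^*)^{\psi(j)} v^*_j$ (where $\psi(j)$ is the $j$-th column of $\mathcal{Y}$), which is exactly your $\kappa^*_j = v^*_j / \prod_i (x^*_i)^{\alpha_{ij}}$ rewritten via $x^*_i = 1/h^*_i$. The remaining verifications---that $v(\kappa^*,x^*)=v^*$, that $x^*$ is a positive steady state, and that the Jacobians agree---match the paper line for line.
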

\begin{proof}
Let $x^* = (\frac{1}{h_1^*},\ldots, \frac{1}{h_s^*})$.
Denote by $\psi(i)$ the $i$-th column of ${\mathcal Y}$.
For any two vectors $a,b \in \mathbb R^{n}$, we denote by $a^{b}$ the product $\prod_{i=1}^{n} a_{i}^{b_{i}}$. 
Let 
\begin{align}\label{eq:kappa*}
\kappa^* = \diag \left( (h^*)^{\psi(1)},\ldots,(h^*)^{\psi(m)} \right) \sum_{i=1}^{t} \lambda_{i}^* R^{(i)}.
\end{align}
By the fact that $h^* \in \mathbb{R}_{> 0}^s$, for any $i \in \{1,\ldots,m\}$, we have $(h^*)^{\psi(i)}>0$.
Recall that $\sum_{i=1}^{t} \lambda_{i}^* R^{(i)} \in \R^m_{>0}$.
So, by \eqref{eq:kappa*}, we have $\kappa^* \in \R^m_{>0}$.
% Recall that all components of $v(\kappa, x^*)=\sum_{i=1}^{t} \lambda_{i} R^{(i)}$ are strictly positive, and this impose some restrictions on the possible values of $\lambda$. So, we collect all such $\lambda$ in the set
% \begin{align*}
% \Lambda := \{\lambda \in \R^t_{\geq 0}:\sum_{i=1}^{t} \lambda_{i} R^{(i)} \in \R^m_{> 0}\}.
% \end{align*}
% Given $h^* \in \mathbb{R}_{> 0}^s$ and $\lambda^* \in \Lambda$, let $x^* = \frac{1}{h^*}$ and let
% \begin{align}\label{eq:kappa}
% \kappa = \diag \left( \phi(h^*) \right) \sum_{i=1}^{t} \lambda_{i}^* R^{(i)},
% \end{align}
% where $\phi(\cdot)$ is defined in \eqref{eq:phix}. 
% We have $\kappa \in \mathbb{R}_{> 0}^m$ since $\phi(h^*) \in \mathbb{R}_{> 0}^m$ and $\sum_{i=1}^{t} \lambda_{i}^* R^{(i)} \in \R^m_{> 0}$. For any $i \in \{1,\ldots,m\}$, we have
% \begin{align}\label{eq:phiphi}
% \phi(x^*)_i \phi(h^*)_i=(x^*)^{y_i} (h^*)^{y_i}=(x^*h^*)^{y_i}={\bf 1}^{y_i}=1.
% \end{align}
By \eqref{eq:v1...vm}, \eqref{eq:vj}, and \eqref{eq:kappa*}, we have
\begin{align}\label{eq:vk*x*}
v(\kappa^*, x^*) &= (\kappa_1^*(x^*)^{\psi(1)},\ldots,\kappa_m^*(x^*)^{\psi(m)})^\top \notag \\
&= \diag \left( (x^*)^{\psi(1)},\ldots,(x^*)^{\psi(m)} \right) \kappa^* \notag \\
&= \diag \left( (x^*)^{\psi(1)},\ldots,(x^*)^{\psi(m)} \right) \diag \left( (h^*)^{\psi(1)},\ldots,(h^*)^{\psi(m)} \right) \sum_{i=1}^{t} \lambda_{i}^* R^{(i)} \notag \\
&= \diag \left( (x^*h^*)^{\psi(1)},\ldots,(x^*h^*)^{\psi(m)} \right) \sum_{i=1}^{t} \lambda_{i}^* R^{(i)} \notag \\
&= \sum_{i=1}^{t} \lambda_{i}^* R^{(i)}.
\end{align}
Recall that for any $i \in \{1,\ldots,t\}$, $\mathcal{N} R^{(i)} = {\bf 0}$.
By \eqref{eq:vk*x*}, we have
\begin{align*}
\mathcal{N} v(\kappa^*, x^*) = \mathcal{N} \sum_{i=1}^{t} \lambda_{i}^* R^{(i)} = \sum_{i=1}^{t} \lambda_{i}^* \mathcal{N} R^{(i)}={\bf 0}.
\end{align*}
So, $x^*$ is a positive steady state with respect to $\kappa^*$.
By \eqref{eq:jacfkx}, \eqref{eq:defjhl}, and \eqref{eq:vk*x*}, we have\\
{\footnotesize
${\rm Jac}_f(\kappa^*,x^*) = \mathcal{N}\diag  (v(\kappa^*, x^*))\mathcal{Y}^\top \diag (\frac{1}{x^*})
= \mathcal{N}\diag  (\sum_{i=1}^{t} \lambda_{i}^* R^{(i)})\mathcal{Y}^\top \diag (h^*)
= J(h^*,\lambda^*).$
}
\end{proof}

\begin{example}[Network \eqref{network:example} continued]\label{ex:details}
We show how to find the 
parameters such that  
the network \eqref{network:example} has a Hopf bifurcation by 
  Lemma \ref{lemma:Yang} and Lemma \ref{lemma:hltokx}.

First, it is straightforward to write down the stoichiometric matrix $\mathcal{N}$ and the reactant matrix $\mathcal{Y}$ of the network \eqref{network:example}:
\begin{align*}
\mathcal{N}=
\left( \begin{matrix}
-1 & 1 & 0 & 0 & 0 \\
0 & 0 & 1 & -1 & -1 \\
0 & 0 & 0 & 1 & -1 \\
0 & 1 & 0 & -1 & 0
\end{matrix} \right), \text{ and  }
\mathcal{Y}=
\left( \begin{matrix}
1 & 0 & 1 & 0 & 1 \\
1 & 0 & 0 & 1 & 1 \\
1 & 1 & 0 & 0 & 1 \\
0 & 0 & 0 & 1 & 1
\end{matrix} \right).
\end{align*}
The stoichiometric matrix $\mathcal{N}$ yields the following extreme ray:
\begin{align*}
R=(1,1,2,1,1)
\end{align*}
(note that there exists only one extreme ray up to the natural equivalence). 
We introduce new variables $h_1,\ldots,h_4,\lambda$. 
Let $h=(h_1,\ldots,h_4)$.
By \eqref{eq:jacfkx},  the transformed Jacobian matrix $J(h,\lambda)$ is 
\begin{align*}
J(h,\lambda)=
\left( \begin{matrix}
-h_1 \lambda & -h_2 \lambda & 0 & 0 \\
h_1 \lambda & -2h_2 \lambda & -h_3 \lambda & -2h_4 \lambda \\
-h_1 \lambda & 0 & -h_3 \lambda & 0 \\
0 & -h_2 \lambda & h_3 \lambda & -h_4 \lambda
\end{matrix} \right).
\end{align*}
Denote the system of ODEs by $\dot{x}=f(\kappa,x)$. 
Let ${\rm Jac}_f(\kappa,x)$ be the Jacobian matrix of $f(\kappa,x)$ with respect to $x$.
For any $h^* \in \R^4_{>0}$ and for any $\lambda^* \in \R_{\geq 0}$ such that $\lambda^* R \in \R^m_{>0}$ (notice that in this example, any positive number $\lambda^*$ satisfies $\lambda^* R \in \R^m_{>0}$), by Lemma \ref{lemma:hltokx}, there exist $\kappa^* \in \R^5_{>0}$ and a corresponding positive steady state $x^* \in \R^4_{>0}$ such that
\begin{align}\label{eq:jacfkx=jhl}
{\rm Jac}_f(\kappa^*,x^*)=J(h^*,\lambda^*).
\end{align}
Notice that ${\rm rank} (\mathcal{N})=4$.
Suppose
\begin{align*}
\operatorname{det}(z I-J(h,\lambda))=zq(h,\lambda;z),
\end{align*}
where
\begin{align}\label{eq:qhlzinexp}
q(h,\lambda;z) =  z^{4}+b_{1}(h,\lambda) z^{3}+b_{2}(h,\lambda) z^{2}+b_{3}(h,\lambda) z+b_{4}(h,\lambda).
\end{align}
%Let ${\rm Jac}^{\rm red}_f(\kappa,x)$ be a reduced Jacobian matrix of $f(\kappa,x)$ with respect to $x$, and let $\phi(\kappa,x;z)$ be the characteristic polynomial of ${\rm Jac}^{\rm red}_f(\kappa,x)$ and we write
Suppose 
\begin{align*}
\operatorname{det}(z I-
{\rm Jac}_f(\kappa,x))=z\phi(\kappa,x;z),
\end{align*}
where
\begin{align}\label{eq:phikxzinexp}
\phi(\kappa,x;z)=z^{4}+a_{1}(\kappa,x) z^{3}+a_{2}(\kappa,x) z^{2}+a_{3}(\kappa,x) z+a_{4}(\kappa,x).
\end{align}
Recall that 
$\phi(\kappa, x; z)$ is the characteristic polynomial of 
a reduced Jacobian matrix (see Remark \ref{rmk:red}). 
By \eqref{eq:jacfkx=jhl}, we have $\phi(\kappa^*,x^*;z)=q(h^*,\lambda^*;z)$.
Then, by \eqref{eq:qhlzinexp} and \eqref{eq:phikxzinexp}, we have
\begin{align}\label{eq:a=b}
a_{i}(\kappa^*,x^*) = b_{i}(h^*,\lambda^*),\text{ for any }1 \le i \le 4.
\end{align}
For any positive integer $i\;(i\leq 4)$, let $W_{i}(\kappa,x)$ be the $i$-th Hurwitz matrix of $\phi(\kappa,x;z)$, and let $H_{i}(h,\lambda)$ be the $i$-th Hurwitz matrix of $q(h,\lambda;z)$.
By Definition \ref{def:hurwitz} and by \eqref{eq:a=b}, we have
\begin{align}\label{eq:detH=detW}
\operatorname{det}(H_{i}(h^*,\lambda^*))=\operatorname{det}(W_{i}(\kappa^*,x^*)),\text{ for any }1 \le i \le 4.
\end{align}
%\textcolor{red}{Lines 410 to 430 is similarly mentioned in the proof of Lemma 4.2, and it will be mentioned again in other examples.}
By \eqref{eq:a=b} and \eqref{eq:detH=detW}, in order to find $\kappa^* \in \R^5_{>0}$ and a corresponding positive steady state $x^* \in \R^4_{>0}$ that fulfill Lemma \ref{lemma:Yang} (i) and (ii), we only need to find $h^* \in \R^4_{>0}$ and $\lambda^* \in \R_{>0}$ that fulfill the following statements.
\begin{enumerate}[(I)]
\item $\operatorname{det}(H_{3}(h^*,\lambda^*))=0$ and $b_4(h^*,\lambda^*)>0$.\label{cond:I}
\item $\operatorname{det}(H_{1}(h^*,\lambda^*))>0 \text{ and } \operatorname{det}(H_{2}(h^*,\lambda^*))>0$.\label{cond:II}
\end{enumerate}
It is straightforward to compute that the polynomials ${\rm det}(H_{1}(h,\lambda))$, ${\rm det}(H_{2}(h,\lambda))$, and $b_4(h,\lambda)$ contain only terms with positive coefficients.
So, ${\rm det}(H_{1}(h^*,\lambda^*))>0$, ${\rm det}(H_{2}(h^*,\lambda^*))>0$, and $b_4(h^*,\lambda^*)>0$.
In other words, we only need to find $h^*$ and $\lambda^*$ such that ${\rm det}(H_{3}(h^*,\lambda^*))=0$.
We can compute that  
% \begin{align*}
% b_4(h,\lambda) = 4h_1 h_2 h_3 h_4 \lambda^4, 
% \end{align*}
% and
{\small
\begin{align}\label{eq:detH3expand}
\operatorname{det}(H_{3}(h,\lambda)) =& 4h_1 h_2 h_3 \lambda^6 (3h_1^2 h_2+6h_1 h_2^2+h_1^2 h_3+3h_1 h_2 h_3+4h_2^2 h_3+h_1 h_3^2+2h_2 h_3^2)\\
& +h_1 h_2 h_4 \lambda^6 (3h_1^2 h_2+6h_1 h_2^2+h_1^2h_4+4h_1h_2h_4+h_1h_4^2) \notag \\
&+ h_1 h_3 h_4 \lambda^6 (h_1^2h_3+h_1h_3^2+h_1^2h_4+2h_1h_2h_3+h_3^2h_4+h_1h_4^2+h_3h_4^2) \notag  \\
& + h_1 h_2 h_3 h_4 \lambda^6 (4h_1+9h_1h_2+4h_1h_3+6h_2h_3+2h_3^2+2h_1h_4+h3_h4)  \notag \\
& -3h_1 h_2 h_3 h_4 \lambda^6(4h_2^2+4h_2h_4+h_4^2) \notag .
\end{align}
}We set $h_2=1,h_3=\frac{1}{2},h_4=1,\text{and }\lambda=1$.
The right-hand side of \eqref{eq:detH3expand} becomes
% \begin{align}\label{eq:detH3temp1}
% (5h_1^2+11h_1)h_3^3+(5h_1^3+19h_1^2+24h_1)h_3^2+(17h_1^3+36h_1^2)h_3+4h_1^3+11h_1^2-27h_1h_3.
% \end{align}
% Then, we set $h^*_3=0.5$, and the right-hand side of \eqref{eq:detH3temp1} becomes
\begin{align*}
\frac{55}{4}h_1^3+\frac{273}{8}h_1^2-\frac{49}{8}h_1.
\end{align*}
We solve $\frac{55}{4}h_1^3+\frac{273}{8}h_1^2-\frac{49}{8}h_1=0$ by the function \texttt{Solve} in Mathematica.
It turns out that the only one positive solution is $h_1=\frac{7(\sqrt{1961}-39)}{220}$.
Let $h^*=(\frac{7(\sqrt{1961}-39)}{220},1,\frac{1}{2},1)$, and let $\lambda^*=1$.
Based on the discussion above, we know that  $h^*$ and $\lambda^*$ fulfill \eqref{cond:I} and \eqref{cond:II}.
% So, we find $h^*$ and $\lambda^*$ fulfill \eqref{cond:I} and \eqref{cond:II}:
% \begin{align}\label{eq:HBpointinhl}
% (h^*;\lambda^*)=(\frac{7(\sqrt{1961}-39)}{220},1,0.5,1;\;1) \approx (0.168,1,0.5,1;\;1).
% \end{align}
% To check the condition (ii) in Lemma \ref{prop:existHbwithhl}, we compute the value of $\operatorname{det}(H_{2}(h^*,\lambda^*))$ and $\operatorname{det}(H_{1}(h^*,\lambda^*))$ and we have
% \begin{align*}
% \operatorname{det}(H_{2}(h^*,\lambda^*)) \approx 7.68855,\;\text{and}\;\operatorname{det}(H_{1}(h^*,\lambda^*)) \approx 3.6681.
% \end{align*}
By the proof of Lemma \ref{lemma:hltokx}, 
we can find the corresponding $\kappa^*$ and $x^*$:
\begin{align*}
\kappa^* = (\frac{7(\sqrt{1961}-39)}{440},\;\frac{1}{2},\;\frac{7(\sqrt{1961}-39)}{110},\;1,\;\frac{7(\sqrt{1961}-39)}{440}),
\end{align*}
and
\begin{align*}
x^*=(\frac{220}{7(\sqrt{1961}-39)},\;1,\;2,\;1).
\end{align*}
% {\small \begin{align*}
% (\kappa^*;x^*)=(\frac{7(\sqrt{1961}-39)}{440},\frac{1}{2},\frac{7(\sqrt{1961}-39)}{110},1,\frac{7(\sqrt{1961}-39)}{880};\;\frac{220}{7(\sqrt{1961}-39)},1,2,1),
% \end{align*}}
%As the value of $\kappa_4$ varies, in the neighborhood of $\kappa_4^*$, consider the following curve of  rate constants 
%\begin{align*}
%\kappa(\kappa_4)=(\kappa^*_1,\kappa^*_2,\kappa^*_3,\kappa_4,\frac{7(\sqrt{1961}-39)}{440}\kappa_4)\approx(0.0841,\;0.5,\;0.336,\;\kappa_4,\;0.0841\kappa_4).
%\end{align*}
%and the  corresponding curve of positive steady states 
%\begin{align*}
%x(\kappa_4) = (x^*_1,x^*_2,x^*_3,\frac{1}{\kappa_4})\approx(5.948,\;1,\;2,\;\frac{1}{\kappa_4}).
%\end{align*}
Note that it is easy to check that 
\begin{align*}
\left.\frac{d\left(\operatorname{det} (W_{3}(\kappa(\kappa_4),x(\kappa_4)))\right)}{d \kappa_4}\right|_{\kappa=\kappa^*,x=x^*} \approx -2.430 \ne 0,
\end{align*}
where $\kappa(\kappa_4)$ and $x(\kappa_4)$ are defined in 
\eqref{eq:kk4} and \eqref{eq:xk4}.
So, the point $(\kappa^*,x^*)$ also fulfills Lemma \ref{lemma:Yang} (iii).
Hence, the network \eqref{network:example} has a simple Hopf bifurcation at $(\kappa^*,x^*)$ with respect to $\kappa_4$.

% \textcolor{red}{In the rest of this example, we check $(\kappa^*;x^*)$ satisfy the partial derivative condition in Lemma \ref{lemma:Yang} (iii).
% The curve of positive steady states with respect to $\kappa_4$ can be similarly given as in \eqref{eq:xk4}.
% Notice that $x_4=\kappa_4^{-1}$ and $h_4=x_4^{-1}$.
% So, by \eqref{eq:detH=detW}, we have
% {\small \begin{align*}
% \left.\frac{\partial\left(\operatorname{det} (W_{3}(\kappa,x))\right)}{\partial \kappa_4}\right|_{\kappa=\kappa^*,x=x^*} &= \left.\frac{\partial\left(\operatorname{det} (H_{3}(h,\lambda))\right)}{\partial h_4}\right|_{h=h^*,\lambda=\lambda^*} \cdot \left.\left( \frac{\partial h_4}{\partial x_4} \cdot \frac{\partial x_4}{\partial \kappa_4} \right)\right|_{\kappa=\kappa^*,x=x^*} \notag \\
% &= \frac{1}{(\kappa^*_4 x^*_4)^2} \cdot \left.\frac{\partial\left(\operatorname{det} (H_{3}(h,\lambda))\right)}{\partial h_4}\right|_{h=h^*,\lambda=\lambda^*} \notag \\
% &= \left.\frac{\partial\left(\operatorname{det} (H_{3}(h,\lambda))\right)}{\partial h_4}\right|_{h=h^*,\lambda=\lambda^*}.
% \end{align*}
% }We compute that
% \begin{align*}
% \left.\frac{\partial\left(\operatorname{det} (H_{3}(h,\lambda))\right)}{\partial h_4}\right|_{h=h^*,\lambda=\lambda^*} \approx -2.43026 \ne 0.
% \end{align*}
% So, we have
% \begin{align*}
% \left.\frac{\partial\left(\operatorname{det} (W_{3}(\kappa,x))\right)}{\partial \kappa_4}\right|_{\kappa=\kappa^*,x=x^*} \ne 0.
% \end{align*}}
\end{example}

\section{Structure of the Jacobian Matrix}\label{sec:structure}
Let $G$ be a zero-one network (see \eqref{eq:network}) with a reactant matrix $\mathcal{Y}$ and a stoichiometric matrix $\mathcal{N}$. 
Denote the system of ODEs by $\dot{x}=f(\kappa,x)$ (see \eqref{sys:f}). 
Let $R^{(1)},\ldots,R^{(t)} \in \R^m_{\geq 0}$ be the extreme rays of the flux cone $F(\mathcal{N})$, and let $J(h,\lambda)$ be the matrix defined in \eqref{eq:defjhl}.
For any $i \in \{1,\ldots,t\}$, we define a matrix in $\R^{s \times s}$
\begin{align}\label{eq:mui}
\mu^{(i)} := \mathcal{N}\diag  (R^{(i)}) \mathcal{Y}^\top.
\end{align}
Then, by \eqref{eq:defjhl},  we can rewrite $J(h,\lambda)$ as
{\footnotesize
\begin{align}\label{eq:jhlinm}
J(h,\lambda)= \mathcal{N}\diag  (\sum_{i=1}^{t} \lambda_{i} R^{(i)})\mathcal{Y}^\top \diag (h) = \sum_{i=1}^{t} \lambda_{i} \mathcal{N}\diag  (R^{(i)}) \mathcal{Y}^\top \diag (h) = \sum_{i=1}^{t} \lambda_{i} \mu^{(i)} \diag (h).
\end{align}
}

In the rest of this paper, for any matrix $M \in \R^{a \times b}$, for any $1\leq k \leq a$ and for any $1\leq \ell \leq b$, we denote by $M_{k \ell}$ the $(k,\ell)$-entry of $M$.

\begin{lemma}\label{lemma:coef}
Let $G$ be a zero-one network with a stoichiometric matrix $\mathcal{N}$. 
Let $R^{(1)},\ldots,R^{(t)} \in {\mathbb R}_{\geq 0}^m$ be the extreme rays of the flux cone $F(\mathcal{N})$. 
For any $i \in \{1,\ldots,t\}$, let $\mu^{(i)} \in \R^{s \times s}$ be the matrix defined in \eqref{eq:mui}. 
Then, for any $k \in \{1,\ldots,s\}$, and for any $\ell \in \{1,\ldots,s\}$, we have $\mu^{(i)}_{kk} \leq 0$ and $\lvert \mu^{(i)}_{k\ell} \rvert \leq -\mu^{(i)}_{kk}$.
\end{lemma}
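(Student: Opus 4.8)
The plan is to compute the entries of $\mu^{(i)} = \mathcal{N}\diag(R^{(i)})\mathcal{Y}^\top$ explicitly in terms of the stoichiometric coefficients and the coordinates of the extreme ray, and then exploit two structural features: the zero-one hypothesis on $\mathcal{Y}$, and the fact that each column of $\mathcal{N}$ is the difference of a product-monomial exponent vector and a reactant exponent vector. Writing $R^{(i)} = (r_1,\dots,r_m)$ with all $r_j \geq 0$, the $(k,\ell)$-entry is
\begin{align*}
\mu^{(i)}_{k\ell} \;=\; \sum_{j=1}^m r_j\, \mathcal{N}_{kj}\, \mathcal{Y}_{\ell j} \;=\; \sum_{j=1}^m r_j\,(\beta_{kj}-\alpha_{kj})\,\alpha_{\ell j}.
\end{align*}
For a zero-one network every $\alpha_{\ell j}\in\{0,1\}$, so the only reactions $j$ that contribute are those with $\alpha_{\ell j}=1$ (i.e.\ $X_\ell$ is a reactant of reaction $j$); for such $j$ the summand is $r_j(\beta_{kj}-\alpha_{kj})$.

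First I would analyze the diagonal term $\mu^{(i)}_{kk} = \sum_{j:\,\alpha_{kj}=1} r_j(\beta_{kj}-\alpha_{kj}) = \sum_{j:\,\alpha_{kj}=1} r_j(\beta_{kj}-1)$. Since $\beta_{kj}\in\{0,1\}$, each factor $\beta_{kj}-1\in\{-1,0\}$, so every summand is $\leq 0$ and hence $\mu^{(i)}_{kk}\leq 0$. Next, for the off-diagonal claim I would bound $|\mu^{(i)}_{k\ell}|$. Here it is cleanest to split the contributing reactions (those with $\alpha_{\ell j}=1$) according to the sign of $\beta_{kj}-\alpha_{kj}$, writing $\mu^{(i)}_{k\ell} = P - Q$ where $P = \sum_{j:\,\alpha_{\ell j}=1,\ \beta_{kj}-\alpha_{kj}=1} r_j \geq 0$ collects the positive contributions (which force $\beta_{kj}=1,\alpha_{kj}=0$) and $Q = \sum_{j:\,\alpha_{\ell j}=1,\ \beta_{kj}-\alpha_{kj}=-1} r_j\geq 0$ collects the negative ones (which force $\beta_{kj}=0,\alpha_{kj}=1$). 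Then $|\mu^{(i)}_{k\ell}| \leq \max\{P,Q\}$, so it suffices to show both $P \leq -\mu^{(i)}_{kk}$ and $Q \leq -\mu^{(i)}_{kk}$. Now $-\mu^{(i)}_{kk} = \sum_{j:\,\alpha_{kj}=1,\ \beta_{kj}=0} r_j$. The bound $Q\leq -\mu^{(i)}_{kk}$ is immediate because the index set defining $Q$ (which requires $\alpha_{\ell j}=1$, $\alpha_{kj}=1$, $\beta_{kj}=0$) is a subset of the index set defining $-\mu^{(i)}_{kk}$ (which only requires $\alpha_{kj}=1,\beta_{kj}=0$), and all $r_j\geq 0$.

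The remaining inequality $P \leq -\mu^{(i)}_{kk}$ is the one that genuinely uses that $R^{(i)}$ is a ray of the flux cone, i.e.\ $\mathcal{N}R^{(i)} = \mathbf 0$. Reading off the $k$-th coordinate of this identity gives $\sum_{j=1}^m r_j(\beta_{kj}-\alpha_{kj}) = 0$, which says that over \emph{all} reactions the total ``production of $X_k$'' weighted by $R^{(i)}$ equals the total ``consumption.'' Splitting all $m$ reactions by the sign of $\beta_{kj}-\alpha_{kj}$ yields $\sum_{j:\,\beta_{kj}-\alpha_{kj}=1} r_j = \sum_{j:\,\beta_{kj}-\alpha_{kj}=-1} r_j = -\mu^{(i)}_{kk}$ (the last equality because, for a zero-one network, $\beta_{kj}-\alpha_{kj}=-1$ forces $\alpha_{kj}=1,\beta_{kj}=0$, matching the description of $-\mu^{(i)}_{kk}$ above, and similarly $\beta_{kj}-\alpha_{kj}=1$ forces $\alpha_{kj}=0,\beta_{kj}=1$). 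Since $P$ is a sum of a subset of the terms $r_j$ with $\beta_{kj}-\alpha_{kj}=1$ — namely those that additionally satisfy $\alpha_{\ell j}=1$ — nonnegativity of the $r_j$ gives $P \leq \sum_{j:\,\beta_{kj}-\alpha_{kj}=1} r_j = -\mu^{(i)}_{kk}$, completing the argument.

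I expect the main obstacle to be purely bookkeeping: one must be careful that the zero-one hypothesis is what makes ``$\beta_{kj}-\alpha_{kj}=\pm1$'' equivalent to the clean reactant/product patterns used above (for a general network $\beta_{kj}-\alpha_{kj}$ could be any integer and the neat identification with the diagonal term would fail), and one must track that all the relevant index sets are genuinely nested so that nonnegativity of the ray coordinates closes each inequality. Once the three set-inclusion observations are laid out, the proof is a short computation with no analytic content. A degenerate case worth a sentence is when $\mu^{(i)}_{kk}=0$: then the above shows $P=Q=0$, so every contributing reaction has $\beta_{kj}-\alpha_{kj}=0$ and $\mu^{(i)}_{k\ell}=0$ as well, consistent with $|\mu^{(i)}_{k\ell}|\leq -\mu^{(i)}_{kk}=0$.
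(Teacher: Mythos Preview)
Your proof is correct and follows essentially the same approach as the paper: compute the entries of $\mu^{(i)}$ explicitly, use the zero-one hypothesis to reduce the diagonal to a nonpositive sum, split the off-diagonal entry into positive and negative contributions, and invoke the balance relation $\mathcal{N}R^{(i)}=\mathbf{0}$ together with subset inclusions and nonnegativity of the ray coordinates to bound both pieces by $-\mu^{(i)}_{kk}$. The only cosmetic difference is that the paper organizes the splitting by the sign sets $\mathcal{N}_\pm(k)=\{j:\mathcal{N}_{kj}=\pm 1\}$ first, whereas you filter by $\alpha_{\ell j}=1$ first; the resulting inequalities are identical.
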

\begin{proof}
For any $k \in \{1,\ldots,s\}$, we define
\begin{align}
\mathcal{N}_{+}(k)&:=\{j:\;\mathcal{N}_{kj}=1,\; 1 \leq j \leq m\} \text{, and}\label{eq:n+}\\ 
\mathcal{N}_{-}(k)&:=\{j:\;\mathcal{N}_{kj}=-1, \;1 \leq j \leq m\}\label{eq:n-}.
\end{align}
Here, recall that $\mathcal{N}_{kj}=\beta_{kj}-\alpha_{kj}$, and $\alpha_{kj},\beta_{kj} \in \{0,1\}$. 
So, we have the following statements.
\begin{enumerate}[(I)]
\item \label{I} If $j \in \mathcal{N}_{+}(k)$ (i.e., $\beta_{kj}-\alpha_{kj}=1$), then $\alpha_{kj}=0, \beta_{kj}=1$.
\item \label{II} If $j \in \mathcal{N}_{-}(k)$ (i.e., $\beta_{kj}-\alpha_{kj}=-1$), then $\alpha_{kj}=1, \beta_{kj}=0$.
\end{enumerate}
By \eqref{eq:mui}, \eqref{eq:n+}, \eqref{eq:n-}, \eqref{I}, and \eqref{II}, we have 
{\small\begin{align}\label{eq:muikk<=0}
\mu^{(i)}_{kk}&=\sum_{j=1}^m \mathcal{N}_{kj} \alpha_{kj} R^{(i)}_j =\sum_{j \in \mathcal{N}_{+}(k)} \mathcal{N}_{kj} \alpha_{kj} R^{(i)}_j + \sum_{j \in \mathcal{N}_{-}(k)} \mathcal{N}_{kj} \alpha_{kj} R^{(i)}_j  =-\sum_{j \in \mathcal{N}_{-}(k)} R^{(i)}_j \leq 0.
\end{align}}
For any $i \in \{1,\ldots,t\}$, by the fact that $R^{(i)} \in F(\mathcal{N})$, we have $\mathcal{N}R^{(i)}=\bf 0$.
Notice that the $k$-th coordinate of $\mathcal{N}R^{(i)}$ is $\sum_{j = 1}^m \mathcal{N}_{kj} R^{(i)}_j$.
Then, by \eqref{eq:n+} and \eqref{eq:n-}, we have
\begin{align*}
0=\sum_{j = 1}^m \mathcal{N}_{kj} R^{(i)}_j =\sum_{j \in \mathcal{N}_{+}(k)} \mathcal{N}_{kj} R^{(i)}_j + \sum_{j \in \mathcal{N}_{-}(k)} \mathcal{N}_{kj} R^{(i)}_j =\sum_{j \in \mathcal{N}_{+}(k)} R^{(i)}_j - \sum_{j \in \mathcal{N}_{-}(k)} R^{(i)}_j,
\end{align*} 
i.e., 
\begin{align}\label{eq:+=-}
\sum_{j \in \mathcal{N}_{+}(k)} R^{(i)}_j = \sum_{j \in \mathcal{N}_{-}(k)} R^{(i)}_j.
\end{align}
By \eqref{eq:mui}, \eqref{eq:n+}, \eqref{eq:n-}, \eqref{I}, \eqref{II}, \eqref{eq:muikk<=0}, and \eqref{eq:+=-}, for any $\ell \in \{1,\ldots,s\}$, we have
\begin{align}\label{eq:muikl}
\lvert \mu^{(i)}_{k\ell} \rvert&=\lvert \sum_{j=1}^m \mathcal{N}_{kj} \alpha_{\ell j} R^{(i)}_j \rvert \notag \\
&= \lvert \sum_{j \in \mathcal{N}_{+}(k)} \mathcal{N}_{kj} \alpha_{\ell j} R^{(i)}_j + \sum_{j \in \mathcal{N}_{-}(k)} \mathcal{N}_{kj} \alpha_{\ell j} R^{(i)}_j \rvert \notag \\
&= \lvert \sum_{j \in \mathcal{N}_{+}(k)} \alpha_{\ell j} R^{(i)}_j - \sum_{j \in \mathcal{N}_{-}(k)} \alpha_{\ell j} R^{(i)}_j \rvert \notag \\
&\leq \max \left(\sum_{j \in \mathcal{N}_{+}(k)} \alpha_{\ell j} R^{(i)}_j, \sum_{j \in \mathcal{N}_{-}(k)} \alpha_{\ell j} R^{(i)}_j \right) \notag \\
&\leq \sum_{j \in \mathcal{N}_{-}(k)} R^{(i)}_j = -\mu^{(i)}_{kk}.
\end{align}
Recall that $\alpha_{\ell j} \in \{0,1\}$ and $R^{(i)}_j \geq 0$.
The first inequality in \eqref{eq:muikl} holds because $\sum_{j \in \mathcal{N}_{+}(k)} \alpha_{\ell j} R^{(i)}_j \geq 0$ and $\sum_{j \in \mathcal{N}_{-}(k)} \alpha_{\ell j} R^{(i)}_j \geq 0$. 
The second inequality in \eqref{eq:muikl} holds because {\small $\sum_{j \in \mathcal{N}_{+}(k)} \alpha_{\ell j} R^{(i)}_j \leq \sum_{j \in \mathcal{N}_{+}(k)} R^{(i)}_j$, $\sum_{j \in \mathcal{N}_{-}(k)} \alpha_{\ell j} R^{(i)}_j \leq \sum_{j \in \mathcal{N}_{-}(k)} R^{(i)}_j$}, and \eqref{eq:+=-}. 
The last equality in \eqref{eq:muikl} holds because \eqref{eq:muikk<=0}.
\end{proof}

% \textcolor{red}{Lemma \ref{lemma:positiveJ} (ii) is never used.}
\begin{corollary}\label{coro:jkkis-}
Let $G$ be a zero-one network. 
Let $J(h,\lambda) \in \Q[h,\lambda]^{s \times s}$ be the matrix corresponding to $G$ defined in \eqref{eq:defjhl}. 
Then, for any $k \in \{1,\ldots,s\}$, $J(h,\lambda)_{kk}$ is either a zero polynomial or a sum of terms with negative coefficients.
% For any $k \in \{1,\ldots,s\}$, we have
% \begin{enumerate}[(i)]
% \item $J(h,\lambda)_{kk}$ is either a zero polynomial or the coefficient of every term in $J(h,\lambda)_{kk}$ is negative, and
% \item $J(h,\lambda)_{kk}$ is a zero polynomial if and only if all elements in the $k$-th row of $\mathcal{N}$ are zero.
% \end{enumerate}
\end{corollary}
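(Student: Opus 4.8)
The plan is to read the diagonal entries of $J(h,\lambda)$ directly off the representation \eqref{eq:jhlinm} and then apply the sign information provided by Lemma \ref{lemma:coef}. First I would observe that for each fixed $i \in \{1,\ldots,t\}$, right-multiplication by $\diag(h)$ scales the $\ell$-th column of $\mu^{(i)}$ by $h_\ell$, so the $(k,k)$-entry of $\mu^{(i)}\diag(h)$ is simply $\mu^{(i)}_{kk}\,h_k$; in particular it involves only the variable $h_k$ and no other $h_\ell$. Summing over $i$ with the weights $\lambda_i$ as in \eqref{eq:jhlinm} then gives
\begin{align*}
J(h,\lambda)_{kk} \;=\; \sum_{i=1}^{t} \lambda_i\, \mu^{(i)}_{kk}\, h_k.
\end{align*}
This already exhibits $J(h,\lambda)_{kk}$ as a sum of the $t$ distinct monomials $\lambda_i h_k$ ($i=1,\ldots,t$), so the coefficient of $\lambda_i h_k$ is exactly the real number $\mu^{(i)}_{kk}$ and no collection of like terms is needed.

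Next I would invoke Lemma \ref{lemma:coef}, which asserts $\mu^{(i)}_{kk} \le 0$ for every $i$. Hence each coefficient appearing above is nonpositive. If $\mu^{(i)}_{kk} = 0$ for all $i \in \{1,\ldots,t\}$, then $J(h,\lambda)_{kk}$ is the zero polynomial. Otherwise, deleting the vanishing terms leaves a sum of monomials each carrying a strictly negative coefficient, which is precisely the claimed dichotomy.

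I do not anticipate any genuine obstacle: the substance of the corollary is entirely contained in the sign estimate of Lemma \ref{lemma:coef}, and the present statement is just its specialization to the diagonal via \eqref{eq:jhlinm}. The only point meriting a moment of care is verifying that the $(k,k)$-entry of $\mu^{(i)}\diag(h)$ is $\mu^{(i)}_{kk} h_k$ and depends on no other $h_\ell$, which is immediate from the definition of multiplication by a diagonal matrix.
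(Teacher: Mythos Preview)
Your proposal is correct and follows essentially the same approach as the paper: expand the diagonal entry via \eqref{eq:jhlinm} as $J(h,\lambda)_{kk}=\sum_{i=1}^{t}\mu^{(i)}_{kk}\lambda_i h_k$ and then invoke Lemma~\ref{lemma:coef} to conclude that each coefficient $\mu^{(i)}_{kk}$ is nonpositive. The paper's proof is slightly terser, but the substance is identical.
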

\begin{proof}
By \eqref{eq:jhlinm}, for any $k \in \{1,\ldots,s\}$, we have
\begin{align*}
%\label{eq:jkk=}
J(h,\lambda)_{kk} = \sum_{i=1}^{t} \mu^{(i)}_{kk} \lambda_{i} h_k.
\end{align*}
%where $\mu^{(i)}_{kk}$ is the $(k,k)$-th entry of $\mu^{(i)}$. 
% By \eqref{eq:mui}, we have
% \begin{align}\label{eq:muikk}
% \mu^{(i)}_{kk} = \sum_{j = 1}^{m} (\beta_{kj}-\alpha_{kj}) \alpha_{kj} R^{(i)}_j.
% \end{align}
% Recall that for any $k \in \{1,\ldots,s\}$ and for any $j \in \{1,\ldots,m\}$, we have $\alpha_{kj}, \beta_{kj} \in \{0,1\}$. Therefore, we have the following four cases
% \begin{enumerate}[\textbf{(Case} \bf A\textbf{)}]
% \item If $\alpha_{kj}=0, \beta_{kj}=0$, then $(\beta_{kj}-\alpha_{kj}) \alpha_{kj}=0$.
% \item If $\alpha_{kj}=0, \beta_{kj}=1$, then $(\beta_{kj}-\alpha_{kj}) \alpha_{kj}=0$.
% \item If $\alpha_{kj}=1, \beta_{kj}=0$, then $(\beta_{kj}-\alpha_{kj}) \alpha_{kj}=-1$.
% \item If $\alpha_{kj}=1, \beta_{kj}=1$, then $(\beta_{kj}-\alpha_{kj}) \alpha_{kj}=0$.
% \end{enumerate}
% Now that Case A to Case D indicate $(\beta_{kj}-\alpha_{kj}) \alpha_{kj} \le 0$, and $R^{(i)} \in \mathbb{R}^m_{\geq 0}$, so we have 
% \begin{align}\label{eq:muikk0}
% \mu^{(i)}_{kk} \leq 0. 
% \end{align}
%In \eqref{eq:jkk}, the term $\mu^{(i)}_{kk} \lambda_{i} h_k$ is negative when $\mu^{(i)}_{kk} < 0$ and vanish when $\mu^{(i)}_{kk} = 0$.
By Lemma \ref{lemma:coef}, for any $i \in \{1,\ldots,t\}$, we have $\mu^{(i)}_{kk} \leq 0$.
Notice that $\mu^{(i)}_{kk}$ is the coefficient of the term $\mu^{(i)}_{kk} \lambda_{i} h_k$.
So, we have the result.
\end{proof}

\begin{lemma}\label{lemma:poly}
Let $G$ be a zero-one network. 
Let $J(h,\lambda) \in \Q[h,\lambda]^{s \times s}$ be the matrix corresponding to $G$ defined in \eqref{eq:defjhl}. 
Given a positive integer $n\;(n \leq s)$, for any permutation map $\sigma:\{1,\ldots,n\} \rightarrow \{1,\ldots,n\}$, and for any $U=\{u_1,\ldots,u_n\} \subset \{1,\ldots,s\}$, we define the following polynomial in $\mathbb{Q}[h,\lambda]$:
\begin{align}\label{eq:p}
\mathcal{P}_{\sigma,U}(h,\lambda) := (-1)^n\left(\prod_{i=1}^{n} J(h,\lambda)_{u_i,u_i} - \prod_{i=1}^{n} J(h,\lambda)_{u_i,\sigma(u_i)} \right).
\end{align}
Then, $\mathcal{P}_{\sigma,U}(h,\lambda)$ is either a zero polynomial or a sum of terms with positive coefficients.
% Then we have
% \begin{enumerate}[(i)]
% \item {\small $\mathcal{P}_{\sigma,U}(h,\lambda)$} is either a zero polynomial or the coefficient of every term in {\small $\mathcal{P}_{\sigma,U}(h,\lambda)$} is positive, and
% \item $\mathcal{P}_{\sigma,U}(h,\lambda)$ is a zero polynomial if and only if one of the two cases occurs:

% (a) there exists $u \in U$ such that $\mathcal{J}_{uu}=0$. 

% (b) $\lvert \mathcal{J}_{u,\sigma(u)}/h_{\sigma(u)} \rvert = -\mathcal{J}_{uu}/h_u$ for any $u \in U$, and {\small $\prod_{u \in U} \delta(\mathcal{J}_{u,\sigma(u)}) = \operatorname{sign}((-1)^n)$}.
% \end{enumerate}
\end{lemma}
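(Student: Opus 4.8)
The plan is to expand $J(h,\lambda)$ via the extreme-ray decomposition \eqref{eq:jhlinm} and then exploit the sign constraints of Lemma \ref{lemma:coef} monomial by monomial.

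First I would record that, by \eqref{eq:jhlinm}, for all $k,\ell$ we have $J(h,\lambda)_{k\ell}=\sum_{c=1}^{t}\lambda_{c}\,\mu^{(c)}_{k\ell}\,h_{\ell}$, so every entry in column $\ell$ carries the common factor $h_{\ell}$. Writing $L_{i,w}(\lambda):=\sum_{c=1}^{t}\lambda_{c}\,\mu^{(c)}_{u_i,w}$, this reads $J(h,\lambda)_{u_i,w}=h_{w}L_{i,w}(\lambda)$. Reading the column indices in the second product as $\sigma(u_i)=u_{\sigma(i)}\in U$ and using that $\sigma$ permutes $\{1,\dots,n\}$ (so $|U|=n$ and $\{u_{\sigma(1)},\dots,u_{\sigma(n)}\}=U$), both $\prod_{i=1}^{n}J(h,\lambda)_{u_i,u_i}$ and $\prod_{i=1}^{n}J(h,\lambda)_{u_i,u_{\sigma(i)}}$ have the same $h$-part, namely the squarefree monomial $h_{U}:=\prod_{u\in U}h_{u}$. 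Hence
\begin{align*}
\mathcal{P}_{\sigma,U}(h,\lambda)=h_{U}\cdot(-1)^{n}\Bigl(\prod_{i=1}^{n}L_{i,u_i}(\lambda)-\prod_{i=1}^{n}L_{i,u_{\sigma(i)}}(\lambda)\Bigr),
\end{align*}
and since $h_{U}$ is a single monomial with coefficient $1$, it suffices to show that the polynomial $Q(\lambda):=(-1)^{n}\bigl(\prod_{i}L_{i,u_i}-\prod_{i}L_{i,u_{\sigma(i)}}\bigr)$ has only nonnegative coefficients.

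Next I would invoke Lemma \ref{lemma:coef}: setting $a_i^{(c)}:=-\mu^{(c)}_{u_i,u_i}$ and $b_i^{(c)}:=\mu^{(c)}_{u_i,u_{\sigma(i)}}$ we have $a_i^{(c)}\ge 0$ and $|b_i^{(c)}|\le a_i^{(c)}$ for all $i,c$. Then $L_{i,u_i}(\lambda)=-\sum_{c}\lambda_{c}a_i^{(c)}$ and $L_{i,u_{\sigma(i)}}(\lambda)=\sum_{c}\lambda_{c}b_i^{(c)}$, so pulling the $n$ minus signs out of the diagonal product cancels the leading $(-1)^{n}$ and yields
\begin{align*}
Q(\lambda)=\prod_{i=1}^{n}\Bigl(\sum_{c=1}^{t}\lambda_{c}a_i^{(c)}\Bigr)-(-1)^{n}\prod_{i=1}^{n}\Bigl(\sum_{c=1}^{t}\lambda_{c}b_i^{(c)}\Bigr).
\end{align*}
Expanding both products over ordered tuples $(c_1,\dots,c_n)\in\{1,\dots,t\}^{n}$ and comparing coefficients of each degree-$n$ monomial $\lambda_{c_1}\cdots\lambda_{c_n}$, the corresponding coefficient of $Q$ equals $\sum\bigl(\prod_{i}a_i^{(c_i)}-(-1)^{n}\prod_{i}b_i^{(c_i)}\bigr)$, summed over all ordered tuples realizing that monomial. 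For each such tuple, $(-1)^{n}\prod_{i}b_i^{(c_i)}\le\bigl|\prod_{i}b_i^{(c_i)}\bigr|=\prod_{i}|b_i^{(c_i)}|\le\prod_{i}a_i^{(c_i)}$, so every summand is nonnegative; hence every coefficient of $Q$, and therefore of $\mathcal{P}_{\sigma,U}$, is nonnegative, which gives the claim.

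I do not expect a real obstacle here. The only delicate points are the sign bookkeeping — the $(-1)^{n}$ in the definition of $\mathcal{P}_{\sigma,U}$ must be cancelled against the $(-1)^{n}$ produced by factoring the negative diagonal linear forms $L_{i,u_i}$ — and the observation that the $h$-part $h_{U}$ is common to both products in \eqref{eq:p}, which is precisely what lets the bound $|b_i^{(c)}|\le a_i^{(c)}$ of Lemma \ref{lemma:coef} be applied termwise after $h_U$ has been factored out. (This lemma is the workhorse behind the positivity of the second-order principal minors and of the polynomials $d_{i,j,k}$ mentioned in the introduction.)
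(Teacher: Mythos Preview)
Your argument is correct and is essentially the same as the paper's: factor out the common $h$-monomial $\prod_{u\in U}h_u$, expand the two products of linear forms in $\lambda$ over ordered $n$-tuples $(c_1,\dots,c_n)\in\{1,\dots,t\}^n$, and use Lemma~\ref{lemma:coef} to bound $\bigl|\prod_i\mu^{(c_i)}_{u_i,u_{\sigma(i)}}\bigr|\le\prod_i(-\mu^{(c_i)}_{u_i,u_i})$ termwise. The only cosmetic difference is that the paper absorbs the $(-1)^n$ into its second product $P_2(\lambda)=\prod_i\bigl(\sum_j(-\mu^{(j)}_{u_i,\sigma(u_i)})\lambda_j\bigr)$, whereas you keep it outside; both lead to the same nonnegativity check.
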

\begin{proof}
(i) By \eqref{eq:jhlinm}, for any $k \in \{1,\ldots,s\}$ and for any $l \in \{1,\ldots,s\}$, we have
\begin{align}\label{eq:jkl}
J(h,\lambda)_{kl}=\sum_{i=1}^{t} \mu^{(i)}_{kl} \lambda_{i} h_l.
\end{align}
We define
\begin{align}\label{eq:p1}
P_1(\lambda) = \prod_{i=1}^{n} \left(\sum_{j=1}^{t} -\mu^{(j)}_{u_i,u_i} \lambda_{j} \right),
\end{align}
and
\begin{align}\label{eq:p2}
P_2(\lambda) = \prod_{i=1}^{n} \left(\sum_{j=1}^{t} -\mu^{(j)}_{u_i,\sigma(u_i)} \lambda_{j} \right).
\end{align}
By \eqref{eq:jkl}, \eqref{eq:p1}, and \eqref{eq:p2}, we have
\begin{align}\label{eq:temp1}
(-1)^n\prod_{i=1}^{n} J(h,\lambda)_{u_i,u_i}=\prod_{i=1}^{n} \left(\sum_{j=1}^{t} -\mu^{(j)}_{u_i,u_i} \lambda_{j} h_{u_i} \right) = P_1(\lambda) \prod_{i=1}^{n} h_{u_i},
%\left(\prod_{u \in U} h_{u}\right) \prod_{u \in U} \left(\sum_{i=1}^{t} \mu^{(i)}_{uu} \lambda_{i} \right),
\end{align}
and
\begin{align}\label{eq:temp2}
(-1)^n\prod_{i=1}^{n} J(h,\lambda)_{u_i,\sigma(u_i)}=\prod_{i=1}^{n} \left(\sum_{j=1}^{t} -\mu^{(j)}_{u_i,\sigma(u_i)} \lambda_{j} h_{\sigma(u_i)} \right) = P_1(\lambda) \prod_{i=1}^{n} h_{\sigma(u_i)}.
%\left(\prod_{u \in U} h_{\sigma(u)}\right) \prod_{u \in U} \left(\sum_{i=1}^{t} \mu^{(i)}_{u,\sigma(u)} \lambda_{i} \right).
\end{align}
Note that $\sigma$ is a permutation map.
Then, we have
\begin{align}\label{eq:temp3}
\prod_{i=1}^{n} h_{u_i}=\prod_{i=1}^{n} h_{\sigma(u_i)}.
\end{align}
We substitute \eqref{eq:temp1} and \eqref{eq:temp2} into \eqref{eq:p}.
Then, by \eqref{eq:temp3}, we have
% \begin{align}\label{eq:temp4}
% \mathcal{P}(h,\lambda)& = (-1)^n \left(\prod_{u \in U} h_{u}\right) \left( \prod_{u \in U} \left(\sum_{i=1}^{t} \mu^{(i)}_{uu} \lambda_{i} \right) - \prod_{u \in U} \left(\sum_{i=1}^{t} \mu^{(i)}_{u,\sigma(u)} \lambda_{i} \right) \right).
% \notag \\
% & = (-1)^n \prod_{j=1}^{n} h_{\pi_j} \sum_{p_1,\ldots,p_n \in \{1,\ldots,t\}} \left( \prod_{j=1}^{n} M_{p_j}[\pi_j,\pi_j] - \prod_{j=1}^{n} M_{p_j}[\pi_j,\pi_{\sigma_j}] \right) \prod_{j=1}^{n} \lambda_{p_j}.
% \end{align}
% We substitute \eqref{eq:p1} and \eqref{eq:p2} into \eqref{eq:temp4}, and we have
\begin{align}\label{eq:phl}
\mathcal{P}_{\sigma,U}(h,\lambda) = \left(P_1(\lambda)-P_2(\lambda)\right) \prod_{i=1}^{n} h_{u_i}.
\end{align}
Next, we show that $P_1(\lambda)-P_2(\lambda)$ is either a zero polynomial or a sum of terms with positive coefficients.
% In order to show that the coefficient of every term in $\mathcal{P}_{\sigma,U}(h,\lambda)$ is positive when $\mathcal{P}_{\sigma,U}(h,\lambda)$ is not a zero polynomial, we only need to show that the coefficient of every term in $P_1(\lambda)-P_2(\lambda)$ is positive. 
We can rewrite the right-hand side of \eqref{eq:p1} as
\begin{align*}
(-\mu^{(1)}_{u_1,u_1}\lambda_1-\ldots-\mu^{(t)}_{u_1,u_1}\lambda_t)\times\cdots\times(-\mu^{(1)}_{u_n,u_n}\lambda_1-\ldots-\mu^{(t)}_{u_n,u_n}\lambda_t).
\end{align*}
%For any vector $(p_1,\ldots,p_n) \in \{1,\ldots,t\}^n$, if we pick the $p_i$-th term in the $i$-th parenthesis, that is, $-\mu^{(p_i)}_{u_i,u_i}\lambda_{p_i}$, then the resulting term is $\prod_{i=1}^{n} \left(-\mu^{(p_i)}_{u_i,u_i}\lambda_{p_i}\right)$. 
Therefore, we can rewrite $P_1(\lambda)$ as
\begin{align}\label{eq:p1expand}
P_1(\lambda) = \sum_{(p_1,\ldots,p_n) \in \{1,\ldots,t\}^n} \prod_{i=1}^{n} \left(-\mu^{(p_i)}_{u_i,u_i}\lambda_{p_i}\right).
\end{align}
Similarly, we can rewrite $P_2(\lambda)$ as
\begin{align}\label{eq:p2expand}
P_2(\lambda) = \sum_{(p_1,\ldots,p_n) \in \{1,\ldots,t\}^n} \prod_{i=1}^{n} \left(-\mu^{(p_i)}_{u_i,\sigma(u_i)}\lambda_{p_i}\right).
\end{align}
% In the expansion of the right-hand side of \eqref{eq:p2}, the corresponding term (i.e., pick $-\mu^{(p_i)}_{u_i,\sigma(u_i)}\lambda_{p_i}$ in the $i$-th parenthesis) is $\prod_{i=1}^{n} -\mu^{(p_i)}_{u_i,\sigma(u_i)}\lambda_{p_i}$. Therefore, we can write the term in $P_1(\lambda)-P_2(\lambda)$ as
We define
\begin{align}\label{eq:ap1pn}
a(p_1,\ldots,p_n):=\prod_{i=1}^{n} \left(-\mu^{(p_i)}_{u_i,u_i}\right)-\prod_{i=1}^{n} \left(-\mu^{(p_i)}_{u_i,\sigma(u_i)}\right).
\end{align}
By \eqref{eq:p1expand}, \eqref{eq:p2expand}, and \eqref{eq:ap1pn}, we have
\begin{align}\label{eq:p1-p2}
P_1(\lambda)-P_2(\lambda) =& \sum_{(p_1,\ldots,p_n) \in \{1,\ldots,t\}^n} \left( \prod_{i=1}^{n} \left(-\mu^{(p_i)}_{u_i,u_i}\lambda_{p_i}\right) - \prod_{i=1}^{n} \left(-\mu^{(p_i)}_{u_i,\sigma(u_i)}\lambda_{p_i}\right) \right) \notag \\
=& \sum_{(p_1,\ldots,p_n) \in \{1,\ldots,t\}^n} \left( \prod_{i=1}^{n} \left(-\mu^{(p_i)}_{u_i,u_i}\right)-\prod_{i=1}^{n} \left(-\mu^{(p_i)}_{u_i,\sigma(u_i)}\right)\right) \prod_{i=1}^{n}\lambda_{p_i} \notag \\
=& \sum_{(p_1,\ldots,p_n) \in \{1,\ldots,t\}^n} a(p_1,\ldots,p_n) \prod_{i=1}^{n}\lambda_{p_i}.
% \prod_{i=1}^{n} \left(-\mu^{(p_i)}_{u_i,u_i}\lambda_{p_i}\right)-\prod_{i=1}^{n} \left(-\mu^{(p_i)}_{u_i,\sigma(u_i)}\lambda_{p_i}\right) = \left( \prod_{i=1}^{n} -\mu^{(p_i)}_{u_i,u_i}-\prod_{i=1}^{n} -\mu^{(p_i)}_{u_i,\sigma(u_i)}\right) \left( \prod_{i=1}^{n}\lambda_{p_i} \right).
\end{align}
By Lemma \ref{lemma:coef}, for any $i \in \{1,\ldots,n\}$, $\lvert \mu^{(p_i)}_{u_i,\sigma(u_i)} \rvert \leq -\mu^{(p_i)}_{u_i,u_i}$, so we have
\begin{align*}
%\label{eq:coef}
\prod_{i=1}^{n} \left(-\mu^{(p_i)}_{u_i,\sigma(u_i)}\right) \leq \prod_{i=1}^{n} \left(-\mu^{(p_i)}_{u_i,u_i}\right).
\end{align*}
It indicates that for any $(p_1,\ldots,p_n) \in \{1,\ldots,t\}^n$, $a(p_1,\ldots,p_n) \geq 0$.
Notice that in \eqref{eq:p1-p2}, $a(p_1,\ldots,p_n)$ is the coefficient of the term $ a(p_1,\ldots,p_n) \prod_{i=1}^{n}\lambda_{p_i}$.
Therefore, $P_1(\lambda)-P_2(\lambda)$ is either a zero polynomial or a sum of terms with positive coefficients.
Then, by \eqref{eq:phl}, $\mathcal{P}_{\sigma,U}(h,\lambda)$ is either a zero polynomial or a sum of terms with positive coefficients.
\end{proof}

Given a matrix $M \in \mathbb{R}^{s \times s}$ and a set $\{i_1,\ldots,i_k\} \subset \{1,\ldots,s\}$, we denote by $M[i_1,\ldots,i_k]$ the $k \times k$ matrix obtained by deleting the rows and columns of $M$ indexed by $\{1,\ldots,s\} \setminus \{i_1,\ldots,i_k\}$.

Recall the matrix $J(h,\lambda)$ defined in \eqref{eq:defjhl}. 
Here, we simply denote $J(h,\lambda)$ by $J$.
For any $1 \leq i<j<k \leq s$, we define
\begin{align}\label{def:dijk}
d_{i, j, k} := J_{ij}J_{jk}J_{ki} + J_{ik}J_{ji}J_{kj}-2J_{ii}J_{jj}J_{kk}.
\end{align}

\begin{lemma}\label{lemma:positivePM}
Let $G$ be a zero-one network. 
Let $J(h,\lambda) \in \Q[h,\lambda]^{s \times s}$ be the matrix corresponding to $G$ defined in \eqref{eq:defjhl}. 
Let $d_{i, j, k}$ be defined as in \eqref{def:dijk}.
Then, we have the following statements.
\begin{enumerate}[(i)]
\item For any $1 \leq i<j \leq s$, ${\rm det}(J[i,j])$ is either a zero polynomial or a sum of terms with positive coefficients.
\item For any $1 \leq i<j<k \leq s$, $d_{i, j, k}$ is either a zero polynomial or a sum of terms with positive coefficients.
\end{enumerate}
\end{lemma}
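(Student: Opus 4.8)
The plan is to deduce both parts directly from Lemma~\ref{lemma:poly}, which already carries all the sign bookkeeping we need: the key observation is that $\operatorname{det}(J[i,j])$ and the two cubic ``off--diagonal'' monomials appearing in $d_{i,j,k}$ are, up to sign, exactly the polynomials $\mathcal{P}_{\sigma,U}$ of \eqref{eq:p} for suitably chosen small permutations $\sigma$.

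For part (i), I would fix $1\le i<j\le s$, set $U=\{u_1,u_2\}=\{i,j\}$, and take $\sigma$ to be the transposition of $\{1,2\}$. Then \eqref{eq:p} gives $\mathcal{P}_{\sigma,U}(h,\lambda)=J_{ii}J_{jj}-J_{ij}J_{ji}=\operatorname{det}(J[i,j])$, so Lemma~\ref{lemma:poly} immediately yields that this polynomial is either identically zero or a sum of terms with positive coefficients. (The identity permutation only gives $\mathcal{P}=0$, so it contributes nothing new.)

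For part (ii), I would fix $1\le i<j<k\le s$ and set $U=\{u_1,u_2,u_3\}=\{i,j,k\}$. On $\{1,2,3\}$ there are exactly two $3$-cycles; writing $\sigma_1$ for $1\mapsto2\mapsto3\mapsto1$ and $\sigma_2$ for $1\mapsto3\mapsto2\mapsto1$, a direct match of indices in \eqref{eq:p} shows
\[
\mathcal{P}_{\sigma_1,U}=J_{ij}J_{jk}J_{ki}-J_{ii}J_{jj}J_{kk},\qquad
\mathcal{P}_{\sigma_2,U}=J_{ik}J_{ji}J_{kj}-J_{ii}J_{jj}J_{kk}.
\]
Comparing with \eqref{def:dijk}, the coefficient $-2$ there splits as $-1-1$, one copy of $-J_{ii}J_{jj}J_{kk}$ coming from each $3$-cycle, so $d_{i,j,k}=\mathcal{P}_{\sigma_1,U}+\mathcal{P}_{\sigma_2,U}$. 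By Lemma~\ref{lemma:poly} each summand is either the zero polynomial or has only positive coefficients, and polynomials of this form are closed under addition (no cancellation can occur, since every coefficient involved is nonnegative); hence $d_{i,j,k}$ is either the zero polynomial or a sum of terms with positive coefficients.

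I expect no genuine obstacle here: the only point requiring care is reading off correctly which permutation $\sigma$ realizes a monomial such as $J_{ij}J_{jk}J_{ki}$ as $\prod_m J_{u_m,\sigma(u_m)}$, and then invoking the elementary closure--under--addition remark at the end. All of the substantive work has already been done in Lemma~\ref{lemma:poly} (and ultimately in Lemma~\ref{lemma:coef}).
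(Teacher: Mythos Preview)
Your proposal is correct and follows essentially the same approach as the paper: both parts are reduced to Lemma~\ref{lemma:poly} by choosing $U=\{i,j\}$ with the transposition for (i), and $U=\{i,j,k\}$ with the two $3$-cycles $\sigma_1,\sigma_2$ for (ii), then writing $d_{i,j,k}=\mathcal{P}_{\sigma_1,U}+\mathcal{P}_{\sigma_2,U}$. Your explicit remark that the class of ``zero or all-positive-coefficients'' polynomials is closed under addition is a nice touch that the paper leaves implicit.
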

\begin{proof}
(i) Let $U=\{i,j\}$, and let $\sigma:U \rightarrow U$ be a map such that $\sigma(i)=j$ and $\sigma(j)=i$.
Then, the polynomial \eqref{eq:p} in Lemma \ref{lemma:poly} can be written as 
\begin{align*}
\mathcal{P}_{\sigma,U}(h,\lambda) = J_{ii}J_{jj} - J_{ij}J_{ji}.
\end{align*}
Notice that $\operatorname{det} \left( J[i, j] \right)=J_{ii}J_{jj}-J_{ij}J_{ji}$.
So, by Lemma \ref{lemma:poly}, ${\rm det}(J[i,j])$ is either a zero polynomial or a sum of terms with positive coefficients.

(ii) Let $U=\{i,j,k\}$, and let $\sigma_1:U \rightarrow U$ be a map such that $\sigma_1(i)=j$, $\sigma_1(j)=k$, and $\sigma_1(k)=i$.
Then, the polynomial \eqref{eq:p} in Lemma \ref{lemma:poly} can be written as
\begin{align*}
\mathcal{P}_{\sigma_1,U}(h,\lambda) = J_{ij}J_{jk}J_{ki}-J_{ii}J_{jj}J_{kk}.
\end{align*}
By Lemma \ref{lemma:poly}, $J_{ij}J_{jk}J_{ki}-J_{ii}J_{jj}J_{kk}$ is either a zero polynomial or a sum of terms with positive coefficients. 
Similarly, if we let $\sigma_2:U\rightarrow U$ be another map such that $\sigma_2(i)=k$, $\sigma_2(j)=i$, and $\sigma_2(k)=j$, then the polynomial \eqref{eq:p} in Lemma \ref{lemma:poly} can be written as
\begin{align*}
\mathcal{P}_{\sigma_2,U}(h,\lambda) = J_{ik}J_{ji}J_{kj}-J_{ii}J_{jj}J_{kk}.
\end{align*}
By Lemma \ref{lemma:poly}, $J_{ik}J_{ji}J_{kj}-J_{ii}J_{jj}J_{kk}$ is either a zero polynomial or a sum of terms with positive coefficients.
Notice that $d_{i, j, k}=\mathcal{P}_{\sigma_1,U}(h,\lambda)+\mathcal{P}_{\sigma_2,U}(h,\lambda)$.
So, $d_{i, j, k}$ is either a zero polynomial or a sum of terms with positive coefficients.
\end{proof}

\section{Proof of Theorem \ref{thm:rank<3}}\label{sec:proofmain}
\begin{lemma}\label{lemma:jpp<0jqq<0}
Let $G$ be a zero-one network. 
Let $J(h,\lambda) \in \Q[h,\lambda]^{s \times s}$ be the matrix corresponding to $G$ defined in \eqref{eq:defjhl}. 
For any $h^* \in \mathbb{R}_{> 0}^s$ and for any $\lambda^* \in \mathbb{R}_{\geq 0}^t$, if there exist $p$ and $q$ ($1 \leq p<q \leq s$) such that $\operatorname{det} \left( J(h^{*},\lambda^*)[p, q] \right) > 0$, then $J(h^*,\lambda^*)_{pp}<0$ and $J(h^*,\lambda^*)_{qq}<0$.
\end{lemma}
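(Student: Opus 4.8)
The plan is to argue by contradiction, using only the sign structure already established in Corollary~\ref{coro:jkkis-} and Lemma~\ref{lemma:coef}. First I would record the $2\times 2$ determinant identity $\operatorname{det}(J(h^*,\lambda^*)[p,q]) = J(h^*,\lambda^*)_{pp}J(h^*,\lambda^*)_{qq} - J(h^*,\lambda^*)_{pq}J(h^*,\lambda^*)_{qp}$, and note that by Corollary~\ref{coro:jkkis-} each diagonal entry $J(h,\lambda)_{kk}$ is a (possibly empty) sum of monomials with negative coefficients; hence $J(h^*,\lambda^*)_{pp}\le 0$ and $J(h^*,\lambda^*)_{qq}\le 0$ for every $h^*\in\mathbb{R}^s_{>0}$ and $\lambda^*\in\mathbb{R}^t_{\ge 0}$. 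Thus it suffices to rule out the case $J(h^*,\lambda^*)_{pp}=0$, the case $J(h^*,\lambda^*)_{qq}=0$ being entirely symmetric after interchanging $p$ and $q$.

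Next I would expand $J(h^*,\lambda^*)_{pp}=\sum_{i=1}^{t}\mu^{(i)}_{pp}\lambda^*_i h^*_p$ via \eqref{eq:jhlinm}. Since $\mu^{(i)}_{pp}\le 0$ by Lemma~\ref{lemma:coef}, $\lambda^*_i\ge 0$, and $h^*_p>0$, each summand is $\le 0$; so $J(h^*,\lambda^*)_{pp}=0$ forces $\mu^{(i)}_{pp}\lambda^*_i=0$ for every $i\in\{1,\ldots,t\}$. For every $i$ with $\lambda^*_i>0$ this gives $\mu^{(i)}_{pp}=0$, and then Lemma~\ref{lemma:coef} yields $\lvert\mu^{(i)}_{pq}\rvert\le -\mu^{(i)}_{pp}=0$, i.e. $\mu^{(i)}_{pq}=0$. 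For the remaining indices $i$ (those with $\lambda^*_i=0$) the corresponding terms of $J(h^*,\lambda^*)_{pq}=\sum_{i=1}^{t}\mu^{(i)}_{pq}\lambda^*_i h^*_q$ vanish anyway, so altogether $J(h^*,\lambda^*)_{pq}=0$.

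Plugging this back into the determinant identity gives $\operatorname{det}(J(h^*,\lambda^*)[p,q]) = 0\cdot J(h^*,\lambda^*)_{qq} - 0\cdot J(h^*,\lambda^*)_{qp}=0$, contradicting the hypothesis $\operatorname{det}(J(h^*,\lambda^*)[p,q])>0$. Hence $J(h^*,\lambda^*)_{pp}\ne 0$, and combined with $J(h^*,\lambda^*)_{pp}\le 0$ we conclude $J(h^*,\lambda^*)_{pp}<0$; the identical argument with $p$ and $q$ swapped gives $J(h^*,\lambda^*)_{qq}<0$. The only mildly delicate point is the elementary bookkeeping that a sum of monomials all of which are nonpositive when evaluated at positive $h^*$ and nonnegative $\lambda^*$ vanishes exactly when each coefficient $\mu^{(i)}_{pp}$ paired with a nonzero $\lambda^*_i$ vanishes; everything else is just the diagonal-dominance bound $\lvert\mu^{(i)}_{pq}\rvert\le -\mu^{(i)}_{pp}$ from Lemma~\ref{lemma:coef}, so I do not anticipate any real obstacle here.
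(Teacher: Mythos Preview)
Your proof is correct and takes a somewhat different route from the paper's. The paper expands the determinant $\det(J(h^*,\lambda^*)[p,q])$ as the bilinear sum $h_p^*h_q^*\sum_{i,j}\lambda_i^*\lambda_j^*\bigl(\mu^{(i)}_{pp}\mu^{(j)}_{qq}-\mu^{(i)}_{pq}\mu^{(j)}_{qp}\bigr)$, picks out a single strictly positive summand to obtain witnesses $i',j'$ with $\lambda^*_{i'},\lambda^*_{j'}>0$ and $\mu^{(i')}_{pp}\mu^{(j')}_{qq}>0$, and then bounds $J(h^*,\lambda^*)_{pp}$ and $J(h^*,\lambda^*)_{qq}$ above by the corresponding negative single-ray contributions. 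You instead argue by contradiction at the level of the matrix entries: assuming $J(h^*,\lambda^*)_{pp}=0$, the nonpositivity of each summand $\mu^{(i)}_{pp}\lambda^*_i h^*_p$ forces $\mu^{(i)}_{pp}=0$ whenever $\lambda^*_i>0$, and then the diagonal-dominance bound $|\mu^{(i)}_{pq}|\le-\mu^{(i)}_{pp}$ from Lemma~\ref{lemma:coef} yields $J(h^*,\lambda^*)_{pq}=0$, so the $2\times 2$ determinant vanishes. Both arguments rest on the same inequality from Lemma~\ref{lemma:coef}; your contrapositive version avoids the double-sum expansion and is slightly more direct, while the paper's forward argument makes the role of the individual extreme rays more explicit.
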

\begin{proof}
By \eqref{eq:jhlinm}, for any $k \in \{1,\ldots,s\}$ and for any $\ell \in \{1,\ldots,s\}$, we have
\begin{align}\label{eq:jhlkl}
J(h,\lambda)_{k\ell} = \sum_{i=1}^{t} \mu^{(i)}_{k\ell} \lambda_{i} h_\ell.
\end{align}
Then, we have
\begin{align}\label{eq:detjpqhl}
\operatorname{det} \left( J(h^{*},\lambda^*)[p, q] \right) =& \left | \begin{matrix}
J(h^{*},\lambda^*)_{pp} & J(h^{*},\lambda^*)_{pq}  \\
J(h^{*},\lambda^*)_{qp} & J(h^{*},\lambda^*)_{qq}  \\
\end{matrix} \right | \notag \\
=& 
\left | \begin{matrix}
\sum_{i=1}^{t} \lambda_{i}^* \mu_{pp}^{(i)} & \sum_{i=1}^{t} \lambda_{i}^* \mu_{pq}^{(i)}  \\
\sum_{i=1}^{t} \lambda_{i}^* \mu_{qp}^{(i)} & \sum_{i=1}^{t} \lambda_{i}^* \mu_{qq}^{(i)}  \\
\end{matrix} \right |
\cdot
\left | \begin{matrix}
h_p^* & 0  \\
0 & h_q^*  \\
\end{matrix} \right | \notag \\
=& h_p^* h_q^* \sum_{i=1}^{t}\sum_{j=1}^{t} \lambda_{i}^*\lambda_{j}^* \left( \mu_{pp}^{(i)}\mu_{qq}^{(j)} - \mu_{pq}^{(i)}\mu_{qp}^{(j)} \right).
\end{align}
Recall that $h^* \in \mathbb{R}_{> 0}^s$.
So, we have $h_p^*, h_q^* > 0$.
By \eqref{eq:detjpqhl} and by the hypothesis that $\operatorname{det} \left( J(h^{*},\lambda^*)[p, q] \right) > 0$, there exist $i^{\prime},j^{\prime}\in \{1,\ldots,t\}$ such that
\begin{align}\label{eq:llmm-mm}
\lambda^*_{i^{\prime}}\lambda^*_{j^{\prime}} \left( \mu_{pp}^{(i^{\prime})}\mu_{qq}^{(j^{\prime})} - \mu_{pq}^{(i^{\prime})}\mu_{qp}^{(j^{\prime})} \right) > 0.
\end{align}
Recall that $\lambda^* \in \mathbb{R}_{\geq 0}^s$.
So, by \eqref{eq:llmm-mm}, we have
\begin{align}\label{eq:l>0l>0}
\lambda^*_{i^{\prime}} > 0 \; \text{and} \; \lambda^*_{j^{\prime}} > 0,
\end{align}
and
\begin{align}\label{eq:mm-mm>0}
\mu_{pp}^{(i^{\prime})}\mu_{qq}^{(j^{\prime})} - \mu_{pq}^{(i^{\prime})}\mu_{qp}^{(j^{\prime})}>0.
\end{align}
By Lemma \ref{lemma:coef}, for any $i,j\in \{1,\ldots,t\}$, we have
\begin{align}\label{eq:m<=0m<=0}
\mu_{pp}^{(i)} \leq 0 \; \text{and} \; \mu_{qq}^{(i)} \leq 0,
\end{align}
and
\begin{align}\label{eq:muppqq>}
\mu_{pp}^{(i)}\mu_{qq}^{(j)} \geq \lvert \mu_{pq}^{(i)}\mu_{qp}^{(j)} \rvert.
\end{align}
Notice that by \eqref{eq:m<=0m<=0}, we have $\mu_{pp}^{(i)}\mu_{qq}^{(j)} \geq 0$.
If $\mu_{pp}^{(i)}\mu_{qq}^{(j)}=0$, then by \eqref{eq:muppqq>}, we have $\mu_{pq}^{(i)}\mu_{qp}^{(j)}=0$.
Then, the left-hand side of \eqref{eq:mm-mm>0} is equal to zero, which gives a contradiction.
So, we have
\begin{align}\label{eq:mm>0}
\mu_{pp}^{(i)}\mu_{qq}^{(j)}>0.
\end{align}
By \eqref{eq:m<=0m<=0} and \eqref{eq:mm>0}, we have 
\begin{align}\label{eq:m<0m<0}
\mu_{pp}^{(i)} < 0 \; \text{and} \; \mu_{qq}^{(i)} < 0.
\end{align}
By \eqref{eq:l>0l>0} and \eqref{eq:m<0m<0}, we have 
\begin{align}\label{eq:lm<0lm<0}
\lambda_{i^{\prime}}^*\mu_{pp}^{(i^{\prime})}h_p^*<0 \; \text{and} \; \lambda_{j^{\prime}}^*\mu_{qq}^{(j^{\prime})}h_q^*<0.
\end{align}
By \eqref{eq:jhlkl}, \eqref{eq:m<=0m<=0}, and \eqref{eq:lm<0lm<0}, we have 
\begin{align*}
&J(h^{*},\lambda^*)_{pp} = \sum_{i=1}^{t} \lambda_{i}^* \mu_{pp}^{(i)} h_p^* \leq \lambda_{i^{\prime}}^* \mu_{pp}^{(i^{\prime})} h_p^* <0, \; \text{and} \notag \\
&J(h^{*},\lambda^*)_{qq} = \sum_{i=1}^{t} \lambda_{i}^* \mu_{qq}^{(i)} h_q^* \leq \lambda_{j^{\prime}}^* \mu_{qq}^{(j^{\prime})} h_q^*<0.
\end{align*}
\end{proof}

\begin{lemma}\label{lemma:px>0}
 Suppose $x=(x_1, \ldots, x_n)$.
 Let $p(x)$ and $q(x)$ be polynomials in 
 $\mathbb{Q}[x]$  such that $p(x)-q(x)$ is either a zero polynomial or a sum of terms with positive coefficients. If there exists $x^* \in \mathbb{R}^n_{\geq 0}$ such that $q(x^*)>0$, then we have $p(x^*)>0$.
\end{lemma}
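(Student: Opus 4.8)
The plan is the obvious decomposition $p = q + (p-q)$, evaluated at the given nonnegative point. The one fact doing all the work is that a monomial $c\,x_1^{e_1}\cdots x_n^{e_n}$ with rational coefficient $c>0$ takes a nonnegative value at any $x^*\in\mathbb{R}^n_{\geq 0}$: each $x_i^*\geq 0$ forces $(x_i^*)^{e_i}\geq 0$, hence the whole product is $\geq 0$. Summing finitely many such terms preserves nonnegativity, so any polynomial that is ``a sum of terms with positive coefficients'' in its monomial expansion is nonnegative on $\mathbb{R}^n_{\geq 0}$.

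First I would dispose of the degenerate case: if $p(x)-q(x)$ is the zero polynomial, then $p(x^*)=q(x^*)>0$ and we are done. Otherwise, write $p(x)-q(x)$ in its monomial expansion as a finite sum of monomials, each having a positive coefficient. By the observation above, $\bigl(p-q\bigr)(x^*)\geq 0$, i.e. $p(x^*)\geq q(x^*)$. Combining this with the hypothesis $q(x^*)>0$ yields $p(x^*)>0$, as claimed.

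There is no genuine obstacle here; this is a bookkeeping lemma that packages the sign reasoning used repeatedly with the polynomials $J(h,\lambda)_{kk}$, $\det\!\left(J[i,j]\right)$, and $d_{i,j,k}$ coming from Corollary \ref{coro:jkkis-} and Lemma \ref{lemma:positivePM}. The only point requiring a little care is that ``sum of terms with positive coefficients'' must be understood as a statement about the monomial expansion of the polynomial, so that nonnegativity of each evaluated term is immediate and no cancellation can occur — all coefficients share the same sign, so passing to $x^*\in\mathbb{R}^n_{\geq 0}$ cannot decrease the value below $q(x^*)$.
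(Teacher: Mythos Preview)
Your proof is correct and takes essentially the same approach as the paper: decompose $p(x^*)=(p(x^*)-q(x^*))+q(x^*)$, observe that the first summand is nonnegative at any $x^*\in\mathbb{R}^n_{\geq 0}$ by the hypothesis on coefficients, and conclude. The paper's proof is the same one-line computation, without the explicit case split or the spelled-out justification for why a positive-coefficient polynomial is nonnegative on the nonnegative orthant.
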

\begin{proof}
By the fact that $p(x)-q(x)$ is either a zero polynomial or a sum of terms with positive coefficients, we have $p(x^*)-q(x^*) \geq 0$. Therefore, we have $p(x^*)=(p(x^*)-q(x^*))+q(x^*)>0$.
\end{proof}

\begin{lemma}[Theorem 7.1.2 in \cite{mirsky2012}]\label{prop:akhl}
Given a matrix $M \in \mathbb{R}^{n \times n}$, let $p(z) := \operatorname{det}(z I-M) = z^n+a_1z^{n-1}+\cdots+a_n$. Then, for any $k \in \{1,\ldots,n\}$, we have
\begin{align*}
a_{k} = (-1)^{k} \sum_{1 \leq i_{1}<\ldots<i_{k} \leq n} \operatorname{det} \left( M[i_{1}, \ldots, i_{k}] \right).
\end{align*}
\end{lemma}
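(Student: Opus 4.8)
The plan is to expand $\operatorname{det}(zI-M)$ by multilinearity of the determinant in its columns and then read off the coefficient of each power of $z$. Write $m_1,\ldots,m_n$ for the columns of $M$ and $e_1,\ldots,e_n$ for the standard basis vectors, so that the $j$-th column of $zI-M$ is $z e_j + (-m_j)$. Since the determinant is multilinear in the columns,
\[
\operatorname{det}(zI-M) = \sum_{T \subseteq \{1,\ldots,n\}} \operatorname{det}(C_T),
\]
where $C_T$ denotes the matrix whose $j$-th column is $z e_j$ if $j \in T$ and $-m_j$ if $j \notin T$. The whole argument then reduces to evaluating each $\operatorname{det}(C_T)$.

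Next I would show $\operatorname{det}(C_T) = z^{|T|}(-1)^{|S|}\operatorname{det}(M[S])$, where $S := \{1,\ldots,n\}\setminus T$. Apply to both the rows and the columns of $C_T$ the single permutation that lists the indices of $T$ first and those of $S$ afterwards; this leaves $\operatorname{det}(C_T)$ unchanged since the two sign changes cancel. For $j \in T$ the column $z e_j$ has its only nonzero entry in row $j$, so after reordering these columns form a top-left block $z I_{|T|}$ with a zero block below it, while the columns indexed by $S$ restrict, on the rows indexed by $S$, to $-M[S]$. The reordered matrix is therefore block upper triangular of the form $\left(\begin{smallmatrix} z I_{|T|} & P \\ 0 & -M[S] \end{smallmatrix}\right)$, whence $\operatorname{det}(C_T) = z^{|T|}\operatorname{det}(-M[S]) = z^{|T|}(-1)^{|S|}\operatorname{det}(M[S])$.

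Finally, substituting back and grouping the subsets $S$ by their size $k=|S|$ (with the empty set contributing the leading term $z^n$) gives
\[
\operatorname{det}(zI-M) = \sum_{k=0}^{n} (-1)^{k} z^{n-k} \sum_{1 \leq i_1 < \cdots < i_k \leq n} \operatorname{det}(M[i_1,\ldots,i_k]),
\]
and comparing coefficients with $p(z) = z^n + a_1 z^{n-1} + \cdots + a_n$ yields $a_k = (-1)^k \sum_{1 \leq i_1 < \cdots < i_k \leq n} \operatorname{det}(M[i_1,\ldots,i_k])$, as claimed.

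The only delicate point is the sign bookkeeping in the step $\operatorname{det}(C_T) = z^{|T|}(-1)^{|S|}\operatorname{det}(M[S])$: one must verify that the simultaneous row/column reordering introduces no net sign and that the resulting block structure is genuinely upper triangular. Once this is in place, everything else is a routine comparison of polynomial coefficients. Alternatively, since this is a classical identity, one could simply invoke Theorem 7.1.2 of \cite{mirsky2012} verbatim; the argument above is included because it is short and self-contained.
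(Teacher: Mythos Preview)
Your proof is correct. The paper does not prove this lemma at all; it simply records it as Theorem 7.1.2 of Mirsky's \emph{An Introduction to Linear Algebra} and uses it as a black box. Your multilinearity-in-columns expansion, followed by the simultaneous row/column reordering to reveal the block upper-triangular structure $\left(\begin{smallmatrix} z I_{|T|} & P \\ 0 & -M[S] \end{smallmatrix}\right)$, is a clean and standard route to the identity, and your sign bookkeeping is right: the same permutation applied to both rows and columns contributes $\operatorname{sgn}(\sigma)^2=1$, so no net sign appears. The only thing the paper ``buys'' by citing Mirsky is brevity; your argument buys self-containment at the cost of a short paragraph, which is a reasonable trade.
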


\begin{lemma}\label{lemma:positivedeth}
Let $G$ be a zero-one network. 
Let $J(h,\lambda) \in \Q[h,\lambda]^{s \times s}$ be the matrix corresponding to $G$ defined in \eqref{eq:defjhl}. 
Let $q(h,\lambda;z)$ be the polynomial defined in \eqref{def:qhlz}.
Let $H_{2}(h,\lambda)$ be the second Hurwitz matrix of $q(h,\lambda;z)$.
Then we have the following statements.
\begin{enumerate}[(i)]
% \item For any $h^* \in \mathbb{R}_{> 0}^s$ and for any $\lambda^* \in \mathbb{R}_{\geq 0}^t$, if there exists $p\;(1 \leq p \leq s)$ such that $J(h^*,\lambda^*)_{pp} <0$, then $\operatorname{det}(H_{1}(h^*,\lambda^*))>0$.
\item For any $h^* \in \mathbb{R}_{> 0}^s$ and for any $\lambda^* \in \mathbb{R}_{\geq 0}^t$, if there exist $p$ and $q\;(1 \leq p<q \leq s)$ such that 
\begin{align*}
\operatorname{det} \left( J(h^*,\lambda^*)[p, q] \right)>0,
\end{align*}
then $\operatorname{det}(H_{2}(h^*,\lambda^*))>0$.
\item For any $h^* \in \mathbb{R}_{> 0}^s$ and for any $\lambda^* \in \mathbb{R}_{\geq 0}^t$, if there exist $p$, $q$, and $r\;(1 \leq p<q<r \leq s)$ such that 
\begin{align*}
d_{p,q,r}(h^*,\lambda^*)>0,
\end{align*}
then $\operatorname{det}(H_{2}(h^*,\lambda^*))>0$.
\end{enumerate}
\end{lemma}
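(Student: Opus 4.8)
The plan is to first establish, as an identity in $\mathbb{Q}[h,\lambda]$, the positive expansion
\begin{align*}
\operatorname{det}(H_{2}(h,\lambda)) = -\sum_{1 \le i < j \le s}\left(J_{ii} + J_{jj}\right)\operatorname{det}\left(J[i,j]\right) + \sum_{1 \le i < j < k \le s} d_{i,j,k},
\end{align*}
where I abbreviate $J := J(h,\lambda)$, and then to extract both (i) and (ii) from the sign structure of its right-hand side. To prove the identity I would start from $q(h,\lambda;z) = z^r + b_1 z^{r-1} + \cdots + b_r$ as in \eqref{def:qhlz} and use that $z^{s-r}q(h,\lambda;z) = \operatorname{det}(zI - J)$ is the characteristic polynomial of $J$; by Lemma \ref{prop:akhl} this gives $b_1 = -\sum_{i} J_{ii}$, $b_2 = \sum_{i<j}\operatorname{det}(J[i,j])$, and $b_3 = -\sum_{i<j<k}\operatorname{det}(J[i,j,k])$ (where I read $b_3 := 0$ if $r<3$, which is consistent since then $\sum_{i<j<k}\operatorname{det}(J[i,j,k])$ is the vanishing degree-$(s-3)$ coefficient of $\operatorname{det}(zI-J)$). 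Since $q$ is monic, the definition of the second Hurwitz matrix gives $\operatorname{det}(H_2(h,\lambda)) = b_1 b_2 - b_3$.

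Next I would do the combinatorial bookkeeping. A cofactor expansion of a $3\times 3$ principal minor, compared against \eqref{def:dijk}, shows that for every $i<j<k$,
\begin{align*}
\operatorname{det}(J[i,j,k]) = d_{i,j,k} + J_{ii}\operatorname{det}(J[j,k]) + J_{jj}\operatorname{det}(J[i,k]) + J_{kk}\operatorname{det}(J[i,j]).
\end{align*}
Substituting this and the formulas for $b_1,b_2,b_3$ into $b_1 b_2 - b_3$, I would split the double sum $\left(\sum_{\ell} J_{\ell\ell}\right)\left(\sum_{i<j}\operatorname{det}(J[i,j])\right)$ according to whether the outer index $\ell$ lies in the pair $\{i,j\}$ or not. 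After relabelling, the ``$\ell\notin\{i,j\}$'' part is exactly the negative of the terms $\sum_{i<j<k}\left(J_{ii}\operatorname{det}(J[j,k]) + J_{jj}\operatorname{det}(J[i,k]) + J_{kk}\operatorname{det}(J[i,j])\right)$ produced by $-b_3$, so those cancel and only the claimed right-hand side remains. I expect this bookkeeping --- tracking which subsets contribute which monomials and with what multiplicity --- to be the main obstacle; it is finite and elementary but easy to get wrong.

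With the identity in hand the rest is short. By Corollary \ref{coro:jkkis-} each $J_{ii}$ is a zero polynomial or a sum of monomials with negative coefficients, hence $-(J_{ii}+J_{jj})$ is a zero polynomial or a sum of monomials with positive coefficients; by Lemma \ref{lemma:positivePM} each $\operatorname{det}(J[i,j])$ and each $d_{i,j,k}$ is a zero polynomial or a sum of monomials with positive coefficients. Therefore every summand on the right-hand side, and thus $\operatorname{det}(H_2(h,\lambda))$ itself, is a zero polynomial or a sum of monomials with positive coefficients. For (i): if $\operatorname{det}(J(h^*,\lambda^*)[p,q])>0$, then Lemma \ref{lemma:jpp<0jqq<0} gives $J(h^*,\lambda^*)_{pp}<0$ and $J(h^*,\lambda^*)_{qq}<0$, so the single summand $-(J_{pp}+J_{qq})\operatorname{det}(J[p,q])$ is strictly positive at $(h^*,\lambda^*)$; since the difference between $\operatorname{det}(H_2(h,\lambda))$ and this summand is still a zero polynomial or a sum of monomials with positive coefficients, Lemma \ref{lemma:px>0} yields $\operatorname{det}(H_2(h^*,\lambda^*))>0$. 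For (ii): if $d_{p,q,r}(h^*,\lambda^*)>0$, apply Lemma \ref{lemma:px>0} the same way with $d_{p,q,r}$ as the strictly positive summand.
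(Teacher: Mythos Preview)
Your proposal is correct and follows essentially the same approach as the paper. You derive the same key identity $\operatorname{det}(H_2(h,\lambda)) = \sum_{i<j<k} d_{i,j,k} - \sum_{i<j}(J_{ii}+J_{jj})\operatorname{det}(J[i,j])$ by the same bookkeeping (expanding $b_1 b_2 - b_3$ via Lemma~\ref{prop:akhl} and the cofactor relation between $\operatorname{det}(J[i,j,k])$ and $d_{i,j,k}$), and then invoke Corollary~\ref{coro:jkkis-}, Lemma~\ref{lemma:positivePM}, Lemma~\ref{lemma:jpp<0jqq<0}, and Lemma~\ref{lemma:px>0} exactly as the paper does.
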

\begin{proof}
By Lemma \ref{prop:akhl}, we have
\begin{align}\label{eq:a0hl}
b_{1}(h,\lambda) &= -\sum_{1 \leq i \leq s} \operatorname{det} \left( J[i] \right), \notag \\
b_{2}(h,\lambda) &= \sum_{1 \leq i<j \leq s} \operatorname{det} \left( J[i,j] \right), \text{ and} \notag \\
b_{3}(h,\lambda) &= -\sum_{1 \leq i<j<k \leq s} \operatorname{det} \left( J[i,j,k] \right).
\end{align}
% For any $1 \leq i<j<k \leq s$, we define
% \begin{align}\label{def:dijk}
% d_{i, j, k} :=& \operatorname{det} \left( J[i, j, k] \right) - J_{ii} \operatorname{det} \left( J[j, k] \right) - J_{jj} \operatorname{det} \left( J[i, k] \right) - J_{kk} \operatorname{det} \left( J[i, j] \right) \\
% =& J_{ii}J_{jj}J_{kk} + J_{ji}J_{kj}J_{ik} + J_{ki}J_{ij}J_{jk} - J_{ii}J_{jk}J_{kj} - J_{ik}J_{jj}J_{ki} - J_{ij}J_{ji}J_{kk} \notag \\
% & - J_{ii}(J_{jj}J_{kk}-J_{jk}J_{kj}) - J_{jj}(J_{ii}J_{kk}-J_{ik}J_{ki}) - J_{kk}(J_{ii}J_{jj}-J_{ij}J_{ji}) \notag \\
% =& (J_{ki}J_{ij}J_{jk}-J_{ii}J_{jj}J_{kk}) + (J_{ji}J_{kj}J_{ik}-J_{ii}J_{jj}J_{kk}) \label{def:dijk2}.
% \end{align}
By \eqref{def:dijk}, \eqref{eq:a0hl}, and by Definition \ref{def:hurwitz}, we have
{\small \begin{flalign*}
&\quad\quad\quad\quad\;\;\operatorname{det}(H_{2}(h,\lambda))&\notag
\end{flalign*}}
{\footnotesize\begin{flalign*}
\quad\quad\quad\quad=&\left | \begin{matrix}
b_{1}(h,\lambda) & 1  \\
b_{3}(h,\lambda) & b_{2}(h,\lambda)  \\
\end{matrix} \right |& \notag \\
\quad\quad\quad\quad=& \sum_{1 \leq i<j<k \leq s} \operatorname{det} \left( J[i, j, k] \right) - \sum_{1 \leq i \leq s} \operatorname{det} \left( J[i] \right) \sum_{1 \leq i<j \leq s} \operatorname{det} \left( J[i, j] \right) &
\end{flalign*}
\begin{flalign*}
\quad\quad\quad\quad=& \sum_{1 \leq i<j<k \leq s} \operatorname{det} \left( J[i, j, k] \right) - \sum_{1 \leq i<j \leq s} \sum_{1 \leq k \leq s \atop k \neq i,j} J_{kk} \operatorname{det} \left( J[i, j] \right) - \notag& \\
\quad\quad\quad\quad& \sum_{1 \leq i<j \leq s} (J_{ii}+J_{jj}) \operatorname{det} \left( J[i, j] \right)&
\end{flalign*}
\begin{flalign*}
\quad\quad\quad\quad=& \sum_{1 \leq i<j<k \leq s} \operatorname{det} \left( J[i, j, k] \right) - \sum_{1 \leq i<j \leq s \atop 1 \leq k < i} J_{kk} \operatorname{det} \left( J[i, j] \right) - \sum_{1 \leq i<j \leq s \atop i < k < j} J_{kk} \operatorname{det} \left( J[i, j] \right) - \notag &\\
\quad\quad\quad\quad& \sum_{1 \leq i<j \leq s \atop j < k \leq s} J_{kk} \operatorname{det} \left( J[i, j] \right) - \sum_{1 \leq i<j \leq s} (J_{ii}+J_{jj}) \operatorname{det} \left( J[i, j] \right)&
\end{flalign*}
\begin{flalign*}
\quad\quad\quad\quad=& \sum_{1 \leq i<j<k \leq s} \Big( \operatorname{det} \left( J[i, j, k] \right) - J_{ii} \operatorname{det} \left( J[j, k] \right) - J_{jj} \operatorname{det} \left( J[i, k] \right) - \notag &\\
\quad\quad\quad\quad& \quad\quad\quad\quad\quad\quad J_{kk} \operatorname{det} \left( J[i, j] \right) \Big) - \sum_{1 \leq i<j \leq s} (J_{ii}+J_{jj}) \operatorname{det} \left( J[i, j] \right) &
\end{flalign*}
\begin{flalign*}
\quad\quad\quad\quad=& \sum_{1 \leq i<j<k \leq s} (J_{ij}J_{jk}J_{ki} + J_{ik}J_{ji}J_{kj}-2J_{ii}J_{jj}J_{kk}) - \sum_{1 \leq i<j \leq s} (J_{ii}+J_{jj}) \operatorname{det} \left( J[i, j] \right)&
\end{flalign*}
\begin{flalign}\label{eq:deth2hl=}
\quad=& \sum_{1 \leq i<j<k \leq s} d_{i, j, k} - \sum_{1 \leq i<j \leq s} (J_{ii}+J_{jj}) \operatorname{det} \left( J[i, j] \right).&
\end{flalign}
} 
(i) By \eqref{eq:deth2hl=}, we have
{\small \begin{align}\label{eq:deth2+jppjqqdetjpq}
\operatorname{det}(H_{2}(h,\lambda))+(J_{pp}+J_{qq}) \operatorname{det} \left( J[p, q] \right) = \sum_{1 \leq i<j<k \leq s} d_{i, j, k} - \sum_{1 \leq i<j \leq s \atop i,j \ne p,q}(J_{ii}+J_{jj}) \operatorname{det} \left( J[i, j] \right).
\end{align}
}By Corollary \ref{coro:jkkis-} and by Lemma \ref{lemma:positivePM}, the right-hand side of \eqref{eq:deth2+jppjqqdetjpq} is either a zero polynomial or a sum of terms with positive coefficients.
By the hypothesis that $\operatorname{det} \left( J(h^*,\lambda^*)[p, q] \right)>0$ and by Lemma \ref{lemma:jpp<0jqq<0}, we have $J(h^*,\lambda^*)_{pp} <0$ and $J(h^*,\lambda^*)_{qq} <0$.
So, we have
\begin{align}\label{eq:-jpp+jqqdetjpq>0}
-(J(h^{*},\lambda^*)_{pp}+J(h^{*},\lambda^*)_{qq}) \operatorname{det} \left( J(h^*,\lambda^*)[p, q] \right)>0.
\end{align}
Then, by \eqref{eq:deth2+jppjqqdetjpq}, \eqref{eq:-jpp+jqqdetjpq>0}, and by Lemma \ref{lemma:px>0}, we have $\operatorname{det}(H_{2}(h^*,\lambda^*))>0$.

(ii) By \eqref{eq:deth2hl=}, we have
\begin{align}\label{eq:deth2-dpqr}
\operatorname{det}(H_{2}(h,\lambda))-d_{p,q,r} = \sum_{1 \leq i<j<k \leq s \atop i,j,k \ne p,q,r} d_{i, j, k} - \sum_{1 \leq i<j \leq s}(J_{ii}+J_{jj}) \operatorname{det} \left( J[i, j] \right).
\end{align}
By Corollary \ref{coro:jkkis-} and by Lemma \ref{lemma:positivePM}, the right-hand side of \eqref{eq:deth2-dpqr} is either a zero polynomial or a sum of terms with positive coefficients.
Then, by the hypothesis that $d_{p,q,r}(h^*,\lambda^*)>0$ and by Lemma \ref{lemma:px>0}, we have $\operatorname{det}(H_{2}(h^*,\lambda^*))>0$.
\end{proof}

\begin{lemma}\label{lemma:h1>0h2>0}
Let $G$ be a zero-one network with a stoichiometric matrix $\mathcal{N}$.
Let $r={\rm rank}({\mathcal N})$. 
Let $J(h,\lambda) \in \Q[h,\lambda]^{s \times s}$ be the matrix corresponding to $G$ defined in \eqref{eq:defjhl}. 
Let $q(h,\lambda;z)$ be the polynomial defined in \eqref{def:qhlz} and we write
\begin{align*}
q(h,\lambda;z)=\frac{1}{z^{s-r}}\operatorname{det}(z I-J(h,\lambda))=z^{r}+b_{1}(h,\lambda) z^{r-1}+\cdots+b_{r}(h,\lambda).
\end{align*}
For any positive integer $i\;(i\leq r)$, let $H_i(h,\lambda)$ be the $i$-th Hurwitz matrix of $q(h,\lambda;z)$.
% Let $H_{1}(h,\lambda),\ldots,H_{r}(h,\lambda)$ be the Hurwitz matrices of $q(h,\lambda;z)$.
% If ${\rm rank}({\mathcal N}) = 2$, we can rewrite $q_{h,\lambda}(z)$ as
% \begin{align*}
% q_{h,\lambda}(z) = a_{0}(h,\lambda) z^{2} + a_{1}(h,\lambda) z + a_{2}(h,\lambda),
% \end{align*}
Then for any $h^* \in \mathbb{R}_{> 0}^s$ and for any $\lambda^* \in \mathbb{R}_{\geq 0}^t$, we have 
\begin{enumerate}[(i)]
\item $\operatorname{det}(H_{1}(h^*,\lambda^*)) > 0$ if $b_{2}(h^*,\lambda^*) \neq 0$, and
\item $\operatorname{det}(H_{2}(h^*,\lambda^*)) > 0$ if $b_{3}(h^*,\lambda^*) \neq 0$.
\end{enumerate}
\end{lemma}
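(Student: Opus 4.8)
The plan is to reduce both statements to the machinery already set up in Lemma \ref{lemma:positivedeth}, Corollary \ref{coro:jkkis-}, Lemma \ref{lemma:positivePM}, and Lemma \ref{lemma:px>0}. For part (i), first I would observe that $\det(H_1(h,\lambda)) = b_1(h,\lambda)$, which by Lemma \ref{prop:akhl} equals $-\sum_{1\le i\le s}\det(J[i]) = -\sum_{1\le i\le s}J_{ii}$. By Corollary \ref{coro:jkkis-} each $J_{ii}$ is either the zero polynomial or a sum of terms with negative coefficients, so $-b_1(h,\lambda)$... wait, more precisely $b_1 = -\sum J_{ii}$ is either zero or a sum of terms with positive coefficients. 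The hypothesis is $b_2(h^*,\lambda^*)\neq 0$; since $b_2 = \sum_{i<j}\det(J[i,j])$ and by Lemma \ref{lemma:positivePM}(i) each $\det(J[i,j])$ is zero or a sum of terms with positive coefficients, $b_2(h^*,\lambda^*)\neq 0$ forces $b_2(h^*,\lambda^*) > 0$, hence some $\det(J(h^*,\lambda^*)[p,q]) > 0$. Then Lemma \ref{lemma:jpp<0jqq<0} gives $J(h^*,\lambda^*)_{pp} < 0$, so $b_1(h^*,\lambda^*) = -\sum_i J(h^*,\lambda^*)_{ii} \ge -J(h^*,\lambda^*)_{pp} > 0$ (using that every $J_{ii}$ evaluates to a nonpositive number at $(h^*,\lambda^*)$, again by Corollary \ref{coro:jkkis-}). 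This proves (i).

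For part (ii), the strategy is: from $b_3(h^*,\lambda^*)\neq 0$ extract a positive second-order principal minor or a positive $d_{p,q,r}$, and then invoke Lemma \ref{lemma:positivedeth}. Concretely, by Lemma \ref{prop:akhl}, $b_3(h,\lambda) = -\sum_{i<j<k}\det(J[i,j,k])$, and the identity used inside the proof of Lemma \ref{lemma:positivedeth} (the expansion of $3\times 3$ principal minors) gives $\det(J[i,j,k]) = d_{i,j,k} + J_{ii}\det(J[j,k]) + J_{jj}\det(J[i,k]) + J_{kk}\det(J[i,j])$ — up to sign bookkeeping. So $-b_3 = \sum_{i<j<k}\bigl(d_{i,j,k} + J_{ii}\det(J[j,k]) + J_{jj}\det(J[i,k]) + J_{kk}\det(J[i,j])\bigr)$. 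Since $b_3(h^*,\lambda^*)\neq 0$, at least one of these summands is nonzero at $(h^*,\lambda^*)$; either some $d_{p,q,r}(h^*,\lambda^*)\neq 0$, or some $J_{kk}(h^*,\lambda^*)\det(J(h^*,\lambda^*)[i,j])\neq 0$. In the first case, $d_{p,q,r}$ is zero or a sum of positive terms (Lemma \ref{lemma:positivePM}(ii)), so $d_{p,q,r}(h^*,\lambda^*) > 0$, and Lemma \ref{lemma:positivedeth}(ii) finishes. In the second case, $J_{kk}(h^*,\lambda^*)\neq 0$ and $\det(J(h^*,\lambda^*)[i,j])\neq 0$; the latter is zero or a sum of positive terms by Lemma \ref{lemma:positivePM}(i), so $\det(J(h^*,\lambda^*)[i,j]) > 0$, and Lemma \ref{lemma:positivedeth}(i) applies. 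Either way $\det(H_2(h^*,\lambda^*)) > 0$.

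The main obstacle I anticipate is purely bookkeeping: getting the signs and the indexing right in the expansion $\det(J[i,j,k]) = d_{i,j,k} + \cdots$, and checking that the case analysis from "$b_3(h^*,\lambda^*)\neq 0$" is exhaustive — i.e., that a nonzero value of $b_3$ genuinely forces a nonzero value of one of the building blocks $d_{p,q,r}$ or $\det(J[i,j])$ (this is immediate since $b_3$ is a polynomial combination of them, but one should state it cleanly). A minor subtlety is that $b_3$ only exists as written when $r \ge 3$; for $r = 2$ the hypothesis $b_3(h^*,\lambda^*)\neq 0$ is vacuous (there is no $b_3$), and for $r \le 1$ the hypothesis in (i) on $b_2$ is likewise vacuous, so both statements hold trivially in those ranks — I would note this explicitly at the start. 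No genuinely hard estimate is needed; everything rides on the "zero-or-positive-coefficients" structural lemmas already in hand.
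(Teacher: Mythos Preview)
Your proposal is correct and matches the paper's proof almost exactly: part (i) is identical, and part (ii) uses the same identity $\det(J[i,j,k]) = d_{i,j,k} + J_{ii}\det(J[j,k]) + J_{jj}\det(J[i,k]) + J_{kk}\det(J[i,j])$ together with the same two cases of Lemma~\ref{lemma:positivedeth}. The only cosmetic difference is that the paper first isolates a single triple $(p,q,r)$ with $\det(J(h^*,\lambda^*)[p,q,r])\neq 0$ and then case-splits on whether one of its three $2\times 2$ sub-minors is positive, whereas you expand the full sum for $b_3$ and case-split on which building block is nonzero; both routes land on the same two invocations of Lemma~\ref{lemma:positivedeth}.
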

\begin{proof}
(i) By Lemma \ref{prop:akhl}, we have
\begin{align*}
%\label{eq:b1andb2}
b_{2}(h,\lambda) = \sum_{1 \leq i<j \leq s} \operatorname{det} \left( J[i, j] \right).
\end{align*}
By Lemma \ref{lemma:positivePM} (i), for any $1 \leq i<j \leq s$, $\operatorname{det} \left( J[i, j] \right)$ is either a zero polynomial or a sum of terms with positive coefficients.
So, for any $h^* \in \mathbb{R}_{> 0}^s$ and for any $\lambda^* \in \mathbb{R}_{\geq 0}^t$, $b_{2}(h^*,\lambda^*)\geq 0$.
If $b_{2}(h^*,\lambda^*) \neq 0$, then there exist $p$ and $q$ ($1 \leq p<q \leq s$) such that
\begin{align*}
\operatorname{det} \left( J(h^{*},\lambda^*)[p, q] \right)> 0
\end{align*}
So, by Lemma \ref{lemma:jpp<0jqq<0}, we have 
\begin{align}\label{eq:1deth1>0}
J(h^*,\lambda^*)_{pp} <0 \text{ and }J(h^*,\lambda^*)_{qq} <0.
\end{align}
By Definition \ref{def:hurwitz} and by Lemma \ref{prop:akhl}, we have
\begin{align}\label{eq:2deth1>0}
\operatorname{det}(H_{1}(h,\lambda)) = b_{1}(h,\lambda) = -\sum_{1 \leq i \leq s} J_{ii}.
\end{align}
By Corollary \ref{coro:jkkis-}, we have
\begin{align}\label{eq:3deth1>0}
J(h^*,\lambda^*)_{ii} \leq 0, \text{ for any } i \in \{1,\ldots,s\}.
\end{align}
By \eqref{eq:1deth1>0}, \eqref{eq:2deth1>0}, and \eqref{eq:3deth1>0}, we have
\begin{align*}
\operatorname{det}(H_{1}(h^*,\lambda^*)) \geq -J(h^*,\lambda^*)_{pp} >0.
\end{align*}

(ii) For any $h^* \in \mathbb{R}_{> 0}^s$ and $\lambda^* \in \mathbb{R}_{\geq 0}^t$ such that 
\begin{align*}
b_{3}(h^*,\lambda^*)=-\sum_{1 \leq i<j<k \leq s} \operatorname{det} \left( J(h^*,\lambda^*)[i, j, k] \right) \neq 0,
\end{align*}
there exist $p$, $q$, and $r$ ($1 \leq p<q<r \leq s$) such that
\begin{align}\label{eq:detjpqrhl>0}
\operatorname{det} \left( J(h^*,\lambda^*)[p, q, r] \right) \neq 0.
\end{align}
By Lemma \ref{lemma:positivePM} (i), we have
\begin{align}\label{eq:3det}
\operatorname{det} \left( J(h^*,\lambda^*)[p, q] \right) \geq 0, \;\;\operatorname{det} \left( J(h^*,\lambda^*)[p, r] \right) \geq 0, \;\;\text{and } \operatorname{det} \left( J(h^*,\lambda^*)[q, r] \right) \geq 0.
\end{align}
If there exists one inequality in \eqref{eq:3det} to be a strict inequality, without loss of generality, we assume $\operatorname{det} \left( J(h^*,\lambda^*)[p, q] \right) > 0$, then by Lemma \ref{lemma:positivedeth} (i), we have
%\begin{align*}
$\operatorname{det}(H_{2}(h^*,\lambda^*))>0$.
%\end{align*}
Recall the definition of $d_{i,j,k}$ in \eqref{def:dijk}.
We have
\begin{align}\label{eq:dpqr=}
d_{p,q,r} =& J_{pq}J_{qr}J_{rp} + J_{pr}J_{qp}J_{rq}-2J_{pp}J_{qq}J_{rr} \notag \\
=& J_{pp}J_{qq}J_{rr} + J_{qp}J_{rq}J_{pr} + J_{rp}J_{pq}J_{qr} - J_{pp}J_{qr}J_{rq} - J_{pr}J_{qq}J_{rp} - J_{pq}J_{qp}J_{rr} \notag \\
& - J_{pp}(J_{qq}J_{rr}-J_{qr}J_{rq}) - J_{qq}(J_{pp}J_{rr}-J_{pr}J_{rp}) - J_{rr}(J_{pp}J_{qq}-J_{pq}J_{qp}) \notag \\
=& \operatorname{det} \left( J[p,q,r] \right) - J_{pp} \operatorname{det} \left( J[q,r] \right) - J_{qq} \operatorname{det} \left( J[p,r] \right) - J_{rr} \operatorname{det} \left( J[p,q] \right).
\end{align}
%Notice that $d_{p,q,r} = \operatorname{det} \left( J[p,q,r] \right) - J_{pp} \operatorname{det} \left( J[q,r] \right) - J_{qq} \operatorname{det} \left( J[p,r] \right) - J_{rr} \operatorname{det} \left( J[p,q] \right)$.
If all the three inequalities in \eqref{eq:3det} are equal to 0, then by \eqref{eq:detjpqrhl>0} and \eqref{eq:dpqr=}, $d_{p,q,r}(h^*,\lambda^*) \ne 0$.
By Lemma \ref{lemma:positivePM} (ii), $d_{p,q,r}(h^*,\lambda^*) \ge 0$.
So, $d_{p,q,r}(h^*,\lambda^*) > 0$.
Then, by Lemma \ref{lemma:positivedeth} (ii), we have $\operatorname{det}(H_{2}(h^*,\lambda^*))>0$.
\end{proof}

\begin{proof}[\bf{Proof of Theorem \ref{thm:rank<3}}]
By the definition of network (see \eqref{eq:network}), $\mathcal{N}$ is not a zero matrix. So, we have ${\rm rank}({\mathcal N}) \geq 1$.
If ${\rm rank}({\mathcal N}) = 1$, then by \eqref{def:reduceJac} and \eqref{eq:jacfkx}, for any $\kappa^* \in \R^m_{>0}$ and for any $x^* \in \R^m_{\geq0}$, the rank of ${\rm Jac}^{\text{red}}_f(\kappa^*,x^*)$ is no more than 1.
In this case, ${\rm Jac}^{\text{red}}_f(\kappa^*,x^*)$ does not have a pair of complex-conjugate eigenvalues.
So, by Definition \ref{def:kxHB}, the network $G$ does not admit a Hopf bifurcation. 

Let $J(h,\lambda)$ be the matrix corresponding to $G$ defined in \eqref{eq:defjhl}. 
Let $q(h,\lambda;z)$ be the polynomial defined in \eqref{def:qhlz}, and let $H_i(h,\lambda)$ be $i$-th Hurwitz matrix of $q(h,\lambda;z)$. 
If ${\rm rank}({\mathcal N})=2$, then we can rewrite $q(h,\lambda;z)$ as
\begin{align*}
q(h,\lambda;z) = z^{2} + b_{1}(h,\lambda) z + b_{2}(h,\lambda).
\end{align*}
By Lemma \ref{lemma:h1>0h2>0} (i), for any $h^* \in \mathbb{R}^s_{> 0}$ and for any $\lambda^* \in \mathbb{R}^s_{\geq 0}$, $\operatorname{det}(H_1(h^*,\lambda^*))>0$ if $b_{2}(h^*,\lambda^*) \neq 0$. Therefore, by Lemma \ref{lemma:preHB}, the network $G$ does not admit a Hopf bifurcation.
If ${\rm rank}({\mathcal N})=3$, then we can rewrite $q(h,\lambda;z)$ as
\begin{align*}
q(h,\lambda;z) = z^{3} + b_{1}(h,\lambda) z^{2} + b_{2}(h,\lambda) z + b_{3}(h,\lambda).
\end{align*}
By Lemma \ref{lemma:h1>0h2>0} (ii), for any $h^* \in \mathbb{R}^s_{> 0}$ and for any $\lambda^* \in \mathbb{R}^s_{\geq 0}$, $\operatorname{det}(H_2(h^*,\lambda^*))>0$ if $b_{3}(h^*,\lambda^*) \neq 0$. Therefore, by Lemma \ref{lemma:preHB}, the network $G$ does not admit a Hopf bifurcation.
To sum up, if ${\rm rank}({\mathcal N}) \leq 3$, then the network $G$ does not admit a Hopf bifurcation.
\end{proof}
% \subsection{\textcolor{red}{Existence of Hopf bifurcations}}
% In this subsection, we will focus on the existence of Hopf bifurcations on the zero-one networks. 
% Here we give a criterion to detect Hopf bifurcations using the transformed Jacobian matrix in section \ref{sec:trans}.
% \textcolor{red}{\begin{lemma}\label{prop:existHbwithhl}
% Let $G$ be a zero-one network with a stoichiometric matrix $\mathcal{N}$.
% Denote the corresponding system of ODEs by $\dot{x}=f(\kappa,x)$ as in \eqref{sys:f}. 
% Let $r={\rm rank}({\mathcal N})$.
% %Let $J(h,\lambda)$ be the matrix defined in \eqref{eq:defjhl}.
% Let $q(h,\lambda;z)$ be the polynomial defined in \eqref{def:qhlz}, and let $H_{1}(h,\lambda),\ldots,H_{r}(h,\lambda)$ be the Hurwitz matrices of $q(h,\lambda;z)$.
% Then the system has a simple Hopf bifurcation at $(h^*,\lambda^*) \in \R^s_{>0} \times \R^t_{\geq 0}$ with respect to some $h_k$ if and only if the following hold:
% \begin{enumerate}[(i)]
% \item $\operatorname{det}(H_{r-1}(h^*,\lambda^*))=0$ and $a_r(h^*,\lambda^*)>0$,
% \item $\operatorname{det}(H_{1}(h^*,\lambda^*))>0, \operatorname{det}(H_{2}(h^*,\lambda^*))>0,\ldots,\operatorname{det}(H_{r-2}(h^*,\lambda^*))>0$, and
% \item $\left.\frac{\partial\left(\operatorname{det} (H_{r-1}(h,\lambda))\right)}{\partial h_k}\right|_{h=h^*,\lambda=\lambda^*} \neq 0$.
% \end{enumerate}
% \end{lemma}}
\section{Discussion}\label{sec:dis}
%Understanding the structures of small networks will help us better comprehend crucial biological networks.
%For instance, oscillations can be inherited from small networks with the enlargements E1--E6 in \cite{inherit2022}.
%In practice, a common approach to finding an oscillation is to find a Hopf bifurcation.
%However, it is computationally challenging to find a Hopf bifurcation in a large-scale network, whereas doing so in a smaller network is comparatively simpler.
Recall that Question \ref{ques:small} has been studied for the bimolecular networks in \cite{biHB2022}, and in this paper, we study Question \ref{ques:small} for the zero-one networks. So, a natural direction is to study the question for more general networks. For instance, one can ask when the number of reactants is given, which small networks admit Hopf bifurcations/oscillations.   

%\section*{Acknowledgments}

%\bibliographystyle{siamplain}
%\bibliography{references}
%\newpage

% \bigskip

% \begin{center}
% {\large\bf SUPPLEMENTARY MATERIAL}
% \end{center}

% Table \ref{tab:sup} lists all files at the online repository:
% \url{https://github.com/ZhishuoCode/small-network}

%\textcolor{red}{Hao Xu}

% \begin{table}[h!]
% \centering
% \caption{Supporting Information Files}
% \label{tab:sup}
% {\scriptsize
% \begin{tabular}{||c c c||}
%  \hline
% \cellcolor[HTML]{FFCE93} Name &\cellcolor[HTML]{FFCE93} File Type & \cellcolor[HTML]{FFCE93}Results \\ [0.5ex]
%  \hline\hline
%  \texttt{Example.mw/.pdf} & \texttt{Maple/PDF} & Example \ref{ex:counter}/Example \ref{ex:counter2} \\
%   \texttt{Plot.mw/.pdf} & \texttt{Maple/PDF} & Example \ref{ex:evenodd}/Figure \ref{fig:main1} \\
 %\texttt{SM.pdf}     & \texttt{PDF}   & Descartes' rule and  Theorem \ref{thm:g2}\\
 %\texttt{witness2.mw/.pdf} & \texttt{Maple/PDF} &  Theorem \ref{thm:g1}\\
% \texttt{reverse4.mw/.pdf} & \texttt{Maple/PDF} & Lemma \ref{lm:g1} \\
 %\texttt{reverse3.mw/.pdf} & \texttt{Maple/PDF}& Lemma \ref{lm:g1} \\
 %\texttt{reverse2.mw/.pdf} & \texttt{Maple/PDF} & Lemma \ref{lm:g1} \\
 %\texttt{reverse1.mw/.pdf} & \texttt{Maple/PDF} & Lemma \ref{lm:g1} \\
 %\texttt{reverse5.mw} & \texttt{Maple} & Lemma \ref{lm:g1} \\
 %\texttt{rrc.mw} & \texttt{Maple} &  Theorem \ref{thm:g1}\\
%\texttt{witness1.mw/.pdf} & \texttt{Maple/PDF} &  Theorem \ref{thm:4-reactant}\\
%\texttt{irreversible.mw/.pdf} & \texttt{Maple/PDF} &  Lemma \ref{lm:4-reactant}\\
%  \hline
% \end{tabular}
% }
% \end{table}

\end{document}